\documentclass[12pt]{article}
\pdfoutput=1
\usepackage{jheppub}
\usepackage{verbatim}
\usepackage{graphics, color}
\usepackage{graphicx}
\usepackage{amsmath}
\usepackage{amssymb}
\usepackage{amsthm}
\usepackage{amsfonts}
\usepackage[utf8]{inputenc}

\newcommand{\be}{\begin{equation}}
\newcommand{\ee}{\end{equation}}
\newcommand{\bfig}{\begin{figure}\begin{center}}
\newcommand{\efig}{\end{center}\end{figure}}
\newcommand{\bi}{\begin{itemize}}
\newcommand{\ei}{\end{itemize}}

\newcommand{\lan}{\langle}
\newcommand{\ran}{\rangle}
\newcommand{\Tr}{\mathrm{Tr}}

\newcommand{\wt}{\widetilde}

\newcommand{\Ll}{\mathcal{L}}
\newcommand{\Hh}{\mathcal{H}}
\newcommand{\HA}{\mathcal{H}_A}
\newcommand{\HAb}{\mathcal{H}_{\ol{A}}}
\newcommand{\Ab}{\ol{A}}
\newcommand{\ab}{\ol{a}}
\newcommand{\Ha}{\mathcal{H}_{a}}
\newcommand{\Hab}{\mathcal{H}_{\ol{a}}}
\newcommand{\Hc}{\mathcal{H}_{code}}
\newcommand{\LH}{\Ll(\Hh)}
\newcommand{\LA}{\Ll_A}
\newcommand{\ol}{\overline}

\newcommand{\EA}{\mathcal{E}_A}
\newcommand{\EAb}{\mathcal{E}_{\Ab}}
\newcommand{\Pc}{P_{code}}
\newtheorem{mydef}{Definition}[section]
\newtheorem{thm}{Theorem}[section]
\newtheorem{prop}{Proposition}[section]
\begin{document}
\title{The Ryu-Takayanagi Formula from Quantum Error Correction}
\author[a]{Daniel Harlow}
\affiliation[a]{Center for the Fundamental Laws of Nature, Harvard University, Cambridge MA, 02138 USA}
\emailAdd{dharlow@physics.harvard.edu}
\abstract{I argue that a version of the quantum-corrected Ryu-Takayanagi formula holds in any quantum error-correcting code.  I present this result as a series of theorems of increasing generality, with the final statement expressed in the language of operator-algebra quantum error correction.  In AdS/CFT this gives a ``purely boundary'' interpretation of the formula.  I also extend a recent theorem, which established entanglement-wedge reconstruction in AdS/CFT, when interpreted as a subsystem code, to the more general, and I argue more physical, case of subalgebra codes.  For completeness, I include a self-contained presentation of the theory of von Neumann algebras on finite-dimensional Hilbert spaces, as well as the algebraic definition of entropy.  The results confirm a close relationship between bulk gauge transformations, edge-modes/soft-hair on black holes, and the Ryu-Takayanagi formula.  They also suggest a new perspective on the homology constraint, which basically is to get rid of it in a way that preserves the validity of the formula, but which removes any tension with the linearity of quantum mechanics.  Moreover they suggest a boundary interpretation of the ``bit threads'' recently introduced by Freedman and Headrick.}
\maketitle

\section{Introduction}
The Anti-de Sitter/Conformal Field Theory (AdS/CFT) correspondence has recently been reinterpreted in the language of quantum error correcting codes \cite{Almheiri:2014lwa,Mintun:2015qda,Pastawski:2015qua,Hayden:2016cfa,Freivogel:2016zsb}.  This language naturally implements several features of the correspondence which were previously somewhat mysterious from the CFT point of view:
\bi

\item\textbf{Radial Commutativity:} To leading order in the gravitational coupling $G$, a local operator in the center of a bulk time-slice should commute with all local operators at the boundary of that slice \cite{Polchinski:1999yd}.  But this seems to be in tension \cite{Almheiri:2014lwa} with the time-slice axiom of local quantum field theory \cite{Streater:1989vi,Haag:1992hx}.
  
\item\textbf{Subregion Duality:} Given a subregion $A$ of a boundary time-slice $\Sigma$, we are able to reconstruct any bulk operator $\phi(x)$ which is in the \textit{causal wedge of A}, denoted $\mathcal{C}_A$ and defined as the intersection of the bulk future and the bulk past of the boundary domain of dependence of $A$, as a CFT operator with support only on $A$ \cite{Hamilton:2006az,Morrison:2014jha,Bousso:2012sj,Czech:2012bh,Bousso:2012mh,Hubeny:2012wa}.  Moreover this reconstruction can be extended  \cite{Czech:2012bh,Wall:2012uf,Headrick:2014cta,Jafferis:2015del,Dong:2016eik} into the larger \textit{entanglement wedge of A}, denoted $\mathcal{E}_A$ and defined as the bulk domain of dependence of any bulk achronal surface $\Xi$ whose only boundaries are $A$ and the Hubeny/Rangamani/Takayanagi (HRT) surface $\gamma_A$ associated to $A$ \cite{Hubeny:2007xt}.  Subregion duality implies a remarkable redundancy in the CFT representation of bulk operators, which is illustrated in figure \ref{subregions}.
\bfig
\includegraphics[height=4.5cm]{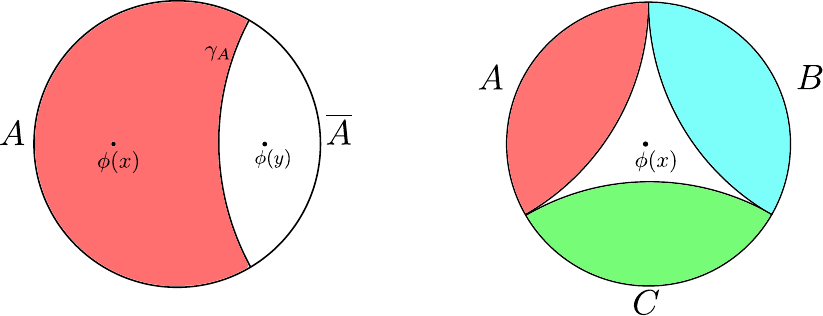}
\caption{Subregion duality in AdS/CFT.  In the left diagram I've shaded the intersection of the entanglement wedge $\EA$ of a boundary subregion $A$ with a bulk time-slice.  The operator $\phi(x)$ is in $\EA$, and thus has a representation in the CFT on $A$. The operator $\phi(y)$ is in $\mathcal{E}_{\ol{A}}$, and thus has a representation on $\ol{A}$.  In the right diagram, we have a situation where $\phi(x)$ has no representation on $A$, $B$, or $C$, but does have a representation on $AB$, $BC$, or $AC$.}\label{subregions}
\efig
\item\textbf{Ryu-Takayanagi Formula:} Given a CFT state $\rho$, we can define a boundary state $\rho_A$ on any boundary subregion $A$.  If $\rho$ is ``appropriate'' then the von Neumann entropy of $\rho_A$ is given by \cite{Ryu:2006bv}, \cite{Hubeny:2007xt,Lewkowycz:2013nqa,Barrella:2013wja,Faulkner:2013ana}
\be\label{flm}
S(\rho_A)=\Tr \left(\rho \LA \right)+S_{bulk}(\rho_{\EA}).
\ee
Here $\mathcal{L}_A$ denotes a particular local operator in the bulk integrated over $\gamma_A$: at leading order in Newton's constant $G$ we have $\LA=\frac{\mathrm{Area}(\gamma_A)}{4G}$, while at higher orders, both in $G$ but also in other couplings such as $\alpha'$, there are corrections to $\Ll_A$ involving various intrisic and extrinsic quantities integrated on $\gamma_A$ \cite{Wald:1993nt,Iyer:1994ys,Jacobson:1993vj,Solodukhin:2008dh,Hung:2011xb,Bhattacharyya:2013jma,Fursaev:2013fta,Dong:2013qoa,Camps:2013zua,Faulkner:2013ana,Miao:2014nxa}.  $S_{bulk}(\rho_{\EA})$ denotes the bulk von Neumann entropy in $\EA$.\footnote{Here I have been somewhat cavalier about how the surface $\gamma_A$ is to be chosen at higher orders in $G$.  This was worked out to first nontrivial order in \cite{Faulkner:2013ana}, and a conjecture for higher orders was given in \cite{Engelhardt:2014gca}.  In this paper I will focus on reproducing \eqref{flm} only to order $G^0$, except for some brief comments at the end. Most results should be generalizable in some form to higher orders using some version of the proposal of \cite{Engelhardt:2014gca}, see \cite{Dong:2016eik,donglew}.}  I will refer to the first term on the right hand side of \eqref{flm} as the ``area term'', and the second term as the ``bulk entropy term''.  I will also sometimes refer to $\LA$ as the ``area operator'', although this isn't strictly true.  One puzzling feature of \eqref{flm} is what precisely is meant by an ``appropriate'' state. Another is that the area term is linear in the state $\rho$, while the left hand side of \eqref{flm} is not: since the bulk entropy term is subleading in $G$ for states where geometric fluctuations are small, this has sometimes led to the suggestion that the RT formula violates the linearity of quantum mechanics \cite{Papadodimas:2015jra,Almheiri:2016blp}.  
\ei

In \cite{Almheiri:2014lwa} it was explained how the first two of these properties are naturally realized in quantum error correction: radial commutativity illustrates the fact that no particular boundary point is indispensible for a CFT representation of the bulk operator $\phi(x)$, and subregion duality illustrates the ability of the code to correct the operator $\phi(x)$ for the erasure of a region $\ol{A}$, provided that $x$ lies in $\EA$.  In \cite{Almheiri:2014lwa,Jafferis:2014lza,Jafferis:2015del} it was suggested that the RT formula might actually imply subregion duality in the entanglement wedge, in \cite{Pastawski:2015qua,Hayden:2016cfa} the RT formula and subregion duality were both confirmed in some tensor network models of holography, and in \cite{Dong:2016eik} the implication RT $\Rightarrow$ subregion duality was proven using techniques from quantum error correction, as well as the results of \cite{Jafferis:2015del}.  For all three properties, a key point is that they hold only on a \textit{code subspace} of states, which roughly speaking must be chosen to ensure that bulk effective field theory is a good approximation for the observables of interest throughout the subspace.  Restricting the validity of our three properties to this subspace is essential in explaining the paradoxical features of the correspondence mentioned above.    

So far the explanations of these properties and the relationships between them have been somewhat scattered. The goal of this paper is to tie them all together into a set of theorems which give a rather general picture of how quantum error correction realizes subregion duality and the RT formula.  I will first present a simple example that illustrates many of the results, and then gradually build up the machinery to deal with the most general case. 

As we proceed, it will become clear that von Neumann algebras are a language particularly suited for studying subregion duality and the RT formula.  The final results will thus be phrased in the language of the \textit{operator-algebra quantum error correction} of \cite{beny2007generalization,beny2007quantum}.  For the convenience of the reader, the discussion of von Neumann algebras will be completely self-contained, with proofs of the necessary theorems given in appendix \ref{vnapp}. The culmination of my analysis will be the following theorem: 
\begin{thm}\label{bigrthm}
Say that we have a (finite-dimensional) Hilbert space $\Hh=\HA\otimes \HAb$, a code subspace $\Hc\subseteq\Hh$, and a von Neumann algebra $M$ acting on $\Hc$.  Then the following three statements are equivalent:
\bi
\item There exists an operator $\LA\in Z_M\equiv M\cap M'$ such that, for any state $\wt{\rho}$ on $\Hc$, we have
\begin{align}\nonumber
S(\wt{\rho}_A)&=\Tr \left(\wt{\rho} \LA\right)+S(\wt{\rho},M)\\\nonumber
S(\wt{\rho}_{\Ab})&=\Tr \left(\wt{\rho} \LA\right)+S(\wt{\rho},M')
\end{align}

\item For any operators $\wt{O}\in M$, $\wt{O}'\in M'$, there exists operators $O_A$, $O_{\Ab}'$ on $\HA$, $\HAb$ respectively such that, for any state $|\wt{\psi}\ran\in \Hc$, we have
\begin{align}\nonumber
O_A|\wt{\psi}\ran&=\wt{O}|\wt{\psi}\ran\\\nonumber
O_A^\dagger|\wt{\psi}\ran&=\wt{O}^\dagger|\wt{\psi}\ran\\\nonumber
O_{\Ab}'|\wt{\psi}\ran&=\wt{O}'|\wt{\psi}\ran\\\nonumber
O_{\Ab}'^\dagger|\wt{\psi}\ran&=\wt{O}'^\dagger|\wt{\psi}\ran
\end{align}

\item For any two states $\wt{\rho}$, $\wt{\sigma}$ on $\Hc$, we have
\begin{align}\nonumber
S\left(\wt{\rho}_A|\wt{\sigma}_A\right)&=S(\wt{\rho}|\wt{\sigma},M)\\
S\left(\wt{\rho}_{\Ab}|\wt{\sigma}_{\Ab}\right)&=S(\wt{\rho}|\wt{\sigma},M')\nonumber
\end{align}
\ei
\end{thm}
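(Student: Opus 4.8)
The plan is to route all three statements through a single structural fact about how $\Hc$ sits inside $\HA\otimes\HAb$, treating that decomposition as a hub that each statement is separately equivalent to. By the finite-dimensional structure theorem for von Neumann algebras (appendix \ref{vnapp}), $M$ induces a center decomposition $\Hc\cong\bigoplus_\alpha \mathcal{H}^{L}_\alpha\otimes\mathcal{H}^{R}_\alpha$ with $M=\bigoplus_\alpha \Ll(\mathcal{H}^L_\alpha)\otimes I$, $M'=\bigoplus_\alpha I\otimes\Ll(\mathcal{H}^R_\alpha)$, and $Z_M$ the scalars on each block. First I would show that the reconstruction property (the second statement) is equivalent to the existence of factorizations $\HA\supseteq\bigoplus_\alpha \mathcal{H}^{A,1}_\alpha\otimes\mathcal{H}^{A,2}_\alpha$ and $\HAb\supseteq\bigoplus_\alpha \mathcal{H}^{\ol A,1}_\alpha\otimes\mathcal{H}^{\ol A,2}_\alpha$, together with fixed states $|\chi_\alpha\ran\in\mathcal{H}^{A,2}_\alpha\otimes\mathcal{H}^{\ol A,2}_\alpha$, under which every code vector is $|\wt\psi\ran=\bigoplus_\alpha |\psi_\alpha\ran\otimes|\chi_\alpha\ran$ with $|\psi_\alpha\ran\in\mathcal{H}^{A,1}_\alpha\otimes\mathcal{H}^{\ol A,1}_\alpha$ and $\mathcal{H}^L_\alpha=\mathcal{H}^{A,1}_\alpha$, $\mathcal{H}^R_\alpha=\mathcal{H}^{\ol A,1}_\alpha$. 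The forward direction is the standard correctability analysis of operator-algebra quantum error correction: reconstructability of $M$ on $A$ and of $M'$ on $\ol A$ forces these tensor factorizations, with the $\chi_\alpha$ the residual entanglement that $A$ and $\ol A$ share but neither side can alter. Conversely, given the decomposition one simply lets $O_A$ act as $\wt O$ on $\mathcal{H}^{A,1}_\alpha$ and trivially elsewhere, which is the easy half.

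Given the decomposition, the entropy formula and the relative-entropy equality follow by direct computation. Tracing out $\HAb$ annihilates the cross-blocks, since the $|\chi_\alpha\ran$ occupy mutually orthogonal subspaces, so $\wt\rho_A=\bigoplus_\alpha p_\alpha\, \rho^L_\alpha\otimes\chi^A_\alpha$, where $\rho^L_\alpha$ is the block-reduced state on $\mathcal{H}^{A,1}_\alpha$, $\chi^A_\alpha=\Tr_{\mathcal{H}^{\ol A,2}_\alpha}|\chi_\alpha\ran\lan\chi_\alpha|$, and $p_\alpha$ is the weight of block $\alpha$. Hence $S(\wt\rho_A)=-\sum_\alpha p_\alpha\log p_\alpha+\sum_\alpha p_\alpha S(\rho^L_\alpha)+\sum_\alpha p_\alpha S(\chi^A_\alpha)$, whose first two terms are precisely the algebraic entropy $S(\wt\rho,M)$. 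Defining $\LA\equiv\bigoplus_\alpha S(\chi^A_\alpha)\,I_\alpha$ — a central operator whose block eigenvalue is the entanglement entropy of $\chi_\alpha$ across the $A/\ol A$ cut — then gives $\Tr(\wt\rho\LA)=\sum_\alpha p_\alpha S(\chi^A_\alpha)$ and reproduces the first line; purity of $|\chi_\alpha\ran$ gives $S(\chi^A_\alpha)=S(\chi^{\ol A}_\alpha)$, so the same $\LA$ yields the $\ol A$ line. For the relative-entropy equality I would compute both sides the same way; the crucial point is that the $S(\chi_\alpha)$ contributions enter $\Tr(\wt\rho_A\log\wt\rho_A)$ and $\Tr(\wt\rho_A\log\wt\sigma_A)$ with opposite signs and cancel, leaving $S(\wt\rho|\wt\sigma,M)=\sum_\alpha p_\alpha\log(p_\alpha/q_\alpha)+\sum_\alpha p_\alpha S(\rho^L_\alpha|\sigma^L_\alpha)$. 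This cancellation is the structural reason the area term disappears from relative entropy, and it makes the third statement the cleanest of the three.

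To close the equivalence I would prove the converse — that either entropy statement implies the decomposition, hence reconstruction — which I expect to be the main obstacle, since one must recover operator equations from scalar identities that hold for all states. For the entropy formula this is a rigidity argument: restricting to states $\wt\rho$ supported in a single center-block $\alpha$, centrality of the postulated $\LA$ forces $S(\wt\rho_A)-S(\rho^L_\alpha)$ to equal the state-independent constant $L_\alpha$ for every such state, and the mirror $\ol A$ equation gives the dual statement. This pair of conditions is exactly a Knill--Laflamme-type correctability requirement, pinning down the block factorizations of $\HA$ and $\HAb$ and the fixed states $\chi_\alpha$. For the relative-entropy equality the natural tool is the theory of sufficiency: equality for all pairs of states is the saturation condition for monotonicity of relative entropy, which by Petz's theorem should produce a recovery channel from $\HA$ onto the $M$-data whose block form again reproduces the decomposition; here the within-block analysis only yields an affine rather than constant offset, so controlling it is more delicate than in the entropy-formula case. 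The subtle features throughout are that the argument must run two-sidedly in $A$ and $\ol A$ at once, that the cross-block assembly must be compatible with both reductions, and that the rigidity and recovery steps genuinely require the hypotheses for \emph{all} states rather than for any single reference state.
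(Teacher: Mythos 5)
Your overall architecture matches the paper's: the central decomposition of $M$, the structural form of the code subspace with fixed entangled states $|\chi_\alpha\ran$ (eq.~\eqref{ocr}, obtained from theorem~\ref{othm} plus complementary recovery), and the direct computations deriving the entropy formula and the relative-entropy equality from that decomposition, including the cancellation of the area term in the relative entropy, are all exactly as in the text. The problem is in how you close the cycle. Your ``rigidity argument'' for (entropy formula) $\Rightarrow$ (decomposition) asserts that a state-independent offset $S(\wt{\rho}_A)-S(\wt{\rho},M)=L_\alpha$ on each central block ``is exactly a Knill--Laflamme-type correctability requirement,'' but that is precisely the nontrivial content and you give no proof of it. The paper never proves this implication directly: it instead varies the entropy formula to get the first law $\Tr\bigl(\delta\wt{\rho}_A\wt{K}^\sigma_A\bigr)=\Tr\bigl(\delta\wt{\rho}\,(\wt{K}^\sigma_M+\LA)\bigr)$, notes both sides are linear in $\delta\wt{\rho}$ so the relation integrates to $\Pc\wt{K}^\sigma_A\Pc=\wt{K}^\sigma_M+\LA$, and thereby obtains the relative-entropy equality --- i.e.\ it proves $1\Rightarrow 3$, not $1\Rightarrow 2$.

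The second gap is your Petz/sufficiency route for (relative entropy) $\Rightarrow$ (reconstruction). The hypothesis $S(\wt{\rho}_A|\wt{\sigma}_A)=S(\wt{\rho}|\wt{\sigma},M)$ is not the saturation of the data-processing inequality for any single channel: before reconstruction is established, $M$ is merely an algebra on $\Hc$ and not a subalgebra of $\Ll(\HA)$, so neither side is a priori a coarse-graining of the other and no monotonicity inequality relates them. (Both are dominated by the full relative entropy on $\Hc$, but equality between two different coarse-grainings saturates neither bound, so Petz's theorem does not engage.) The paper's argument is far more elementary and uses only the case of \emph{vanishing} relative entropy: for hermitian $\wt{O}\in M$, the family $e^{i\lambda\wt{O}}|\wt{\psi}\ran$ has $\lambda$-independent expectation values for all of $M'$, hence zero relative entropy on $M'$, hence by hypothesis $\lambda$-independent reduced state on $\Ab$; this forces $\lan\wt{\psi}|[\Pc X_{\Ab}\Pc,\wt{O}]|\wt{\psi}\ran=0$ for every code state, which is condition (3) of theorem~\ref{othm} and hence gives reconstruction. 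You should replace both of your converse steps with the chain $1\Rightarrow 3\Rightarrow 2$, or else actually supply proofs of the rigidity and sufficiency claims you are leaning on.
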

Here $M'$ is the commutant of $M$ on $\Hc$, $S(\wt{\rho},M)$ denotes the algebraic entropy of the state $\wt{\rho}$ on $M$, and $S(\wt{\rho}|\wt{\sigma},M)$ denotes the relative entropy of $\wt{\rho}$ to $\wt{\sigma}$ on $M$. These concepts will be introduced in more detail as we go along.  In applying this theorem to AdS/CFT, we should think of $M$ as the algebra of bulk operators in $\EA$ and $M'$ as the algebra of bulk operators in $\mathcal{E}_{\Ab}$.  This theorem then shows the complete equivalence of the RT formula and subregion duality, and also shows their equivalence to the relative entropy relation of \cite{Jafferis:2015del}.\footnote{In \cite{Lashkari:2016idm}, it was shown that, in the special case of a spherical boundary region, the boundary relative entropy of a state to the vacuum is equivalent to the canonical energy in that region.  From the bulk point of view it is not obvious that this canonical energy is non-negative, so in \cite{Lashkari:2016idm} it was suggested that this is a constraint on low energy effective field theories.  The third condition of theorem \ref{bigrthm} suggests however that this constraint should be automatic for any state whose bulk relative entropy is non-negative: this should require only unitarity in the bulk effective field theory.  It would be very interesting to find a direct classical proof that canonical energy is positive starting from something like the dominant energy condition.}

\bfig
\includegraphics[height=5cm]{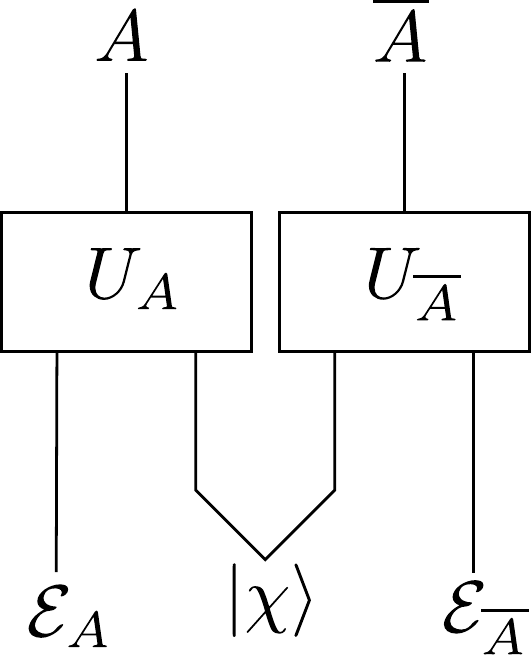}
\caption{A holographic encoding circuit.  $A$ is a CFT subregion, $\Ab$ is its complement, and $\EA$ and $\mathcal{E}_{\Ab}$ are the bulk degrees of freedom in their respective entanglement wedges.  We encode these bulk degrees of freedom into the CFT by acting with unitary transformations $U_A$ and $U_{\Ab}$ that mix $\EA$ and $\mathcal{E}_{\Ab}$ with complementary pieces of a fixed state $|\chi\ran$, which accounts for the remaining CFT degrees of freedom in $A$ and $\Ab$.  The entanglement in the state $|\chi\ran$ is the source of the area terms of the RT formulae for $S_A$ and $S_{\Ab}$, while the states that are fed into $\EA$ and $\mathcal{E}_{\Ab}$ give the bulk entropy terms.  We will see that nonvanishing entanglement in $|\chi\ran$, and thus a nonvanishing area term, is necessary for the robust functioning of the code.}\label{circuit}
\efig
On the way to proving this theorem, I will also introduce a ``completely boundary'' interpretation of the RT formula, which might be contrasted with the ``completely bulk'' explanation of \cite{Lewkowycz:2013nqa,Faulkner:2013ana}.  I sketch the basic idea in figure \ref{circuit} for the special case where the algebra $M$ is a factor, meaning that we take the code subspace to tensor-factorize into the degrees of freedom in $\EA$ and those in $\mathcal{E}_{\Ab}$.  This gives a circuit picture of how bulk information in the entanglement wedges is encoded into the CFT, with simple interpretations for both terms in the RT formula \eqref{flm}.  This picture is not quite satisfactory, in that the area operator it produces is a trivial operator proportional to the identity.  This is actually required by the properties of $\LA$ stated in theorem \ref{bigrthm}, since we saw there that $\LA$ must be in the center $Z_M$ of $M$, which is trivial if $M$ is a factor.  Fixing that problem is what leads us to consider general algebras.  Up to this subtlety, we will see that the setup of figure \ref{circuit} is not only sufficient for the RT formula and subregion duality to work, it is also necessary.  

The bulk of this paper is spent establishing theorem \ref{bigrthm} and the algebraic generalization of figure \ref{circuit}, but in a final discussion section we will see what these results imply for AdS/CFT. The basic points are: 
\bi
\item The observation that $\LA$ must be in the center of $M$ is consistent with the fact that the area operator is part of the ``edge modes''/``soft hair'' of \cite{Donnelly:2015hxa,Donnelly:2014fua,Hawking:2016msc}, and the nontriviality of this center is closely related to bulk gauge symmetry.  In \cite{Harlow:2015lma} these degrees of freedom were given a short-distance interpretation, which fits naturally into the quantum error correction picture I discuss here.   
\item Figure \ref{circuit} suggests a boundary interpretation of the ``bit threads'' that were recently used to give an alternative presentation of the RT formula \cite{Freedman:2016zud}.  This presentation is subtle in the multipartite case, but I give it a preliminary interpretation as well.
\item Figure \ref{circuit} also ensures that including the bulk entropy term in the RT formula removes any problems with linearity.  We will see that its algebraic version reproduces the nonlinear ``entropy of mixing'' studied in \cite{Papadodimas:2015jra,Almheiri:2016blp}, and that it also gives a new perspective on the ``homology constraint'' often included in the definition of the HRT surface $\gamma_A$ \cite{Headrick:2007km,Haehl:2014zoa}. In \cite{Almheiri:2016blp} it was recently argued that the homology constraint is sometimes inconsistent with the linearity of quantum mechanics, but we'll see that figure \ref{circuit} requires that we do not include this constraint in such situations: the bulk entropy term in \eqref{flm} is able to make up the difference without violating linearity.  In particular we will see that there is no obstruction to the the RT formula holding in superpositions of states with different classical geometries.
\item In general there is a close connection between changing the size of the code subspace and renormalization group flow in the bulk: including more UV degrees of freedom in the bulk has long been expected to shift entropy from the area term to the bulk entropy term, see eg \cite{Solodukhin:2011gn} for a review, and quantum error correction formalizes this operation as the inclusion of more states in the code subspace.  In figure \ref{circuit}, doing this moves degrees of freedom from $|\chi\ran$ to $\EA$ and $\EAb$, which indeed decreases the area term.
\ei
The structure of this paper is that I first present a simple example, then prove the main theorems, and then explain these points in more detail in a final discussion.  Readers who are willing to accept theorem \ref{bigrthm} and figure \ref{circuit} without proof may wish to proceed directly to this discussion, which should already be mostly comprehensible, although studying the example in section \ref{exsec} first wouldn't hurt.

\subsection{Notation}  My notation will at times be a bit heavy, so I will lay out a few rules here.  I will label physical systems by roman letters, eg $A$, $a$, $R$, etc, and their associated Hilbert spaces as $\HA$,$\Ha$, $\Hh_R$, etc.  Upper case letters will refer to subsystems of the full physical Hilbert space $\Hh$, while lowercase letters will refer to subsystems (or subsystems of subspaces) of the code subspace $\Hc$.  I will write $|A|$ for the dimensionality of $\HA$, $|R|$ for the dimensionality of $\Hh_R$, etc.  I will often indicate with subscripts which Hilbert space a state lives in or an operator acts on; for example $|\psi\ran_A$ is an element of $\HA$, and $O_A$ is a linear operator on $\HA$.  I will sometimes abuse notation by neglecting to write the identity factors which are technically needed to lift the action of an operator on a subfactor of a Hilbert space to an operator on the whole Hilbert space.  For example in stating theorem \ref{bigrthm} I did not distinguish between $O_A\otimes I_{\Ab}$ and $O_A$.  In any particular equation it should be straightforward to supply the identity factors as needed to ensure that all operators act on the correct spaces.  I will use the ``tilde'' symbol on operators which are naturally defined to act within the code subspace $\Hc$, although I have had to make arbitrary choices in a few places where it isn't so clear what is ``natural''.  Finally, whenever I say an operator ``acts within a subspace'', I always mean that both the operator and its hermitian conjugate act within the subspace.

\section{An example}\label{exsec}
I'll begin with a simple example that illustrates many of the ideas of this paper: the three-qutrit code of \cite{Cleve:1999qg}.  This code was first used as a model of holography in \cite{Almheiri:2014lwa}, and despite its simplicity, it captures many features of quantum gravity. Indeed it has analogues of effective field theory, black holes, radial commutativity, subregion duality, and the RT formula!

The basic idea of quantum error correction is to protect a quantum state by encoding it into a \textit{code subspace} of a larger Hilbert space. The three-qutrit code is an encoding of a single ``logical'' qutrit into the Hilbert space of three ``physical'' qutrits, with the code subspace $\Hh_{code}$ carrying the logical qutrit spanned by the basis
\begin{align}\nonumber
|\wt{0}\ran&\equiv \frac{1}{\sqrt{3}}\left(|000\ran+|111\ran+|222\ran\right)\\\nonumber
|\wt{1}\ran&\equiv \frac{1}{\sqrt{3}}\left(|012\ran+|120\ran+|201\ran\right)\\\nonumber
|\wt{2}\ran&\equiv \frac{1}{\sqrt{3}}\left(|021\ran+|102\ran+|210\ran\right).\label{3qutrit}
\end{align}
This subspace has the property that there exists a unitary $U_{12}$, supported only on the first two qutrits, which obeys
\be\label{3decode}
U_{12}^\dagger|\wt{i}\ran=|i\ran_1 |\chi\ran_{23},
\ee
with 
\be
|\chi\ran\equiv \frac{1}{\sqrt{3}}\left(|00\ran+|11\ran+|22\ran\right).
\ee
This unitary is easy to find, and is described explicitly in \cite{Almheiri:2014lwa}.  Its existence enables this code to protect the state of the logical qutrit against the erasure of the third physical qutrit.  Indeed say that I wish to send you the single-qutrit state
\be
|\psi\ran=\sum_{i=0}^2 C_i |i\ran.
\ee
If I simply send it to you using a single qutrit, it could easily be corrupted.  But if I instead send you the three-qutrit state
\be
|\wt{\psi}\ran=\sum_{i=0}^2 C_i |\wt{i}\ran,
\ee
then even if the third qutrit is lost, you can use your handy quantum computer to apply $U_{12}^\dagger$ to the two qutrits you do receive, which allows you to recover the state on the first qutrit:
\be
U_{12}^\dagger|\wt{\psi}\ran=|\psi\ran_1|\chi\ran_{23}.
\ee
Moreover the symmetry between the qutrits in the definition of $\Hc$ ensures that unitaries $U_{13}$ and $U_{23}$ will also exist, which means that the state $|\psi\ran$ can be recovered on any two of the qutrits.  

We can also phrase this correctability of single-qutrit erasures in terms of operators. Say that $O$ is a linear operator on the single-qutrit Hilbert space.  We can easily find a three-qutrit operator $\wt{O}$ that acts within $\Hc$ with the same matrix elements as $O$, but if we extend this operator arbitrarily on the orthogonal complement $\Hc^\perp$, then it will in general define an operator with support on all three physical qutrits. Using $U_{12}$ however, we can define an operator
\be
O_{12}\equiv U_{12} O_1 U_{12}^\dagger
\ee   
that acts within $\Hc$ in the same way as $\wt{O}$ but has support only on the first two qutrits.  Again by symmetry we can also define an $O_{13}$ and $O_{23}$, so any logical operator on the code subspace can be represented as an operator with trivial support on any one of the physical qutrits.  

Now say that we have an arbitrary mixed state $\wt{\rho}$ on $\Hc$, which is the encoding of a ``logical'' mixed state $\rho$.  From eq. \eqref{3decode}, we see that 
\be
\wt{\rho}=U_{12}\left(\rho_1 \otimes |\chi\ran\lan\chi|_{23}\right) U_{12}^\dagger,
\ee
so defining $\wt{\rho}_3\equiv \Tr_{12}\wt{\rho}$ and $\wt{\rho}_{12}\equiv \Tr_3 \wt{\rho}$, we have the von Neumann entropies
\begin{align}\nonumber
S(\wt{\rho}_3)&=\log 3\\
S(\wt{\rho}_{12})&=\log 3+S(\wt{\rho}).\label{3ent}
\end{align}
Once again, the symmetry ensures that analogous results hold for the entropies on other subsets of the qutrits.

We can interpret this code as a model of AdS/CFT.  The three physical qutrits are analogous to the local CFT degrees of freedom, and the code subspace $\Hc$ is analogous to the subspace where only effective field theory degrees of freedom are excited in the bulk.  This ``bulk effective field theory'' has only one spatial point, at which we have a single qutrit.  We can illustrate this using the right diagram of figure \ref{subregions}, where now $A$, $B$, and $C$ denote the three physical qutrits and $x$ denotes our bulk point.  The orthogonal complement $\Hc^\perp$  corresponds to the microstates of a black hole which has swallowed our point.  Let's now see how this realizes the properties of AdS/CFT discussed in the introduction:
\bi
\item \textbf{Radial Commutativity:}  We'd like to show that any ``bulk local operator'', meaning any operator $\wt{O}$ that acts within $\Hc$, commutes with all ``local operators at the boundary'', meaning it commutes with any operator that acts on only one physical qutrit. But $O_{12}$, $O_{13}$, and $O_{23}$ each manifestly commute with boundary local operators on the third, second, or first qutrits respectively, and since they all act identically to $\wt{O}$ within the code subspace, it must be that within the code subspace $\wt{O}$ commutes with all boundary local operators. More precisely, if $X$ is an operator on a single physical qutrit, and $|\wt{\psi}\ran$,$|\wt{\phi}\ran \in \Hc$, then $\lan \wt{\psi}|[\wt{O},X]|\wt{\phi}\ran=0$.
\item \textbf{Subregion Duality:} According to figure \ref{subregions}, we should think of $x$ as being in the entanglement wedge of any two of the boundary qutrits.  And indeed we see that any operator $\wt{O}$ can be represented on any two of the qutrits using $O_{12}$, $O_{13}$, or $O_{23}$.
\item \textbf{Ryu-Takayanagi Formula:} We have already computed the entropies \eqref{3ent}. If we define an ``area operator'' $\Ll_{12}=\Ll_{3}\equiv \log 3$, then apparently the RT formula \eqref{flm} holds for any state $\wt{\rho}$ on the code subspace.  This ``area term'' reflects the nontrivial entanglement in the state $|\chi\ran$, while the ``bulk entropy term'' takes into account the possibility of the encoded qutrit being in a mixed state.  The area term is essential for the functioning of the code, since if $|\chi\ran$ were a product state, from \eqref{3decode} we see that the third qutrit would be extemporaneous, and there would be no way for both $U_{23}$ and $U_{13}$ to exist (one of them could exist if the first or second qutrit could access the state by itself).  
\ei 
The three-qutrit code is thus able to capture a considerable amount of the physics of AdS/CFT.  

In fact this is more than an analogy, AdS/CFT itself can be recast in similar language.  To do this, we need to develop a general theory about when the analogue of $U_{12}$ exists and what its consequences are.  In the next three sections we will extend the basic features of the three-qutrit code via a set of theorems of increasing generality: purists may wish to skip directly to section \ref{opqecsec}, since the results obtained there contain the results of sections \ref{csec}, \ref{subsec} as special cases.

\section{Conventional quantum erasure correction}\label{csec}
The conventional version of quantum error correction is based on generalizing eq. \eqref{3decode}: we ask for the ability to recover an arbitrary state in the code subspace.  In general there are a variety of errors which can be considered, but in this paper I will study only \textit{erasures}, which are defined as losing access to a known subset of the physical degrees of freedom.  The three qutrit code was able to correct single-qutrit erasures.  There is a standard set of conditions which characterize whether or not a code can correct for any particular erasure \cite{Schumacher:1996dy,Grassl:1996eh}.  These can be gathered together into a theorem, which I'll now describe and prove.
\subsection{A theorem}
\begin{thm}\label{cthm}
Say that $\Hh$ is a finite-dimensional Hilbert space, with a tensor product structure $\Hh=\HA\otimes\HAb$, and say that $\Hc$ is a subspace in $\Hh$.  Moreover say that $|\wt{i}\ran$ is some orthonormal basis for $\Hc$, and that $|\phi\ran\equiv \frac{1}{\sqrt{|R|}} \sum_i |i\ran_R|\wt{i}\ran_{A\ol{A}}$, where $|i\ran_R$ denotes an orthonormal basis for an auxiliary system $R$ whose dimensionality $|R|$ is equivalent to that of $\Hc$.  Then the following statements are equivalent:
\bi
\item[(1)] $|R|\leq |A|$, and if we decompose $\HA=(\Hh_{A_1}\otimes\Hh_{A_2})\oplus \Hh_{A_3}$, with $|A_1|=|R|$ and $|A_3|<|R|$, then there exists a unitary transformation $U_A$ on $\HA$ and a state $|\chi\ran_{A_2\ol{A}}\in \Hh_{A_2\ol{A}}$ such that
\be
|\wt{i}\ran=U_A\left(|i\ran_{A_1}\otimes |\chi\ran_{A_2\ol{A}}\right),
\ee
where $|i\ran_{A_1}$ is an orthonormal basis for $\Hh_{A_1}$.

\item[(2)] For any operator $\wt{O}$ acting within $\Hc$, there exists an operator $O_A$ on $\HA$ such that, for any state $|\wt{\psi}\ran\in \Hc$, we have
\begin{align}\nonumber
O_A|\wt{\psi}\ran&=\wt{O}|\wt{\psi}\ran\\
O_A^\dagger|\wt{\psi}\ran&=\wt{O}^\dagger|\wt{\psi}\ran.
\end{align}

\item[(3)] For any operator $X_{\ol{A}}$ on $\HAb$, we have
\be
P_{code}X_{\ol{A}}P_{code}\propto P_{code}.
\ee
Here $P_{code}$ denotes the projection onto $\Hc$. 
\item[(4)] In the state $|\phi\ran$, we have
\be
\rho_{R\Ab}(\phi)=\rho_R(\phi)\otimes \rho_{\Ab}(\phi).
\ee

\ei
\end{thm}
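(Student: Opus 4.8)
The four conditions are the standard characterizations of correctability of the erasure of $\ol{A}$, and the plan is to prove their equivalence by establishing the cycle $(1)\Rightarrow(2)\Rightarrow(3)\Rightarrow(4)\Rightarrow(1)$. The first three implications are essentially computational, while the return arrow $(4)\Rightarrow(1)$ carries the real content.

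For $(1)\Rightarrow(2)$ I would build the reconstruction explicitly, mimicking the three-qutrit example. Given $\wt{O}$ acting within $\Hc$ with matrix elements $O_{ij}=\lan\wt{i}|\wt{O}|\wt{j}\ran$, set $O_{A_1}=\sum_{ij}O_{ij}|i\ran_{A_1}\lan j|$ and $O_A=U_A\,(O_{A_1}\otimes I_{A_2})\,U_A^\dagger$, extended by zero on $\Hh_{A_3}$. Since $|\wt{i}\ran=U_A(|i\ran_{A_1}\otimes|\chi\ran_{A_2\ol{A}})$ carries no $\Hh_{A_3}$ component, a one-line check gives $O_A|\wt{\psi}\ran=\wt{O}|\wt{\psi}\ran$ for all code states; the same computation with $O_{A_1}^\dagger$, whose entries are those of $\wt{O}^\dagger$, gives the conjugate condition.

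For $(2)\Rightarrow(3)$, take any $X_{\ol{A}}$ and let $\wt{X}_{ij}\equiv\lan\wt{i}|X_{\ol{A}}|\wt{j}\ran$ be its compression to $\Hc$. Let $\wt{O}$ be an arbitrary operator acting within $\Hc$, with matrix $O$ and representative $O_A$ supplied by (2). Since $O_A$ and $X_{\ol{A}}$ act on complementary factors they commute, so $\lan\wt{i}|[O_A,X_{\ol{A}}]|\wt{j}\ran=0$; using $O_A|\wt{j}\ran=\wt{O}|\wt{j}\ran$ and the full-space identity $\lan\wt{i}|O_A=\sum_k O_{ik}\lan\wt{k}|$ (from $O_A^\dagger|\wt{i}\ran=\wt{O}^\dagger|\wt{i}\ran$), this collapses to $(O\wt{X})_{ij}=(\wt{X}O)_{ij}$. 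As $\wt{O}$ ranges over all operators acting within $\Hc$, $O$ ranges over all matrices, so Schur's lemma forces $\wt{X}\propto I_{\Hc}$, i.e. $P_{code}X_{\ol{A}}P_{code}\propto P_{code}$. For $(3)\Rightarrow(4)$ I would simply compute: writing $\sigma^{ij}_{\ol{A}}\equiv\Tr_A|\wt{i}\ran\lan\wt{j}|$, one checks that its matrix element $\lan a|\sigma^{ij}_{\ol{A}}|b\ran$ equals $\lan\wt{j}|(|b\ran_{\ol{A}}\lan a|_{\ol{A}})|\wt{i}\ran$, which by (3) is $c_{ab}\delta_{ij}$. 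Hence $\sigma^{ij}_{\ol{A}}=\delta_{ij}\,\rho_{\ol{A}}$, and substituting into $\rho_{R\ol{A}}=\tfrac{1}{|R|}\sum_{ij}|i\ran_R\lan j|\otimes\sigma^{ij}_{\ol{A}}$ collapses the sum to $\tfrac{I_R}{|R|}\otimes\rho_{\ol{A}}=\rho_R\otimes\rho_{\ol{A}}$.

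The crux is $(4)\Rightarrow(1)$. Here I would use that $|\phi\ran$ is a purification of $\rho_{R\ol{A}}$ with purifying system $A$, together with the fact (automatic from orthonormality of the $|\wt{i}\ran$) that $\rho_R=I_R/|R|$ is maximally mixed; this gives $|R|\le|A|$ from $S_A=S_{R\ol{A}}\ge S_R=\log|R|$, where $S_{R\ol A}=S_R+S_{\ol A}$ by (4). Because (4) makes $\rho_{R\ol{A}}$ a product, it admits the product purification $|\Phi_R\ran_{RA_1'}\otimes|\Phi_{\ol{A}}\ran_{\ol{A}A_2'}$, where $A_1'$ (dimension $|R|$) purifies $\rho_R$ and $A_2'$ (dimension $\mathrm{rank}\,\rho_{\ol{A}}$) purifies $\rho_{\ol{A}}$. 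Invoking the standard lemma that any two purifications of the same state are related by an isometry on the purifying factor, I obtain $W:A_1'\otimes A_2'\to\HA$ with $|\phi\ran=(I_{R\ol{A}}\otimes W)(|\Phi_R\ran\otimes|\Phi_{\ol{A}}\ran)$; projecting onto $\lan i|_R$ and choosing $|\Phi_R\ran$ to be the standard maximally entangled state then yields $|\wt{i}\ran=W(|i\ran_{A_1'}\otimes|\Phi_{\ol{A}}\ran_{A_2'\ol{A}})$. The remaining work is bookkeeping: by the division algorithm write $|A|=|R|\,|A_2|+|A_3|$ with $|A_3|<|R|$, identify $\Hh_{A_1}=A_1'$, embed $A_2'$ into a factor $\Hh_{A_2}$ of dimension $|A_2|\ge\mathrm{rank}\,\rho_{\ol{A}}$ (the inequality following from $|A|\ge|R|\,\mathrm{rank}\,\rho_{\ol{A}}$), extend $W$ to a unitary $U_A$ on $\HA$, and read off $|\chi\ran_{A_2\ol{A}}$ as the image of $|\Phi_{\ol{A}}\ran$. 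I expect this dimension-matching step to be the main obstacle: extracting the ``remainder'' factor $\Hh_{A_3}$ with the sharp bound $|A_3|<|R|$ from the isometry extension requires care, whereas the purification-uniqueness lemma itself can be cited or dispatched in a line via the Schmidt decomposition.
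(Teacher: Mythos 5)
Your proof is correct and follows essentially the same route as the paper's: the same cycle $(1)\Rightarrow(2)\Rightarrow(3)\Rightarrow(4)\Rightarrow(1)$, with the explicit $U_A$-conjugated reconstruction, Schur's lemma for $(2)\Rightarrow(3)$, and the product-purification-plus-long-division argument for $(4)\Rightarrow(1)$. The only differences are cosmetic (contrapositive vs.\ direct phrasing of the Schur step, matrix elements of $\Tr_A|\wt{i}\ran\lan\wt{j}|$ vs.\ vanishing connected correlators, and an entropy rather than rank argument for $|R|\leq|A|$), and the dimension bookkeeping you flag as the main obstacle is dispatched exactly as in the paper via $|R|\,\mathrm{rank}\,\rho_{\Ab}\leq|A|=|R||A_2|+|A_3|$ with $|A_3|<|R|$.
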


Condition (1) is the statement that we can recover the full state of the code subspace on $A_1$ by applying $U_{A}^\dagger$, while condition (2) says that any logical operator on the code subspace can be represented by an operator on $A$.  Condition (3) says that measuring any operator on the erased subsystem cannot disturb the encoded information, while condition (4) says that there is no correlation between the operators on the reference system $R$ and operators on the erased subsystem $\ol{A}$.  Each of these conditions is quite plausibly necessary for the correctability of the erasure of $\Ab$.  Their equivalence can be proven as follows:
\begin{proof}

$(1)\Rightarrow (2)$: Defining $O_A\equiv U_A O_{A_1} U_A^\dagger$, the claimed properties are immediate.  Here $O_{A_1}$ is an operator on $A_1$ that acts with the same matrix elements as $\wt{O}$ does on the code subspace.

$(2)\Rightarrow(3)$:  Say that there were an $X_{\Ab}$ such that $P_{code}X_{\Ab}P_{code}$ was not proportional to $P_{code}$.  By Schur's lemma, there then must be an operator $\wt{O}$ on $\Hc$ and a state $|\wt{\psi}\ran\in \Hc$ such that $\lan\wt{\psi}|[P_{code}X_{\Ab}P_{code},\wt{O}]|\wt{\psi}\ran=\lan\wt{\psi}|[X_{\Ab},\wt{O}]|\wt{\psi}\ran\neq 0$.  But clearly this $\wt{O}$ cannot have a representation $O_A$ on $\HA$, since this would automatically commute with $X_{\Ab}$.  Therefore no such $X_{\Ab}$ can exist.  

$(3)\Rightarrow(4)$: Consider an arbitary operator $O_R$ on $\mathcal{H}_R$ and an arbitrary operator $X_{\Ab}$ on $\HAb$.  By (3), we must have $\Pc X_{\Ab} \Pc=\lan\phi|X_{\Ab}|\phi\ran \Pc$.  But this implies that
\begin{align}\nonumber
\lan\phi|X_{\Ab}O_R|\phi\ran&=\lan\phi|O_R \Pc X_{\Ab}\Pc |\phi\ran\\
&=\lan\phi|X_{\Ab}|\phi\ran\lan\phi|O_R|\phi\ran.
\end{align}
If $|\phi\ran$ has no nonvanishing connected correlation function for any operators $O_R$, $X_{\Ab}$, then $\rho_{R\Ab}(\phi)=\rho_{R}(\phi)\otimes \rho_{\Ab}(\phi)$. 

$(4) \Rightarrow (1)$: First note that $|\phi\ran$ is a purification of $\rho_{R\Ab}=\rho_{R}\otimes \rho_{\Ab}$. Such a purification is only possible if $|R|$ times the rank of $\rho_{\Ab}$ is less than or equal to $|A|$,\footnote{This statement follows immediately from the Schmidt decomposition, as do several more of the implications in this proof.  The Schmidt decomposition says that for any bipartite pure state $|\psi\ran_{A\Ab}$, there are sets of orthonormal states $|n\ran_A$, $|n\ran_{\Ab}$, such that $|\psi\ran_{A\Ab}=\sum_n \sqrt{p_n} |n\ran_A|n\ran_{\Ab}$, with $p_n\geq 0$.  These orthonormal states are eigenstates of the density matrices on $\rho_A$ and $\rho_{\Ab}$, which have equal nonzero eigenvalues given by the positive $p_n$'s.}  so indeed $|R|\leq |A|$.  Long division of $|A|$ by $|R|$ gives $|A_2|$ and $|A_3|$ such that we can decompose $\HA$ as in (1).  Since $|A_3|\leq|R|-1$, we see that the rank of $\rho_{\Ab}$ can be at most $|A_2|$.  Therefore another purification of $\rho_{R\Ab}$ is given by
\be
|\phi'\ran=\left(\frac{1}{\sqrt{|R|}}\sum_i|i\ran_R|i\ran_{A_1}\right)\otimes |\chi\ran_{A_2\ol{A}},
\ee
where $|\chi\ran_{A_2\Ab}$ is an arbitrary purification of $\rho_{\Ab}$ on $A_2$.  But any two purifications of the same density matrix onto the same additional system differ only by a unitary transformation on that system, so we must have $|\phi\ran=U_A|\phi'\ran$ for some $U_A$ on $A$.  This then implies (1).

\end{proof} 
This theorem gives several useful conditions to diagnose whether or not the erasure of $\Ab$ is correctable in the conventional sense of complete state recovery.  One thing it does not fully characterize however is the full set of erasures that can be corrected by a given code subspace; we just need to apply the theorem separately for each erasure and hope for the best.  For example the three qutrit code could correct for any single-qutrit erasure, but that isn't obvious from a particular decomposition into $A$ and $\Ab$.  We saw in the previous section however that this robustness of the code was a consequence of the nonzero entanglement in the state $|\chi\ran_{23}$.  The same is true here: if $|\chi\ran_{A_2 \Ab}$ is a product state, then we can dispense with $\Ab$ entirely.  It is only when $|\chi\ran$ is entangled that we can have a situation where a subsystem of $A$ together with $\Ab$ might be able to access encoded information which that subsystem by itself cannot.  

\subsection{A Ryu-Takayanagi formula}
We can see immediately from condition (1) of theorem \ref{cthm} that, if the erasure of $\Ab$ is correctable, then for any mixed state $\wt{\rho}$ on the code subspace we have
\begin{align}
\wt{\rho}&=U_A\Big(\rho_{A_1}\otimes |\chi\ran\lan\chi|_{A_2\Ab}\Big)U_A^\dagger\\\label{convra}
\wt{\rho}_A&\equiv \Tr_{\Ab}\wt{\rho}=U_A\Big(\rho_{A_1}\otimes \Tr_{\Ab}\left(|\chi\ran\lan \chi|\right)\Big)U_A^\dagger\\
\wt{\rho}_{\Ab}&\equiv \Tr_{A}\wt{\rho}=\Tr_{A_2}\left(|\chi\ran\lan \chi|\right)
\end{align}
Here $\rho_{A_1}$ is an operator on $\mathcal{H}_{A_1}$ with the same matrix elements as $\wt{\rho}$ on $\Hc$.  Defining $\chi_{A_2}\equiv \Tr_{\Ab}|\chi\ran\lan\chi|$ and $\chi_{\Ab}\equiv \Tr_{A_2}|\chi\ran\lan\chi|$, we see that
\begin{align}\label{convrt1}
S\left(\wt{\rho}_A\right)&=S\left(\chi_{A_2}\right)+S(\wt{\rho})\\\label{convrt2}
S\left(\wt{\rho}_{\Ab}\right)&=S\left(\chi_{A_2}\right).
\end{align}
If we define an ``area operator''
\be
\Ll_A\equiv S\left(\chi_{A_2}\right) I_{code},
\ee 
then eqs. \eqref{convrt1}, \eqref{convrt2} are reminiscent of the RT formula eq. \eqref{flm}.  The analogy is not perfect, as we will discuss momentarily, but notice that the area term arises from the nontrivial entanglement in $|\chi\ran$, which we just saw is necessary for the robustness of the code.

Condition (1) also has interesting consequences for the modular Hamiltonians $\wt{K}^\rho\equiv -\log \wt{\rho}$, $\wt{K}^\rho_A\equiv -\log \wt{\rho}_A$, and $\wt{K}^\rho_{\Ab}\equiv -\log \wt{\rho}_{\Ab}$.  Applying the identity $e^X\otimes e^Y=e^{X\otimes I+I\otimes Y}$ to eq. \eqref{convra}, we see that 
\be
\wt{K}^\rho_A=U_A\left(K_{A_1}^\rho\otimes I_{A_2}-I_{A_1}\otimes \log \chi_{A_2}\right)U_A^\dagger.
\ee
Using this together with the code subspace projection
\be
\Pc \equiv \sum_i|\wt{i}\ran\lan\wt{i}|,
\ee
we see that
\be\label{cjlms1}
\Pc\wt{K}^\rho_A \Pc=\wt{K}^\rho+\Ll_A.
\ee
Similarly we can show that
\be\label{cjlms2}
\Pc\wt{K}^\rho_{\Ab}\Pc=\Ll_A.
\ee
These expressions are analagous to the main result of \cite{Jafferis:2015del}, which said that the boundary modular Hamiltonian of a subregion $A$ is equal to the bulk modular Hamiltonian in $\EA$ plus the area operator $\Ll_A$.  This was originally derived directly from the RT formula \cite{Jafferis:2015del,Dong:2016eik}, but we see here that it is also a direct consequence of correctability. 
 
\subsection{Some problems}\label{probsec}
In the previous section we found that ``RT-like'' formulae \eqref{convrt1}, \eqref{convrt2} hold for any conventional quantum erasure-correcting code. But the bulk entropy term did not appear symmetrically in these results: all of the ``bulk entropy'' $S(\wt{\rho})$ appeared in $S(\wt{\rho}_A)$, while none appeared in $S(\wt{\rho}_{\Ab})$.  This is a consequence of insisting that we can recover the entire state on $A$: this was ok when the bulk only had one point, as in the example of section \ref{exsec}, but it will obviously not be true in more realistic examples of holography where the entanglement wedge of $\Ab$ is nontrivial. 

A related problem with this formalism was identified in \cite{Almheiri:2014lwa}: consider the situation of the left diagram in figure \ref{subregions}.  We might want to view the operator $\phi(x)$ as an operator on the code subspace, which can be reconstructed on $A$ as in condition (2). But in the ground state $|0\ran$ this operator has nonzero correlation with the operator $\phi(y)$, which we should be able to reconstruct on $\Ab$.  This contradicts condition (4), which would imply that there can be no correlation between operators on the code subspace and operators $X_{\Ab}$ on the erased region $\Ab$.  

Both of these issues tell us that conventional quantum erasure correction, as characterized by theorem \ref{cthm}, needs to be generalized to simultaneously allow some information to be recovered on $A$ and other information to be recovered on $\Ab$.  We can realize this by a generalization of quantum erasure correction which I will now describe.

\section{Subsystem quantum erasure correction}\label{subsec}
A generalization of quantum error correction that allows for the physical degrees of freedom in $A$ to access only partial information about the encoded state has existed in the coding literature for some time \cite{kribs2005unified,kribs2005operator,nielsen2007algebraic}.  It was originally called ``operator quantum error correction'', but since this term is unfortunately similar to the more general ``operator-algebra quantum error correction'' I will present in the next section, I will instead refer to the framework of \cite{kribs2005unified,kribs2005operator,nielsen2007algebraic} as \textit{subsystem quantum error correction}.  The basic idea is to consider a code subspace which factorizes as $\Hc=\Ha\otimes \Hab$, and then only ask for recovery of the state of $\Ha$.  For erasure errors, the results of \cite{kribs2005unified,kribs2005operator,nielsen2007algebraic} can be combined into a theorem analogous to theorem \ref{cthm} for conventional codes.     

\subsection{A theorem}
\begin{thm}\label{sthm}
Say that $\Hh$ is a finite-dimensional Hilbert space, with a tensor product structure $\Hh=\HA\otimes\HAb$, and say that $\Hc$ is a subspace of $\Hh$ which factorizes as $\Hc=\Ha\otimes\Hab$.  Moreover say that $|\wt{i}\ran$ is some orthonormal basis for $\Ha$, that $|\wt{j}\ran$ is some orthonormal basis for $\Hab$, and that $|\phi\ran\equiv \frac{1}{\sqrt{|R||\ol{R}|}} \sum_{i,j} |i\ran_R|j\ran_{\ol{R}}|\wt{ij}\ran_{A\ol{A}}$, where $R$ and $\ol{R}$ are auxiliary systems whose dimensionalities are equal to those of $a$ and $\ab$ respectively.  Then the following statements are equivalent:
\bi
\item[(1)] $|a|\leq |A|$, and if we decompose $\HA=(\Hh_{A_1}\otimes\Hh_{A_2})\oplus \Hh_{A_3}$, with $|A_1|=|a|$ and $|A_3|<|a|$, there exists a unitary transformation $U_A$ on $\HA$ and a set of orthonormal states $|\chi_j\ran_{A_2\ol{A}}\in \Hh_{A_2\ol{A}}$ such that
\be
|\wt{ij}\ran=U_A\left(|i\ran_{A_1}\otimes |\chi_j\ran_{A_2\ol{A}}\right),
\ee
where $|i\ran_{A_1}$ is an orthonormal basis for $\Hh_{A_1}$.

\item[(2)] For any operator $\wt{O}_a$ acting within $\Ha$, there exists an operator $O_A$ on $\HA$ such that, for any state $|\wt{\psi}\ran\in \Hc$, we have
\begin{align}\nonumber
O_A|\wt{\psi}\ran&=\wt{O}_a|\wt{\psi}\ran\\
O_A^\dagger|\wt{\psi}\ran&=\wt{O}^\dagger_a|\wt{\psi}\ran.
\end{align}

\item[(3)] For any operator $X_{\ol{A}}$ on $\HAb$, we have
\be
P_{code}X_{\ol{A}}P_{code}=(I_a\otimes X_{\ab})\Pc
\ee
with $X_{\ab}$ an operator on $\Hab$.  Here $P_{code}$ again denotes the projection onto $\Hc$. 
\item[(4)] In the state $|\phi\ran$, we have
\be
\rho_{R\ol{R}\,\Ab}(\phi)=\rho_R(\phi)\otimes \rho_{\ol{R}\,\Ab}(\phi).
\ee

\ei
\end{thm}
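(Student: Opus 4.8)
The plan is to establish the four-way equivalence by proving the cycle $(1)\Rightarrow(2)\Rightarrow(3)\Rightarrow(4)\Rightarrow(1)$, closely following the template of the proof of Theorem \ref{cthm} but carrying along the extra tensor factor coming from the factorization $\Hc=\Ha\otimes\Hab$. The guiding intuition is that everything in the conventional proof that referred to ``the whole code subspace'' should now refer only to the $\Ha$ factor, with the $\Hab$ factor acting as a spectator that is absorbed into the family of states $|\chi_j\ran$ and into the operator $X_{\ab}$.

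For $(1)\Rightarrow(2)$ I would simply set $O_A\equiv U_A O_{A_1}U_A^\dagger$, where $O_{A_1}$ acts on $\Hh_{A_1}$ with the same matrix elements that $\wt O_a$ has on $\Ha$. Since $U_A^\dagger$ sends $|\wt{ij}\ran$ to $|i\ran_{A_1}\otimes|\chi_j\ran_{A_2\Ab}$ and $O_{A_1}$ touches only the $A_1$ factor, both identities follow at once, exactly as before. For $(2)\Rightarrow(3)$ the one genuinely new ingredient appears. Using $O_A|\wt\psi\ran=\wt O_a|\wt\psi\ran$ together with $[X_{\Ab},O_A]=0$ (they act on complementary tensor factors of $\Hh$), I would show that $\Pc X_{\Ab}\Pc$, viewed as an operator on $\Hc$, commutes with every operator of the form $\wt O_a\otimes I_{\ab}$. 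In the conventional theorem this class of operators was all of $\Ll(\Hc)$ and Schur's lemma forced proportionality to $\Pc$; here it is only $\Ll(\Ha)\otimes I_{\ab}$, so the relevant fact is the mild generalization that the commutant of a full matrix algebra tensored with the identity is the identity tensored with the full matrix algebra on the other factor. This yields $\Pc X_{\Ab}\Pc=(I_a\otimes X_{\ab})\Pc$ for some $X_{\ab}$ on $\Hab$.

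For $(3)\Rightarrow(4)$ I would exploit the fact that, because the code factorizes, the state $|\phi\ran$ is a product $|\phi\ran=|\Phi_{Ra}\ran\otimes|\Phi_{\ol R\,\ab}\ran$, with $R$ maximally entangled with the $a$ (the $i$-index) factor and $\ol R$ with the $\ab$ (the $j$-index) factor. Given $O_R$ on $R$, $O_{\ol R}$ on $\ol R$, and $X_{\Ab}$ on $\HAb$, inserting $\Pc$ and applying (3) replaces $X_{\Ab}$ by $I_a\otimes X_{\ab}$ inside $|\phi\ran$; the product structure then splits the expectation value into an $R$-factor and an $(\ol R,\ab)$-factor, which after undoing the (3) step shows that every connected correlator between an $R$ operator and an operator on $\ol R\,\Ab$ vanishes. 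Since products $O_{\ol R}\otimes X_{\Ab}$ span $\Ll(\mathcal{H}_{\ol R}\otimes\HAb)$, this gives $\rho_{R\ol R\,\Ab}=\rho_R\otimes\rho_{\ol R\,\Ab}$. Finally, for $(4)\Rightarrow(1)$ I would run the purification argument of Theorem \ref{cthm}: note $\rho_R=I_R/|R|$ so that $\rho_{R\ol R\,\Ab}=\tfrac{1}{|R|}I_R\otimes\rho_{\ol R\,\Ab}$, deduce $|R|\le|A|$ from the existence of the purification $|\phi\ran$ onto $A$, perform long division of $|A|$ by $|R|=|a|$ to fix the decomposition $\HA=(\Hh_{A_1}\otimes\Hh_{A_2})\oplus\Hh_{A_3}$, observe that $\mathrm{rank}(\rho_{\ol R\,\Ab})\le|A_2|$, build the reference purification $|\phi'\ran=\big(\tfrac{1}{\sqrt{|R|}}\sum_i|i\ran_R|i\ran_{A_1}\big)\otimes|\chi\ran_{\ol R A_2\Ab}$ with $|\chi\ran$ a purification of $\rho_{\ol R\,\Ab}$ onto $A_2$, and use uniqueness of purifications up to a unitary $U_A$ on $A$ to write $|\phi\ran=U_A|\phi'\ran$. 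Unpacking this identity index by index in $\ol R$ then produces the orthonormal family $|\chi_j\ran_{A_2\Ab}$ and the stated action of $U_A$.

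I expect the last step, $(4)\Rightarrow(1)$, to be the main obstacle — not because any single move is deep, but because extracting the full structure of (1), and in particular verifying that the states $|\chi_j\ran$ one reads off are genuinely orthonormal and correctly labelled by the $\ol R$ basis, requires carefully disentangling how $\ol R$ is correlated with the $\ab$ factor inside the single unitary relation $|\phi\ran=U_A|\phi'\ran$. The conceptual novelty, by contrast, is concentrated in $(2)\Rightarrow(3)$, where replacing Schur's lemma by the commutant-of-a-matrix-algebra statement is exactly what upgrades ``recover the whole state'' to ``recover only the $\Ha$ factor.''
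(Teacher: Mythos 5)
Your proposal is correct and follows essentially the approach the paper intends: the paper omits the proof of Theorem \ref{sthm}, stating only that it parallels Theorem \ref{cthm} with the $\Ha$ factor tracked and that it is the factor case of Theorem \ref{othm}, and your cycle $(1)\Rightarrow(2)\Rightarrow(3)\Rightarrow(4)\Rightarrow(1)$ fills in exactly those details (replacing Schur's lemma by the commutant of $\Ll(\Ha)\otimes I_{\ab}$, and reading off the orthonormal $|\chi_j\ran$ from the maximally mixed $\ol{R}$ marginal in the purification step). This matches the structure of the paper's proofs of Theorems \ref{cthm} and \ref{othm}, so no substantive divergence to report.
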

This theorem gives a broad characterization of when a code can recover the state of a logical subsystem $a$ from the erasure of a physical subsystem $\Ab$.  The proof is quite similar to the proof of theorem \ref{cthm}, one just needs to keep track of $\Ha$, so I won't give the details here (anyways it is a special case of the analogous theorem in the next section).

In applying this theorem to AdS/CFT, we are mostly interested in the special case where, in addition to being able to recover an arbitrary $\wt{O}_a$ on $A$, we can also recover an arbitrary $\wt{O}_{\ab}$ on $\Ab$ (see the left diagram of figure \ref{subregions}).  I'll call this a \textit{subsystem code with complementary recovery}.\footnote{This criterion seems related to the ``quantum mutual independence'' of \cite{horodecki2009quantum}, I thank Jonathan Oppenheim for bringing this to my attention.  The explicit examples given here suggest that quantum mutual independence is more common than was suggested in \cite{horodecki2009quantum}, it would be interesting to understand this better.}  This restriction implies that condition (1) of theorem \ref{sthm} should apply also for the barred factors:
\be\label{ijeq}
|\wt{ij}\ran=U_A\left(|i\ran_{A_1}|\chi_j\ran_{A_2 \Ab}\right)=U_{\Ab}\left(|j\ran_{\Ab_1}|\ol{\chi}_i\ran_{\Ab_2 A}\right).
\ee
Here we have decomposed $\HAb=\left(\Hh_{\Ab_1}\otimes\Hh_{\Ab_2}\right)\oplus \Hh_{\Ab_3}$, with $|\Ab_1|=|\ol{R}|=|\ab|$ and $|\Ab_3|<|\ab|$, $|j\ran_{\Ab_1}$ is an orthonormal basis for $\Hh_{\Ab_1}$, and the states $|\ol{\chi}_i\ran_{\Ab_2 A}$ are orthonormal.  Acting on \eqref{ijeq} with $U_A^\dagger U_{\Ab}^\dagger$, we see that we must have states $|\chi\ran_{A_2\Ab_2}$, $|\ol{\chi}\ran_{A_2\Ab_2}$ such that
\begin{align}\nonumber
U_{\Ab}^\dagger|\chi_j\ran_{A_2 \Ab}&=|j\ran_{\Ab_1}|\chi\ran_{A_2\Ab_2}\\
U_A^\dagger|\ol{\chi}_i\ran_{\Ab_2 A}&=|i\ran_{A_1}|\ol{\chi}\ran_{A_2\Ab_2},
\end{align}
which together with \eqref{ijeq} imply that actually $|\chi\ran_{A_2\Ab_2}=|\ol{\chi}\ran_{A_2\Ab_2}$.  Thus we must have
\be\label{sstate}
|\wt{ij}\ran=U_A U_{\Ab}\left(|i\ran_{A_1}|j\ran_{\Ab_1}|\chi\ran_{A_2\Ab_2}\right).
\ee
This is precisely the situation illustrated by figure \ref{circuit} in the introduction, but now we see that it is really necessary for subregion duality to work with a factorized code subspace.

It is worth mentioning that the tensor-network models of holography introduced in \cite{Pastawski:2015qua,Hayden:2016cfa} provide explicit examples of subsystem codes with complementary recovery, so all results of this section apply to them. 

\subsection{A Ryu-Takayanagi formula}
Using eq. \eqref{sstate}, we can again study the entropy of any state $\wt{\rho}$ on $\Hc$ for a subsystem code with complementary recovery.  Defining $\chi_{A_2}\equiv \Tr_{\Ab_2}|\chi\ran\lan\chi|$ and $\chi_{\Ab_2}\equiv \Tr_{A_2}|\chi\ran\lan\chi|$, we now have  
\begin{align}
\wt{\rho}&=U_AU_{\Ab}\Big(\rho_{A_1\Ab_1}\otimes |\chi\ran\lan\chi|\Big)U_A^\dagger U_{\Ab}^\dagger\\\label{sra}
\wt{\rho}_A&\equiv \Tr_{\Ab} \wt{\rho}=U_A\Big(\rho_{A_1}\otimes \chi_{A_2}\Big)U_A^\dagger\\
\wt{\rho}_{\Ab}&\equiv \Tr_{A} \wt{\rho}=U_{\Ab}\Big(\rho_{\Ab_1}\otimes \chi_{\Ab_2}\Big)U_{\Ab}^\dagger.
\end{align}
Here $\rho_{A_1\Ab_1}$ acts within $\Hc$ with the same matrix elements as $\wt{\rho}$, and $\rho_{A_1}$ and $\rho_{\Ab_1}$ have the same matrix elements as $\wt{\rho}_a$ and $\wt{\rho}_{\ab}$ respectively.   Defining ``area operators'' 
\begin{align}
\Ll_A&\equiv  S(\chi_{A_2})I_a\\
\Ll_{\Ab}&\equiv S(\chi_{A_2})I_{\ab},
\end{align}
we then see that 
\begin{align}\label{sflm1}
S\left(\wt{\rho}_A\right)&=\Tr\left(\wt{\rho}_a \Ll_A\right)+S\left(\wt{\rho}_a\right)\\\label{sflm2}
S\left(\wt{\rho}_{\Ab}\right)&=\Tr\left(\wt{\rho}_{\ab} \Ll_{\Ab}\right)+S\left(\wt{\rho}_{\ab}\right).
\end{align}
Thus the RT formula \eqref{flm} holds exactly for any subsystem code with complementary recovery!

We can also extend the relationships \eqref{cjlms1}, \eqref{cjlms2} between ``bulk'' and ``boundary'' modular Hamiltonians to subsystem codes with complementary recovery.  Defining $\wt{K}^\rho_A\equiv -\log \wt{\rho}_A$, $\wt{K}^\rho_{\Ab}\equiv -\log \wt{\rho}_{\Ab}$, $\wt{K}^\rho_a\equiv -\log \wt{\rho}_a$, and $\wt{K}^\rho_{\ab}\equiv -\log \wt{\rho}_{\ab}$, we again can straightforwardly confirm that
\begin{align}
\wt{K}^\rho_A&=U_A\left(K^\rho_{A_1}\otimes I_{A_2}-I_{A_1}\otimes \log \chi_{A_2}\right)U_A^\dagger\\
\wt{K}^\rho_{\Ab}&=U_{\Ab}\left(K^\rho_{\Ab_1}\otimes I_{\Ab_2}-I_{\Ab_1}\otimes \log \chi_{\Ab_2}\right)U_{\Ab}^\dagger,
\end{align}
and thus that\footnote{In these equations my neglect of identity factors may be confusing, including them we have
\begin{align}\nonumber
\Pc \left(\wt{K}^\rho_A\otimes I_{\Ab}\right) \Pc&=\left(\wt{K}^\rho_a+\Ll_A\right)\otimes I_{\ab}\\\nonumber
\Pc \left(I_A\otimes \wt{K}^\rho_{\Ab}\right) \Pc&=I_a\otimes \left(\wt{K}^\rho_{\ab}+\Ll_{\Ab}\right).
\end{align}}
\begin{align}\label{sjlms}
\Pc \wt{K}^\rho_A \Pc&=\wt{K}^\rho_a+\Ll_A\\\label{sjlms1}
\Pc \wt{K}^\rho_{\Ab} \Pc&=\wt{K}^\rho_{\ab}+\Ll_{\Ab}.
\end{align}
This then implies a nice result about the ``bulk'' and ``boundary'' relative entropies of two states $\wt{\rho}$, $\wt{\sigma}$:
\begin{align}\nonumber
S(\wt{\rho}_A|\wt{\sigma}_A)&\equiv -S(\wt{\rho}_A)+\Tr(\wt{\rho}_A \wt{K}^{\sigma}_A)\\\nonumber
&=-S(\wt{\rho}_a)+\Tr(\wt{\rho}_a \wt{K}^\sigma_a)\\
&=S(\wt{\rho}_a|\wt{\sigma}_a),\label{sjlms2}
\end{align}
and similarly 
\be\label{sjlms3}
S(\wt{\rho}_{\Ab}|\wt{\sigma}_{\Ab})=S(\wt{\rho}_a|\wt{\sigma}_{\ab}).
\ee  
In AdS/CFT, \eqref{sjlms}, \eqref{sjlms1}, \eqref{sjlms2}, and \eqref{sjlms3} are precisely the main results of \cite{Jafferis:2015del}; we now see they are general consequences of subsystem coding with complementary recovery.

\subsection{Holographic interpretation}
By now it should be clear that subsystem codes with complementary recovery resolve both of the problems mentioned in sec. \ref{probsec}. The new RT formulae, \eqref{sflm1}, \eqref{sflm2}, are symmetric between $A$ and $\Ab$, and allow for bulk information in both of their entanglement wedges.  Moreover in states with entanglement between $\Ha$ and $\Hab$, there can be nontrivial bulk correlation without violating any of the conditions of theorem \ref{sthm}.

In fact these RT formulae give a converse to the ``reconstruction theorem'' proven in \cite{Dong:2016eik}: there it was argued that if \eqref{sflm1}, \eqref{sflm2} hold for some operators $\Ll_A$ and $\Ll_{\Ab}$ in all state $\wt{\rho}$ on a factorized code subspace $\Hc=\Ha\otimes \Hab$, then condition (2) of theorem \ref{sthm} also holds.  But now we have learned something new: we also must have
\be
\Ll_A\otimes I_{\ab}=I_{a}\otimes \Ll_{\Ab}=S(\chi_{A_2}) I_{code}.
\ee  
This is rather unsettling: in AdS/CFT, the area operator is certainly not trivial!  We can check this conclusion in the tensor-network models from \cite{Pastawski:2015qua,Hayden:2016cfa}: in \cite{Pastawski:2015qua} it follows from eq. 4.8, since the code will only have complementary recovery if this inequality is saturated, and this means that the density matrix through the cut $\gamma_A$ is maximally mixed.  In \cite{Hayden:2016cfa} the triviality of the area operator follows from equation 5.9, which shows that the ``area term'' of the Renyi entropies is independent of $n$.

The origin of this trivial area operator is that we assumed the code subspace factorized into $\Ha\otimes \Hab$, and the only operators that can be shared between both factors are multiples of the identity.  To fix this, we need to generalize to a situation where the bulk algebras of operators in $\EA$ and $\EAb$ can have more in common.  This will clearly not be true if we continue to insist that they act on complementary factors of $\Hc$, so we will now drop this assumption and consider general operator algebras on $\Hc$.

\section{Operator-algebra quantum erasure correction}\label{opqecsec}
Operator-algebra quantum error correction is a generalization of subsystem quantum error correction introduced in \cite{beny2007generalization,beny2007quantum}.  The idea is to ask for recovery of only a subalgebra of the observables on $\Hc$.  In the special case where this subalgebra is the set of all operators on a tensor factor, this reduces to subsystem quantum error correction.  For the erasure channel it can be characterized by a theorem  generalizing theorems \eqref{cthm} and \eqref{sthm}, but before presenting and proving it we first need to recall some basic facts about subalgebras.

In this paper I will always take the subalgebra of interest to be a \textit{von Nuemann algebra on $\Hc$}.  This is a subset of the linear operators on $\Hc$ which is closed under addition, multiplication, hermitian conjugation, and which contains all scalar multiples of the identity (I will always assume that $\Hc$ is finite-dimensional, so there are no additional topological closure requirements). Von Neumann algebras are not particularly common in theoretical physics these days, and their general theory is quite sophisticated, especially in the infinite-dimensional case \cite{takesaki2003theory}.  The finite-dimensional case is more manageable, in appendix \ref{vnapp} I give a self-contained explanation of the basic results, including proofs.  I hope that it gives a relatively accessible entry to what can be a rather intimidating subject.  I will now state the essential results, so the appendix should only be necessary for readers who wish to understand the theory that underlies them.

The classification of von Neumann algebras on finite-dimensional Hilbert spaces, given by theorem \eqref{classthm}, tells us that for any von Neumann algebra $M$ on $\Hc$, we have a Hilbert space decomposition
\be\label{hilbertop}
\Hc=\oplus_\alpha \left(\Hh_{a_\alpha}\otimes \Hh_{\ab_\alpha}\right),
\ee
such that $M$ is just given by the set of all operators $\wt{O}$ that are block-diagonal in $\alpha$, and that within each block act as $\wt{O}_{a_\alpha}\otimes I_{\ab_\alpha}$, with $\wt{O}_{a_\alpha}$ an arbitrary linear operator on $\Hh_{a_\alpha}$.  In matrix form, we have
\be\label{Oform}
\wt{O}=
\begin{pmatrix}
\wt{O}_{a_1}\otimes I_{\ab_1} && 0 && \cdots\\
0 && \wt{O}_{a_2}\otimes I_{\ab_2} &&\cdots \\
\vdots && \vdots && \ddots
\end{pmatrix}
\ee
for any operator $\wt{O}\in M$.  The \textit{commutant of M}, denoted $M'$, and defined as the set of all operators on $\Hc$ that commute with everything in $M$, is also block-diagonal and consists of operators $\wt{O}'$ of the form
\be
\wt{O}'=
\begin{pmatrix}
I_{a_1}\otimes \wt{O}_{\ab_1}' && 0 && \cdots\\
0 && I_{a_2}\otimes \wt{O}_{\ab_2}' &&\cdots \\
\vdots && \vdots && \ddots
\end{pmatrix},
\ee
with the $\wt{O}_{\ab_\alpha}'$ arbitrary.  The \textit{center of M}, denoted $Z_M$, and defined as the operators in both $M$ and $M'$, consists of operators $\wt{\Lambda}$ of the form
\be\label{ocenter}
\wt{\Lambda}=
\begin{pmatrix}
\lambda_1\left(I_{a_1}\otimes I_{\ab_1}\right) && 0 && \cdots\\
0 && \lambda_2 \left(I_{a_2}\otimes I_{\ab_2}\right) &&\cdots \\
\vdots && \vdots && \ddots
\end{pmatrix},
\ee
with $\lambda_\alpha$ arbitrary elements of $\mathbb{C}$.  Thus we see that the blocks of the decomposition \eqref{hilbertop} arise from simultaneously diagonalizing all elements of $Z_M$. The special case where $M$ is the set of all operators on a tensor factor is realized if and only if $Z_M$ is trivial, in which case $M$ is called a \textit{factor}. 

In the following section it will be convenient to introduce orthonormal bases $|\wt{\alpha, i}\ran$ and $|\wt{\alpha, j}\ran$ for $\Hh_{a_\alpha}$ and $\Hh_{\ab_\alpha}$ respectively.  Together we can use these to build an orthonormal basis for $\Hc$:
\be\label{opbasis}
|\wt{\alpha, i j}\ran\equiv |\wt{\alpha, i}\ran\otimes|\wt{\alpha, j}\ran.
\ee  

Given a state $\wt{\rho}$ and a von Neumann algebra $M$ on $\Hc$, there is a definition of an entropy of $\wt{\rho}$ on $M$, which reduces to the standard von Neumann entropy when $M$ is a factor.  It is computed from the diagonal blocks $\wt{\rho}_{\alpha\alpha}$ of $\wt{\rho}$ in the following manner.  We first define
\be\label{prho}
p_\alpha \wt{\rho}_{a_\alpha}\equiv \Tr_{\ab_\alpha}\wt{\rho}_{\alpha\alpha},
\ee
with $p_\alpha\in [0,1]$ chosen so that $\Tr_{a_\alpha} \wt{\rho}_{a_\alpha}=1$.  This then implies that $\sum_\alpha p_\alpha=1$.  We then define
\be\label{entform}
S(\wt{\rho},M)\equiv -\sum_\alpha p_\alpha \log p_\alpha+\sum_\alpha p_\alpha S\left(\wt{\rho}_{a_\alpha}\right).
\ee
We can similarly define an entropy of $\wt{\rho}$ on $M'$, via
\be\label{prhop}
p_\alpha \wt{\rho}_{\ab_\alpha}\equiv \Tr_{a_\alpha}\wt{\rho}_{\alpha\alpha},
\ee
and
\be
S(\wt{\rho},M')\equiv -\sum_\alpha p_\alpha \log p_\alpha+\sum_\alpha p_\alpha S\left(\wt{\rho}_{\ab_\alpha}\right).
\ee
These entropies each consist of a ``classical'' piece, given by the Shannon entropy of the probability distribution $p_\alpha$ for the center $Z_M$, and a ``quantum'' piece given by the average of the von Neumann entropy of each block over this distribution.  The distribution $p_\alpha$ is shared between $M$ and $M'$.  The motivation for and properties of these entropies are discussed in more detail in section \ref{entapp} of the appendix.

\subsection{A theorem}
I can now present the basic theorem of operator-algebra quantum erasure correction \cite{beny2007generalization,beny2007quantum} (see also \cite{Almheiri:2014lwa} and \cite{hayden2004structure}):\footnote{\cite{hayden2004structure} is not explicitly about coding, but instead about the question of what sort of states saturate strong subadditivity, but Fernando Brandao has pointed out to me that many of their methods and results are quite similar to those I use and find here.  Perhaps there is a deeper connection at work?}
\begin{thm}\label{othm}
Say that $\Hh$ is a finite-dimensional Hilbert space, with a tensor product structure $\Hh=\HA\otimes\HAb$, and say that $\Hc$ is a subspace of $\Hh$ on which we have a von Neumann algebra $M$.  Moreover say that $|\wt{\alpha, i j}\ran$ is an orthonormal basis for $\Hc$ which is compatible with the decomposition \eqref{hilbertop} induced by $M$, as in \eqref{opbasis}, and that $|\phi\ran\equiv \frac{1}{\sqrt{|R|}} \sum_{\alpha,i,j} |\alpha,ij\ran_R|\wt{\alpha,ij}\ran_{A\ol{A}}$, where $R$ is an auxiliary system whose dimensionality is equivalent to that of $\Hc$. Then the following statements are equivalent:
\bi
\item[(1)] $\sum_\alpha |a_\alpha|\leq |A|$, we can decompose $\HA=\oplus_\alpha\left(\Hh_{A_1^\alpha}\otimes\Hh_{A_2^\alpha}\right)\oplus \Hh_{A_3}$ with $|A_1^\alpha|=|a_\alpha|$, and there exists a unitary transformation $U_A$ on $\HA$ and sets of orthonormal states $|\chi_{\alpha,j}\ran_{A_2^\alpha\ol{A}}\in \Hh_{A_2^\alpha\ol{A}}$ such that
\be
|\wt{\alpha,ij}\ran=U_A\left(|\alpha,i\ran_{A_1^\alpha}\otimes |\chi_{\alpha,j}\ran_{A_2^\alpha\ol{A}}\right).
\ee
Here $|\alpha,i\ran_{A_1^\alpha}$ is an orthonormal basis for $\Hh_{A_1^\alpha}$.

\item[(2)] For any operator $\wt{O}$ in $M$, there exists an operator $O_A$ on $\HA$ such that, for any state $|\wt{\psi}\ran\in \Hc$, we have
\begin{align}\nonumber
O_A|\wt{\psi}\ran&=\wt{O}|\wt{\psi}\ran\\
O_A^\dagger|\wt{\psi}\ran&=\wt{O}^\dagger|\wt{\psi}\ran.
\end{align}

\item[(3)] For any operator $X_{\ol{A}}$ on $\HAb$, we have
\be
P_{code}X_{\ol{A}}P_{code}=X'\Pc
\ee
with $X'$ some element of $M'$.  Here $P_{code}$ again denotes the projection onto $\Hc$. 
\item[(4)] For any operator $\wt{O}$ in $M$, we have
\be
[O_R,\rho_{R\Ab}(\phi)]=0.
\ee
Here $O_R$ is defined as the unique operator on $\Hh_R$ such that
\begin{align}
O_R|\phi\ran&=\wt{O}|\phi\ran\\
O_R^\dagger|\phi\ran&=\wt{O}^\dagger |\phi\ran,
\end{align}
explicitly it acts with the same matrix elements on $R$ as $\wt{O}^T$ does on $\Hc$.
\ei
\end{thm}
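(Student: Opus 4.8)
The plan is to establish the four-way equivalence by the same cyclic chain used for theorems \ref{cthm} and \ref{sthm}, namely $(1)\Rightarrow(2)\Rightarrow(3)\Rightarrow(4)\Rightarrow(1)$, with the block-diagonal structure of \eqref{Oform} (supplied by the classification of finite-dimensional von Neumann algebras reviewed above) replacing the single tensor factor that appeared in the earlier proofs. Throughout I would write the composite index $k=(\alpha,ij)$ so that $|\phi\ran$ is a maximally entangled state between $R$ and $\Hc$, and I would repeatedly use the transpose trick $\wt{O}|\phi\ran=O_R|\phi\ran$, $\wt{O}^\dagger|\phi\ran=O_R^\dagger|\phi\ran$, which is exactly how $O_R$ is defined in condition (4), together with $\Pc|\phi\ran=|\phi\ran$ on the $A\Ab$ factor.

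For $(1)\Rightarrow(2)$ I would set $O_A\equiv U_A\big(\oplus_\alpha(\wt{O}_{a_\alpha}\otimes I_{A_2^\alpha})\big)U_A^\dagger$, extended arbitrarily on $\Hh_{A_3}$, where $\wt{O}_{a_\alpha}$ is the block of $\wt{O}\in M$ appearing in \eqref{Oform}; acting on $U_A^\dagger|\wt{\alpha,ij}\ran=|\alpha,i\ran_{A_1^\alpha}|\chi_{\alpha,j}\ran_{A_2^\alpha\Ab}$ this reproduces both $\wt{O}$ and $\wt{O}^\dagger$ block by block, just as in theorem \ref{cthm}. For $(2)\Rightarrow(3)$ I would show that $\Pc X_{\Ab}\Pc$, viewed as an operator on $\Hc$, commutes with every $\wt{O}\in M$: using the representation $O_A$ from (2) together with $[O_A,X_{\Ab}]=0$ one gets $\lan\wt{\phi}|\,[\,\Pc X_{\Ab}\Pc,\wt{O}\,]\,|\wt{\psi}\ran=0$ for all code states. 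Since $M'$ is by definition the set of operators commuting with all of $M$, this says precisely that $\Pc X_{\Ab}\Pc=X'\Pc$ for some $X'\in M'$, which is condition (3); here the defining property of the commutant replaces the Schur's-lemma step of theorem \ref{cthm}.

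The implication $(3)\Rightarrow(4)$ I would dispatch by a short computation verifying $\Tr\big(\rho_{R\Ab}[O_R,Y_R X_{\Ab}]\big)=0$ for arbitrary $Y_R$ and $X_{\Ab}$. Starting from $\lan\phi|O_R Y_R X_{\Ab}|\phi\ran$, I would move $O_R$ onto $\Hc$ by the transpose trick, insert code projectors using $\Pc|\phi\ran=|\phi\ran$, apply (3) to replace $\Pc X_{\Ab}\Pc$ by an element $X'\in M'$, and then use $[M,M']=0$; the same manipulation applied to $\lan\phi|Y_R X_{\Ab}O_R|\phi\ran$ lands on the identical expression $\lan\phi|Y_R X'\wt{O}|\phi\ran$, giving $[O_R,\rho_{R\Ab}]=0$.

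The real work is $(4)\Rightarrow(1)$, which I expect to be the main obstacle. Condition (4) says $\rho_{R\Ab}$ commutes with the whole algebra $M_R\equiv\{O_R:\wt{O}\in M\}$, hence lies in $M_R'\otimes\Ll(\HAb)$; writing $R=\oplus_\alpha(R^{a_\alpha}\otimes R^{\ab_\alpha})$ in parallel with \eqref{hilbertop}, the block form dual to \eqref{Oform} forces $\rho_{R\Ab}=\oplus_\alpha\big(I_{R^{a_\alpha}}\otimes\tau_\alpha\big)$, block-diagonal in $\alpha$ and proportional to the identity on each $R^{a_\alpha}$, with $\tau_\alpha$ a subnormalized state on $R^{\ab_\alpha}\otimes\HAb$ whose $R^{\ab_\alpha}$ marginal is maximally mixed (since the uniform Schmidt weights of $|\phi\ran$ give $\rho_R=I_R/|R|$). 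I would then mimic the purification argument of theorem \ref{cthm} block by block: a long-division/Schmidt count yields $\sum_\alpha|a_\alpha|\le|A|$ and the decomposition $\HA=\oplus_\alpha(\Hh_{A_1^\alpha}\otimes\Hh_{A_2^\alpha})\oplus\Hh_{A_3}$, and I would build an explicit purification $|\phi'\ran$ of $\rho_{R\Ab}$ onto $A$ in which each block carries a maximally entangled pair between $R^{a_\alpha}$ and $A_1^\alpha$ tensored with a purification $|\Psi_\alpha\ran=\sum_j|j\ran_{R^{\ab_\alpha}}|\chi_{\alpha,j}\ran_{A_2^\alpha\Ab}$ of $\tau_\alpha$ onto $A_2^\alpha$. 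Because $|\phi\ran$ and $|\phi'\ran$ purify the same $\rho_{R\Ab}$ onto the same system $A$, they differ by a unitary $U_A$ on $\HA$, and reading off $U_A^\dagger|\phi\ran$ component-by-component produces exactly the form claimed in (1), the within-block orthonormality of $\{|\chi_{\alpha,j}\ran\}_j$ following from the maximal mixedness of the $R^{\ab_\alpha}$ marginal. The one genuinely new difficulty, relative to theorems \ref{cthm} and \ref{sthm}, is bookkeeping: I must fit all the blocks into mutually orthogonal subspaces of the single space $\HA$ and assemble a single global unitary that works simultaneously across blocks. That the different-$\alpha$ pieces can be chosen orthogonal inside $\HA$ is precisely what the block-diagonality of $\rho_{R\Ab}$ guarantees, so the single-block argument of theorem \ref{cthm} assembles without conflict.
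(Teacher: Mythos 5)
Your proposal is correct and follows essentially the same route as the paper: the same cyclic chain $(1)\Rightarrow(2)\Rightarrow(3)\Rightarrow(4)\Rightarrow(1)$, the same explicit construction of $O_A$, the same transpose-trick computation for $(3)\Rightarrow(4)$, and the same block-by-block purification argument (block-diagonality of $\rho_{R\Ab}$, maximally mixed marginals, Schmidt counting, and uniqueness of purifications up to a unitary on $A$) for the hard direction. The only cosmetic difference is that in $(2)\Rightarrow(3)$ you argue directly that $\Pc X_{\Ab}\Pc$ commutes with all of $M$ and hence lies in $M'$, whereas the paper phrases the identical argument as a contradiction.
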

This theorem characterizes the ability of a code subspace to correct a subalgebra $M$ for the erasure of the physical degrees of freedom $\Ab$.  It reduces to theorem \eqref{sthm} if $M$ is a factor, and to theorem \eqref{cthm} if $M$ is all the operators on $\Hc$.  The equivalence of conditions (2), (3), and (4) was proven in appendix B of \cite{Almheiri:2014lwa}, I will give a more streamlined proof here that is closer to that already given for theorem \eqref{cthm}.  As far as I know condition (1) is new, it will be this condition that enables the connection to the RT formula in the following subsection.  
\begin{proof}

$(1)\Rightarrow (2)$: We can simply define $O_A\equiv U_A\left(\oplus_\alpha \left(O_{A_1^\alpha}\otimes I_{A_2^\alpha}\right)\right)U_A^\dagger$, where $O_{A_1^\alpha}$ acts on $\Hh_{A_1^\alpha}$ in the same way that $\wt{O}_{a_\alpha}$ from \eqref{Oform} acts on $\Hh_{a_\alpha}$.

$(2)\Rightarrow (3)$: Say that $\Pc X_{\Ab} \Pc=x'\Pc$, with $x'$ an operator on $\Hc$ but not an element of $M'$.  Then there must exist an $\wt{O}\in M$ and a state $|\wt{\psi}\ran\in \Hc$ such that $\lan \wt{\psi}|[x',\wt{O}]|\wt{\psi}\ran=\lan \wt{\psi}|[X_{\Ab},\wt{O}]|\wt{\psi}\ran\neq 0$, but such an $\wt{O}$ clearly cannot have an $O_A$, contradicting (2).

$(3)\Rightarrow (4)$: Say that $\wt{O}\in M$, and say that  $X_{\Ab}$ and $Y_R$ are arbitrary operators on $\HAb$ and $\Hh_R$ respectively.  We then have
$\Tr_{R\Ab}\left(O_R\,\rho_{R\Ab}(\phi)X_{\Ab} Y_R\right)=\lan\phi|X_{\Ab}Y_R O_R|\phi\ran=\lan\phi|X_{\Ab}Y_R \wt{O}|\phi\ran=\lan\phi|\wt{O}X_{\Ab}Y_R|\phi\ran=\lan\phi|O_RX_{\Ab}Y_R|\phi\ran=\Tr_{R\Ab}\left(\rho_{R\Ab}(\phi)O_RX_{\Ab} Y_R\right)$, which can only be true for arbitrary $X_{\Ab}$ and $Y_R$ if $[O_R,\rho_{R\Ab}(\phi)]=0$.

$(4)\Rightarrow (1)$: Our basis $|\alpha,ij\ran_R$ for $\Hh_R$ gives a decomposition 
\be
\Hh_{R\Ab}=\oplus_\alpha\left(\Hh_{R_\alpha}\otimes \Hh_{\ol{R}_\alpha}\otimes\HAb\right),
\ee
under which (4) implies that
\be
\rho_{R\Ab}(\phi)=\oplus_\alpha\left[\frac{|R_\alpha||\ol{R}_\alpha|}{|R|}\left(\frac{I_{R_\alpha}}{|R_\alpha|}\otimes \rho_{\ol{R}_\alpha \Ab}\right)\right]
\ee
for some $\rho_{\ol{R}_\alpha \Ab}$.  From $\rho_R=\frac{I_R}{|R|}$, we must have $\Tr_{\Ab} \,\rho_{\ol{R}_\alpha \Ab}=\frac{I_{\ol{R}_\alpha}}{|\ol{R}_\alpha|}$.  Since $\rho_{R\Ab}$ is purified by $|\phi\ran$, if we denote the rank of $\rho_{\ol{R}_\alpha \Ab}$ as $|\rho_{\ol{R}_\alpha \Ab}|$ then by the Schmidt decomposition it must be that
\be
\sum_\alpha |R_\alpha| |\rho_{\ol{R}_\alpha \Ab}|\leq |A|.
\ee
Therefore we can decompose 
\be
\HA=\oplus_\alpha\left(\Hh_{A_1^\alpha}\otimes \Hh_{A_2^\alpha}\right)\oplus \Hh_{A_3},
\ee
with $|A_1^\alpha|=|R_\alpha|=|a_\alpha|$ and $|A_2^\alpha|\geq |\rho_{\ol{R}_\alpha \Ab}|$.  For each $\alpha$ we can thus purify $\rho_{\ol{R}_\alpha \Ab}$ on $A_2^\alpha$, and from $\Tr_{\Ab} \,\rho_{\ol{R}_\alpha \Ab}=\frac{I_{\ol{R}_\alpha}}{|\ol{R}_\alpha|}$ this purification must have the form
\be
|\psi_\alpha\ran_{\ol{R}_\alpha A_2^\alpha \Ab}=\frac{1}{\sqrt{|\ol{R}_\alpha|}}\sum_j|\alpha,j\ran_{\ol{R}_\alpha}|\chi_{\alpha,j}\ran_{A_2^\alpha \Ab},
\ee
with the $|\chi_{\alpha,j}\ran$'s mutually orthonormal.  This then says we can purify $\rho_{R\Ab}$ as
\be
|\phi'\ran=\frac{1}{\sqrt{|R|}}\sum_{\alpha,ij}|\alpha,ij\ran_R|\alpha,i\ran_{A_1^\alpha}|\chi_{\alpha,j}\ran_{A_2^\alpha\Ab}.
\ee
Finally since $|\phi'\ran$ and $|\phi\ran$ are two purifications of $\rho_{R\Ab}$ on $A$, they must differ only by a unitary $U_A$, which implies (1).
\end{proof}
Since the last step of this proof is a bit complicated, it is worth mentioning that there is a simple proof \cite{Almheiri:2014lwa} that $(4)\Rightarrow(2)$: we observe that (4) implies that $O_R$ acts within the subspace of $\Hh_{R\Ab}$ that appears with nonzero coefficients in the Schmidt decomposition of $|\phi\ran$ into $R\Ab$ and $A$.  This then implies we can directly mirror $O_R$ back onto $A$, producing an $O_A$ that obeys (2).  

To apply this theorem to holography, we again need to introduce a version of the complementary recovery property, since we would also like to be able to represent operators in $\mathcal{E}_{\Ab}$ the entanglement wedge of $\Ab$ as operators on $\Ab$.  I will define a \textit{subalgebra code with complementary recovery} to be one where not only can we represent any element of $M$ on $A$ as in condition (2), we can also represent any element of $M'$ on $\Ab$.  The equivalence of (2) and (1) in theorem \eqref{othm} tells us that we then must have
\be\label{ocr}
|\wt{\alpha,ij}\ran=U_A U_{\Ab}\left(|\alpha,i\ran_{A_1^\alpha}|\alpha,j\ran_{\Ab_1^\alpha}|\chi_\alpha\ran_{A_2^\alpha \Ab_2^\alpha}\right).
\ee
Here we have introduced a decomposition $\HAb=\oplus_\alpha \left(\Hh_{\Ab_1^\alpha}\otimes \Hh_{\Ab_2^\alpha}\right)\oplus \Hh_{\Ab_3}$, with $|\Ab_1^\alpha|=|\ab_\alpha|$.  

Before proceeding, it seems appropriate to give a simple example of a subalgebra code with complementary recovery.  Consider the two-qubit system, with a code subspace $\Hc$ spanned by
\begin{align}\nonumber
|\wt{0}\ran&=|00\ran\\
|\wt{1}\ran&\equiv |11\ran.
\end{align}
The subalgebra $M$ I will consider is the one generated by $\wt{I}$ and $\wt{Z}$, with the latter acting as $\wt{Z}|\wt{0}\ran=|\wt{0}\ran$ and $\wt{Z}|\wt{1}\ran=-|\wt{1}\ran$.  This algebra is abelian, and thus has nontrivial center.  In fact center is all it has, so $|a_\alpha|=|\ab_\alpha|=1$, and $\alpha=0,1$.  Since $M=M'$, it must be that any operator in $M$ can be represented on either the first or the second physical qubit.  But this is clearly true, since $Z_1$ and $Z_2$ both act on $\Hc$ as $\wt{Z}$.

\subsection{A Ryu-Takayanagi formula}
Now let's consider an arbitrary encoded state $\wt{\rho}$ in a subalgebra code with complementary recovery on $A$ and $\Ab$.  From \eqref{prho},\eqref{prhop}, and \eqref{ocr}, we see that
\begin{align}\label{orho1}
\wt{\rho}_A&\equiv Tr_{\Ab}\wt{\rho}=U_A\Big(\oplus_\alpha \left(p_\alpha \rho_{A_1^\alpha}\otimes \chi_{A_2^\alpha}\right)\Big)U_A^\dagger\\\label{orho2}
\wt{\rho}_{\Ab}&\equiv Tr_{A}\wt{\rho}=U_{\Ab}\left(\oplus_\alpha \left(p_\alpha \rho_{\Ab_1^\alpha}\otimes \chi_{\Ab_2^\alpha}\right)\right)U_{\Ab}^\dagger,
\end{align}
where I've defined $\chi_{A_2^\alpha}\equiv \Tr_{\Ab_2^\alpha}|\chi_\alpha\ran\lan\chi_\alpha|$ and $\chi_{\Ab_2^\alpha}\equiv \Tr_{A_2^\alpha}|\chi_\alpha\ran\lan\chi_\alpha|$, and $\rho_{A_1^\alpha}$, $\rho_{\Ab_1^\alpha}$ act on $\Hh_{A_1^\alpha}$, $\Hh_{\Ab_1^\alpha}$ in the same way that $\wt{\rho}_{a_\alpha}$, $\wt{\rho}_{\ab_\alpha}$ do on $\Hh_{a_\alpha}$, $\Hh_{\ab_\alpha}$.  Finally if we define
\be\label{oarea}
\LA\equiv\oplus_\alpha S(\chi_{A_2^\alpha})I_{a_\alpha \ab_\alpha},
\ee
from \eqref{orho1}, \eqref{orho2} we find the Ryu-Takayanagi formulae:
\begin{align}\label{oflm1}
S(\wt{\rho}_A)&=\Tr \wt{\rho}\LA+S(\wt{\rho},M)\\\label{oflm2}
S(\wt{\rho}_{\Ab})&=\Tr \wt{\rho}\LA+S(\wt{\rho},M').
\end{align}
From \eqref{oarea} we see that the area operator $\LA$ is now nontrivial; $S(\chi_{A_2^\alpha})$ can take different values for different $\alpha$. Moreover we see that $\LA$ is of the form \eqref{ocenter}, and is thus an element of the center of $M$.

We can also study the relationships between the ``bulk'' and ``boundary'' modular Hamiltonians and relative entropies; the manipulations are similar to those for subsystem codes, and the result is that if we define modular Hamiltonians\footnote{See eqs. \eqref{rhoM4}, \eqref{orents} for motivation for this definition of $\wt{K}^\rho_M$.  I should really call it $\wt{\hat{K}}^\rho_M$, but the notational baggage is already getting ridiculous so I'll desist!}  
\begin{align}
\wt{K}^\rho_A&\equiv -\log \wt{\rho}_A\\
\wt{K}^\rho_{\Ab}&\equiv -\log \wt{\rho}_{\Ab}\\
\wt{K}^\rho_M&\equiv -\oplus_\alpha \left(\log(p_\alpha \wt{\rho}_{a_\alpha})\otimes I_{\ab_\alpha}\right)\\
\wt{K}^\rho_{M'}&\equiv -\oplus_\alpha \left(I_{a_\alpha}\otimes\log(p_\alpha \wt{\rho}_{\ab_\alpha})\right),
\end{align}
then we have
\begin{align}\label{ojlms}
\Pc \wt{K}^\rho_A\Pc&=\wt{K}^\rho_M+\LA\\\label{ojlms1}
\Pc \wt{K}^\rho_{\Ab} \Pc&=\wt{K}^\rho_{M'}+\LA\\\label{ojlms2}
S(\wt{\rho}_A|\wt{\sigma}_A)&=S(\wt{\rho}|\wt{\sigma},M)\\
S(\wt{\rho}_{\Ab}|\wt{\sigma}_{\Ab})&=S(\wt{\rho}|\wt{\sigma},M').\label{ojlms3}
\end{align}
Here the algebraic relative entropy $S(\wt{\rho}|\wt{\sigma},M)$ is defined by \eqref{rent}.  These are algebraic versions of the results of \cite{Jafferis:2015del}.

\subsection{An algebraic reconstruction theorem}
Before concluding, I will quickly point out that the reconstruction theorem of \cite{Dong:2016eik} can easily be extended to subalgebra codes with complementary recovery.  There it was shown that if $\Hh=\HA\otimes\HAb$, with $M$ is a factor algebra on $\Hc$, then the RT formulae \eqref{oflm1}, \eqref{oflm2} imply condition (3) of theorem \ref{othm}, and thus condition (2) (subregion duality in the entanglement wedge).  The argument goes through almost unmodified for general $M$, so I will proceed quickly.  

We first observe that there is an algebraic version of the  ``entanglement first law'', relating the modular Hamiltonian $\wt{K}^\rho_M$ and the algebraic entropy $S(\wt{\rho},M)$:
\be
S(\wt{\rho}+\delta\wt{\rho},M)=\Tr \left(\delta\wt{\rho}\wt{K}^\rho_M\right)+O\left(\delta\wt{\rho}^2\right).
\ee
Equating the linear terms on both sides of \eqref{oflm1} in a variation $\delta \wt{\rho}$ about a state $\wt{\sigma}$, we find
\be
\Tr\left(\delta \wt{\rho}_A\wt{K}^\sigma_A\right)=\Tr\left(\delta \wt{\rho}\left(\wt{K}_M^\sigma+\LA\right)\right).
\ee
Both sides of this equation are linear in $\delta\wt{\rho}$, so we can integrate to find
\be
\Tr\left(\wt{\rho}_A\wt{K}^\sigma_A\right)=\Tr\left(\wt{\rho}\left(\wt{K}_M^\sigma+\LA\right)\right).
\ee
This then implies equations \eqref{ojlms}, \eqref{ojlms2}, and an analogous argument for $\Ab$ implies equations \eqref{ojlms1}, \eqref{ojlms3}.

Now we will show condition (3), and its complementary version for $M'$, follow from \eqref{ojlms2}, \eqref{ojlms3}.  Consider a state $|\wt{\psi}\ran \in \Hc$, and operator $X_{\Ab}$ on $\HAb$, and an operator $\wt{O}\in M$.  Without loss of generality we can take $\wt{O}$ to be hermitian.  Now consider the quantity 
\be
\lan \wt{\psi}|e^{-i\lambda \wt{O}} X_{\Ab} e^{i\lambda \wt{O}}|\wt{\psi}\ran=\lan \wt{\psi}|e^{-i\lambda \wt{O}} \Pc X_{\Ab} \Pc e^{i\lambda \wt{O}}|\wt{\psi}\ran.
\ee
We will show that this is independent of $\lambda$, so in particular its linear variation with $\lambda$, proportional to $\lan \wt{\psi}|[\Pc X_{\Ab}\Pc,\wt{O}]|\wt{\psi}\ran$, must vanish for any $|\wt{\psi}\ran$.  This then implies condition (3) from theorem \ref{othm}.  Indeed notice that the states 
\be\label{thing}
|\wt{\psi}(\lambda)\ran\equiv e^{i\lambda \wt{O}}|\wt{\psi}\ran
\ee
have the property that the expectation value $\lan \wt{\psi}(\lambda)|\wt{O}'|\wt{\psi}(\lambda)\ran$ is independent of $\lambda$ for any $\wt{O}'\in M'$.  As explained below equation \eqref{rent}, this means that $S(\wt{\psi}(\lambda)|\wt{\psi}(\lambda'),M')=0$ for any $\lambda, \lambda'$.  From \eqref{ojlms3}, this then implies that $\Tr_A|\wt{\psi}(\lambda)|\ran\lan\wt{\psi}(\lambda)|$ is also independent of $\lambda$, which then implies the $\lambda$-independence of \eqref{thing}.  We can apply an identical argument exchanging $A\leftrightarrow\Ab$, $M \leftrightarrow M'$, so thus condition (3) holds in both cases and we thus have a subalgebra code with complementary recovery.

Combining this argument with theorem \ref{othm},  the RT formulae \eqref{oflm1}, \eqref{oflm2}, and the relative entropy results \eqref{ojlms2}, \eqref{ojlms3}, we at last arrive at the general reconstruction theorem \ref{bigrthm} quoted in the introduction. To review, the logic of the full proof is that subregion duality $\Rightarrow$ RT $\Rightarrow$ relative entropy equivalence $\Rightarrow$ subregion duality.

\section{Discussion}\label{discussion}
Having established the main technical results, we'll now see what they imply for the AdS/CFT correspondence.  

\subsection{Central elements and gauge constraints}\label{gaugesec}
\bfig
\includegraphics[height=1.5cm]{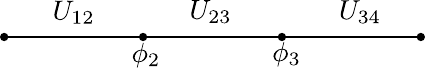}
\caption{Scalar lattice QED in 1+1 dimensions.  Each spatial link gets an element of $U(1)$, and each internal site gets a complex scalar.}\label{latticefig}
\efig
I'll first consider implications of the observation that the area operator $\LA$ must be in the center of the algebra $M$ associated to the entanglement wedge $\EA$.  We've seen that the presence of a nontrivial central operator indicates that $M$ is not a factor on the code subspace, which in bulk effective field theory is closely related to the presence of gauge symmetry \cite{Donnelly:2011hn,Casini:2013rba,Harlow:2014yoa,Radicevic:2015sza,Donnelly:2014fua,Donnelly:2015hxa,Harlow:2015lma,Donnelly:2015hta,Ma:2015xes,Soni:2015yga,Donnelly:2016auv,Donnelly:2016rvo}.  An easy way to illustrate this is in lattice scalar QED in $1+1$ dimensions, which we can study on four lattice sites arranged in a line.  The degrees of freedom are illustrated in figure \ref{latticefig}.  They have gauge transformations
\begin{align}
U_{i,i+1}'&=V_{i+1}U_{i,i+1} V_{i}^\dagger\\
\phi_i'&=V_i\phi_i,
\end{align}
and I'll impose boundary conditions where $V_1=V_4=1$ and $\phi_1=\phi_4=0$.  Gauge-invariant operators include
\begin{align}\nonumber
W&\equiv U_{12} U_{23}U_{34}\\\nonumber
E_{i,i+1}&\equiv - U_{i,i+1}\frac{\partial}{\partial U_{i,i+1}}\\\nonumber
\overleftarrow{\phi}_2&\equiv U_{12}^\dagger \phi_2\\
\overrightarrow{\phi}_3&\equiv \phi_3 U_{34}\\\nonumber
\overleftarrow{\pi}_2&\equiv U_{12} \pi_2\\\nonumber
\overrightarrow{\pi}_3&\equiv \pi_3 U_{34}^\dagger\\\nonumber
\rho_i&\equiv \frac{\partial}{\partial \phi_i} \phi_i-\phi_i^\dagger \frac{\partial}{\partial \phi_i^\dagger},
\end{align}
and the Gauss constraint can be written 
\be\label{gauss2}
E_{i,i+1}-E_{i-1,i}=\rho_i.
\ee
We can define an algebra $M_L$ of operators to the left of the link between sites two and three, which is generated by $\overleftarrow{\phi}_2$, $\overleftarrow{\pi}_2$, and $E_{12}$.  Its commutant $M_R\equiv M_L'$ is generated by $\overrightarrow{\phi}_3$, $\overrightarrow{\pi}_3$, and $E_{34}$.  $M_L$ has nontrivial center, since by the Gauss constraint we have $E_{23}=E_{12}+\rho_2=E_{34}-\rho_3$.  $E_{23}$ indeed is nontrivial, for example it doesn't commute with $W$, and in this example together with the identity it generates the entire center.  

Since $M_L$ has nontrivial center, if we wish to define the entropy of a state $\rho$ on $M$, we need to use eq. \eqref{entform}  \cite{Casini:2013rba}.  Indeed in \cite{Donnelly:2014fua,Donnelly:2015hxa} it was explained how correctly including this central contribution to the entropy from the electric fluxes through the entangling surface resolves an old discrepancy \cite{Kabat:1995eq} between replica-trick and direct Hilbert space calculations of the entropy of a region in Maxwell theory. In \cite{Donnelly:2014fua,Donnelly:2015hxa} these central electric degrees of freedom were called ``edge modes''.  

\bfig
\includegraphics[height=5cm]{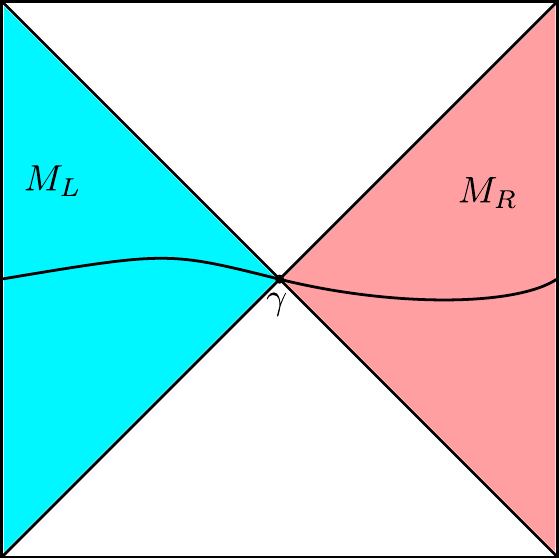}
\caption{An algebraic decomposition of the AdS-Schwarzschild geometry.  $M_L$ lives in the blue region, $M_R=M_L'$ lives in the red region, and the center corresponds to edge modes on the bifurcation surface $\gamma$.}\label{twoside}
\efig
Edge modes are especially interesting in the context of black holes and wormholes.  Indeed in \cite{Harlow:2014yoa}, the four-site QED example was used as a toy model of the maximally extended AdS-Schwarzschild geometry, as indicated in figure \ref{twoside}.  We can think of the algebra $M_L$ as corresponding to the degrees of freedom in the left exterior, and the degrees of freedom in $M_R=M_L'$ as living in the right exterior.  The edge modes live on the bifurcation surface $\gamma$, and correspond to integrating the normal electric field against an arbitrary function on that surface.  In this context these modes (and their gravitational counterparts) have recently been called ``soft hair'', by analogy with the asymptotic charges defined at spatial (or null) infinity \cite{Hawking:2016msc}.  This analogy can be misleading if taken too seriously, for example for AdS-Schwarzschild in greater than three spacetime dimensions, the asymptotic symmetry group is just the finite-dimensional conformal group (perhaps enhanced by a compact internal symmetry group such as $U(1)$), but a full set of horizon edge modes still exists.\footnote{Even in asymptotically-Minkowski situations, where there is a infinite-dimensional BMS group, most of the asymptotic charges are not involved in describing the process of black hole formation and evaporation, since they represent arbitrarily infrared excitations far away from the black hole.  During the black hole evaporation process, the amount of entropy produced per Schwarzschild time by the Hawking process is finite even in the limit $G\to 0$, while no gravitational asymptotic charges are excited in this limit since backreaction can be neglected.  So although the conservation of these asymptotic charges leads to some correlation in the Hawking radiation at finite $G$, it seems to be parametrically less than the amount which would be needed to purify the radiation.  For the simplest center-of-mass charges, where the correlation arises because the recoil of the black hole from emitting early radiation affects where it will be when it emits later radiation, this point was already made in \cite{Page:1979tc}.  Moreover even that correlation which is introduced does not seem like it should depend on the initial state of the black hole, so it is unclear to what extent this mechanism could restore information conservation even if it somehow restored purity of the final state.  By contrast the number of independent edge modes will be of order the horizon area in Planck units, although as we now discuss the precise number will be cutoff-dependent and cannot be computed within effective field theory.}

\bfig
\includegraphics[height=5cm]{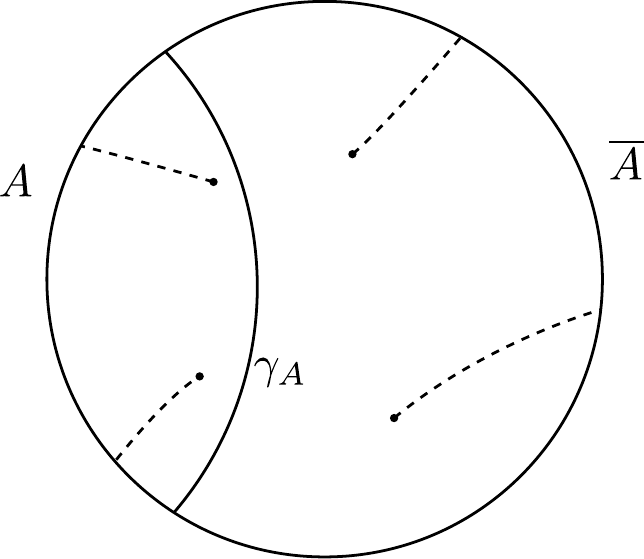}
\caption{Gravitational dressing in AdS.  Truly local operators do not exist in gravitational theories, but we can define pseudo-local operators by shooting geodesics from the boundary \cite{Heemskerk:2012np,Kabat:2013wga,Almheiri:2014lwa,Donnelly:2015hta,Donnelly:2015taa,Donnelly:2016rvo}. These operators will commute to all orders in perturbation theory with operators from which their entire geodesics are spacelike separated \cite{Almheiri:2014lwa,Donnelly:2015hta}, so provided that their geodesics lie entirely in $\mathcal{E}_A$ or $\mathcal{E}_{\Ab}$ then they will be in $M$ or $M'$ respectively, and they will commute with the area operator on the extremal surface (whose location is already defined gauge-invariantly without needing similar geodesics).  The area operator is thus in the center $M\cap M'$.}\label{dressing}
\efig
One important aspect of these edge modes is that any discussion of them is inherently UV-sensitive.  For example the gauge field could be emergent, in which case the true microscopic Hilbert space could still factorize.  In fact in \cite{Harlow:2015lma} it was pointed out that in the AdS/CFT correspondence, the microscopic description of the Hilbert space as two decoupled CFTs does indeed factorize, and this was used as evidence that we should think of any gauge fields in the bulk as emergent.  This conclusion is especially mysterious in the context of the RT formula, since the area operator is in the non-trivial center which arises because of bulk diffeomorphism invariance; it is the Noether charge of diffeomorphisms in the same way that the integrated electric flux is for electromagnetism \cite{Iyer:1994ys}.  I illustrate the central nature of the area operator in figure \ref{dressing}.  Since the factorization argument of \cite{Harlow:2015lma} implies that the gravitational constraints cannot really be viewed as holding in all states, gravity itself must also be emergent in a way that allows the Hilbert space to factorize.  So how can the RT formula hold with a nontrivial area operator if in fact the bulk algebra factorizes?  

The answer is that by working in a code subspace, we have chosen to restrict to states where the physics in the vicinity of $\gamma$ is described by bulk effective field theory.  In such states the microscropic degrees of freedom from which gravity emerges are fixed to be in a definite state, corresponding to the injection of $|\chi\ran$ in figure \ref{circuit} (or really in some combination of a small number of states given by the $|\chi_\alpha\ran$'s).  In the electromagnetic case we can have a situation where the gauge field emerges within effective field theory, such as the $\mathbb{CP}^{N-1}$ model considered in \cite{Harlow:2015lma}.  We may then extend the code subspace to include the fundamental charges from which the gauge field emerges, in which case the gauge-constraints become energetic rather than fundamental, so they do not pose any challenge for factorization.  It does not seem possible however for gravity to emerge within effective field theory \cite{Weinberg:1980kq,Marolf:2014yga}, so a code subspace that preserves gravitational effective field theory will never really be able to factorize, and we will always thus be able to have a nontrivial area operator.  

It is interesting to speculate about states outside of the code subspace, where the degrees of freedom from which the graviton emerges are liberated on either side of $\gamma$.  This sounds like a mechanism for making a firewall \cite{Almheiri:2012rt,Almheiri:2013hfa,Marolf:2013dba}, but note that this firewall would be at the edge of the entanglement wedge, \textit{not} at the horizon.  In general the entanglement wedge extends beyond the horizon \cite{Wall:2012uf,Headrick:2014cta}, and perhaps it usually goes far enough inside that its edge is not visible to infalling observers.  This would be a new kind of ``quantum cosmic censorship'', in which firewalls are generically present, but are typically far enough behind the horizon to be harmless.  Alternatively perhaps the entanglement wedge typically coincides with the causal wedge: if so, then firewalls are most likely here to stay. 

In any case, including all of the UV degrees of freedom in the code subspace just amounts to studying the full Hilbert space of the two CFTs, so the entropy of either side should just correspond to the bulk entropy on that side; the area term has disappeared. This is the ultimate realization of the standard observation that the separation of the right-hand-side of the RT formula into two terms is cutoff-dependent \cite{Solodukhin:2011gn}, or in our language code subspace-dependent.  In this limit the edge modes have fully dissolved into their microscopic constituents, which are finite in number due to the UV regulator provided by the CFT.  I'll say more about this in my discussion of the homology constraint below.

\subsection{Bit threads and multipartite entanglement}
Let's now consider in more detail the boundary interpretation of the RT formula suggested by fig. \ref{circuit}, or equivalently eq. \eqref{sstate} (or its algebraic generalization \eqref{ocr}).  From fig. \ref{circuit}, we see that for subsystem codes there is a flow of information from $A$ to $\Ab$, passing through the entangled state $|\chi\ran$.  The ``flux'' of this flow, given by the amount of entanglement in $|\chi\ran$, gives an irreducible contribution to the entanglement between $A$ and $\Ab$ for every state in the code subspace.  This contribution is quantified by the area terms in the RT formulae \eqref{sflm1}, \eqref{sflm2}.  For general subalgebra codes with complementary recovery this statement still basically holds, but we need to average over the center distribution $p_\alpha$ since there are multiple $|\chi_\alpha\ran$'s.

In fact the idea of interpreting the area piece of the RT formula via some kind of flow equations has appeared several times in the recent literature.  In \cite{Pastawski:2015qua} the max-flow, min-cut theorem was used to prove the RT formula in some tensor network models of holography, basically by manipulating the tensor network to extract fig. \ref{circuit}, although for simplicity the case with no bulk inputs (``holographic states'' as opposed to the ``holographic codes'' considered here) was considered.  In \cite{Freedman:2016zud}, a beautiful bulk rephrasing of the continuum RT formula was given which makes the connection to information flows essentially manifest.  In the remainder of this subsection I will explain in more detail the connection between fig. \ref{circuit} and the proposal of \cite{Freedman:2016zud}.

\bfig
\includegraphics[height=4cm]{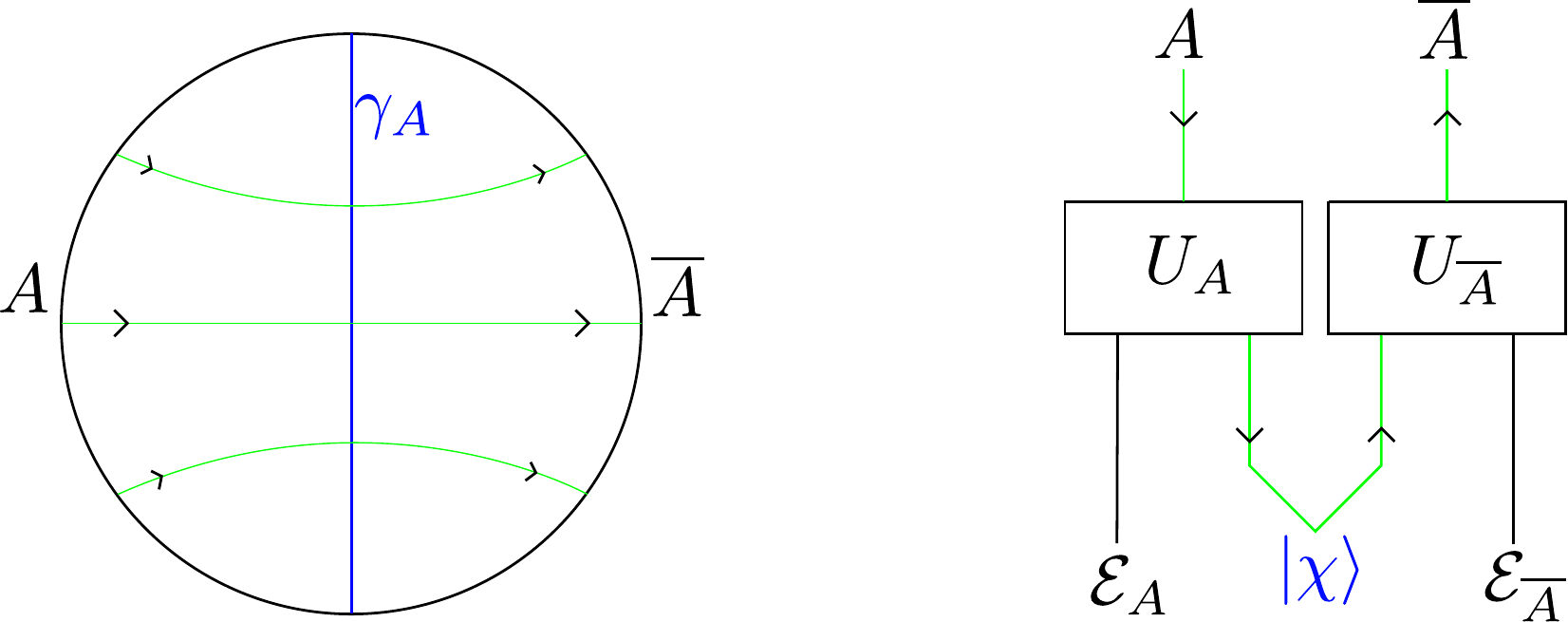}
\caption{Freedman-Headrick threads and the circuit interpretation of the RT formula.  The threads are shown in green in the left diagram; they are chosen to maximize the flux through $A$, and this maximal flux, determined by the bottleneck at $\gamma_A$, gives the entropy $S(\wt{\rho}_A)$.  In the circuit diagram these threads correspond to the information flux through the state $|\chi\ran$ that appears in eq. \eqref{sstate}.  We can thus interpret the Freedman-Headrick proposal as routing the circuit diagram through the bulk.}\label{threads}
\efig
The idea of \cite{Freedman:2016zud} is to consider smooth spatial vector fields $v(x)$ at a moment of time-reflection symmetry of the bulk,\footnote{There is also a covariant version of this proposal, which does not require this symmetry and that works in more or less in the same way \cite{headrickhubeny}.} which are divergenceless and have unit-bounded norm:
\begin{align}
\nabla\cdot v&=0\\
v\cdot v&\leq 1.
\end{align}
We then look for a $v(x)$ which maximizes the flux $\int_A *v$.  Naively it may seem like we could simply arrange the maximal flux to be given by the area of $A$, but this is not the case.  The reason is that $\int_A *v=\int_\gamma *v$, where $\gamma$ is any (spacetime codimension two) surface in the bulk which is homologous to $A$, and it might well be that the area of $\gamma$ is less than that of $A$.  Indeed we can at best arrange for $\int_A *v=\int_{\gamma_A}*v$, where $\gamma_A$ is the minimal-area surface homologous to $A$, and in fact a continuous version of max-flow, min-cut ensures that we can attain this for some $v(x)$ \cite{Freedman:2016zud}.  The proposal is then that we re-interpret the RT formula as saying that\footnote{For now we are assuming that the bulk entropy piece is subleading in $G$ and can be neglected.}
\be
S(\wt{\rho}_A)=\frac{1}{4G}\mathrm{Max}_v \int_A *v.
\ee
The flow lines of a $v(x)$ which attains this maximum are interpreted as giving a density of ``bit threads'', which graphically illustrate the entanglement between $A$ and $\Ab$.  What we learn from figure \ref{circuit} is that this is more than an analogy, it is \textit{actually} how the RT formula is realized from the boundary point of view. I indicate this in figure \ref{threads}. Finding a maximal $v(x)$ corresponds to applying unitaries to $A$ and $\Ab$ to distill the maximal amount of entanglement between $A$ and $\Ab$.  It may seem that a bit thread configuration $v(x)$ contains more information than fig. \ref{circuit}, but the various conditions imposed on $v(x)$, together with the large non-uniqueness of the maximal $v(x)$, mean that the essential information is the same.

\bfig
\includegraphics[height=4cm]{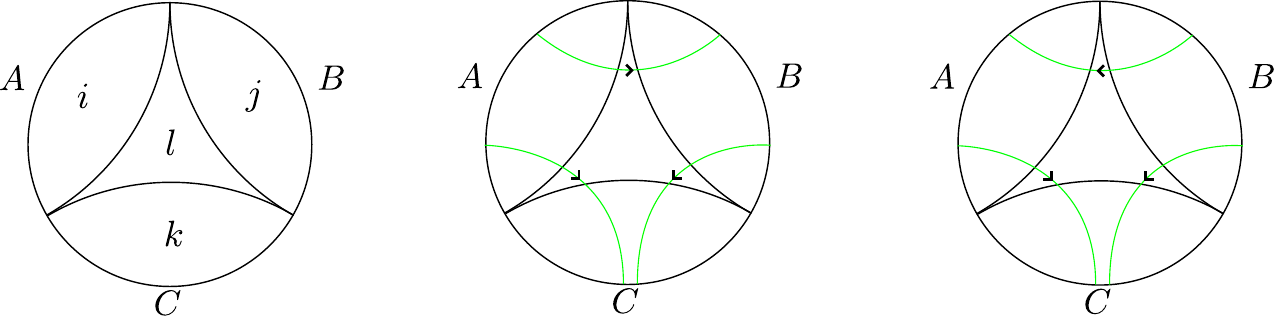}
\caption{Tripartite entanglement.  On the left, I indicate the locations of the bulk degrees of freedom from eq. \eqref{triple}.  In the center I draw threads  $v_{A,B}$ that simultaneously maximize the flux through $A$ and $AB$, while on the right I draw threads $v_{B,A}$ that instead maximize the flux through $B$ and $AB$.}\label{mp}
\efig
So far I have focused on bipartite entanglement between $A$ and $\Ab$, but it is also interesting to consider multipartite decompositions, such as the one shown in the right diagram of figure \ref{subregions}.  For simplicity I will only consider the subsystem code case, where we take the bulk algebra to factorize into different spatial regions.  For the tripartite decomposition of figure \ref{subregions} there are four interesting bulk regions, labeled in the left diagram of figure \ref{mp}. $|\wt{i}\ran$, $|\wt{j}\ran$, and $|\wt{k}\ran$ denote complete bases for the bulk degrees of freedom in $\mathcal{E}_A$, $\mathcal{E}_B$, and $\mathcal{E}_C$, and $|\wt{l}\ran$ is a complete basis for the remaining bulk degrees of freedom, which are simultaneously in $\mathcal{E}_{AB}$, $\mathcal{E}_{AC}$, and $\mathcal{E}_{BC}$.  If we assume entanglement wedge reconstruction holds for all entanglement wedges, then an argument similar to that for \eqref{sstate} tells us that we must have decompositions $\HA=\Hh_{A_1}\otimes \Hh_{A_2}\oplus\Hh_{A_3}$, $\Hh_B=\Hh_{B_1}\otimes \Hh_{B_2}\oplus\Hh_{B_3}$, and $\Hh_C=\Hh_{C_1}\otimes \Hh_{C_2}\oplus\Hh_{C_3}$, unitaries $U_A$, $U_B$, and $U_C$, and a set of orthonormal states $|\chi_l\ran_{A_2B_2C_2}$ such that
\be\label{triple}
|\wt{ijkl}\ran=U_A U_B U_C \Big(|i\ran_{A_1}|j\ran_{B_1}|k\ran_{C_1} |\chi_l\ran_{A_2B_2C_2}\Big).
\ee
Moreover the states $|\chi_l\ran$ must define a code subspace of $\Hh_{A_2B_2C_2}$ which gives a conventional quantum error correcting code that can recover the $l$- information on any two of $A_2$, $B_2$, or $C_2$.  Eq. \eqref{triple} is the tripartite version of figure \ref{circuit}.  To compare with the bit threads of Freedman and Headrick, we will again assume that our code subspace is small enough that the leading-order pieces of the boundary von Neumann entropies comes from a fixed state $|\chi\ran_{A_2B_2C_2}\ran$, which remains after we have decoded $l$ onto a subfactor of our choice.  Since $A_1$, $B_1$, and $C_1$ contribute only subleadingly to the entropies, for the rest of this section I will ignore them typographically and just consider the entanglement structure of a single tripartite state $|\chi\ran_{ABC}$.  

When $|\chi\ran$ was a state in a bipartite Hilbert space, it was easy to classify its entanglement structure by way of the Schmidt decomposition. Indeed for any state $|\chi\ran_{A\Ab}$ there are orthornomal states $|n\ran_A$, $|n\ran_{\Ab}$ such that
\be
|\chi\ran_{AB}=\sum_n \sqrt{p_n} |n\ran_A |n\ran_{\Ab},
\ee
with  $p_n>0$ and $\sum_n p_n=1$.  The bit threads simply run from $A$ to $\Ab$, with a flux given by $S(\chi_A)=S(\chi_{\Ab})=-\sum_n p_n \log p_n$.  Unfortunately there is no tripartite version of the Schmidt decomposition, so we need to do something less precise.  We can begin by Schmidt decomposing $|\chi\ran$ into $AB$ and $C$, again with with $S(\chi_{AB})=S(\chi_{C})$, but now we need to make sense of the mixed state $\chi_{AB}$.  Freedman and Headrick showed that it is possible to find a set of threads $v_{A,B}$ that simultaneously maximize the flux through $A$ and $AB$, a set of threads $v_{B,A}$ that simultaneously maximize the flux through $B$ and $AB$, but that it is not in general possible to maximize the flux through $A$, $B$, and $AB$ simultaneously.  They then characterized the multipartite entanglement of $\chi_{AB}$ by how the threads move as we switch from $v_{A,B}$ and $v_{B,A}$.  They argued that threads from $A$ to $B$ which switch direction correspond to bipartite entanglement between $A$ and $B$, that threads from $A$ (or $B$) to $C$ which do not move correspond to bipartite entanglement between $A$ (or $B$) and $C$, and that threads from $A$ to $C$ which switch to threads from $B$ to $C$ correspond to GHZ-type entanglement between $A$, $B$, and $C$.   The first two cases are illustrated in the center and right diagrams of figure \ref{triple}.  

Unfortunately it is not true that an arbitrary state on $ABC$ can be written up to unitaries on $A$, $B$, and $C$ as a tensor product of GHZ and bipartite states.  For example in the three qubit system, the state 
\be
|\phi\ran_{ABC}=\frac{1}{\sqrt{2}}|000\ran+\frac{1}{2}|101\ran+\frac{1}{2}|011\ran
\ee
cannot be factorized into a bipartite entangled state on two qubits and a pure state on a third, and it is not GHZ since $S(\psi_A)\neq S(\psi_{AB})$.  In general the full entanglement structure of the state $|\chi\ran_{ABC}$ will be more sophisticated than what can be captured just by the thread picture.  Nonetheless the threads $v_{A,B}$ and $v_{B,A}$ do exist, so they have to mean something.  I propose that we can interpret them as representing the fact that for any state $|\chi\ran_{ABC}$, we can find a pure state $|\psi\ran_{ABC}$ which \textit{is} just a tensor product of bipartite states between the various factors, and whose von Neumann entropies on $A$, $B$, and $C$ agree with those of $|\chi\ran_{ABC}$.  This state will \textit{not} in general obey $|\psi\ran_{ABC}=|\chi\ran_{ABC}$ up to unitaries on $A$, $B$, and $C$, but we will just have to live with that.  To see that such a state always exists, note that if we have
\be
|\psi\ran_{ABC}=|\psi^{AB}\ran_{A_1B_1}\otimes |\psi^{AC}\ran_{A_2 C_1}|\psi^{BC}\ran_{B_2 C_2},
\ee
where we have split $A$, $B$, $C$ into factors $A_1$, $A_2$, etc, then we can choose these factor states so that
\begin{align}
S(\Tr_{B_1}\psi^{AB})&=\frac{1}{2}\left(S(\chi_{A})+S(\chi_{B})-S(\chi_{AB})\right)\\
S(\Tr_{C_1}\psi^{AC})&=\frac{1}{2}\left(S(\chi_{A})-S(\chi_{B})+S(\chi_{AB})\right)\\
S(\Tr_{C_2}\psi^{BC})&=\frac{1}{2}\left(-S(\chi_{A})+S(\chi_{B})+S(\chi_{AB})\right).
\end{align}
These entropies are positive by the positivity of mutual information $S_A+S_B-S_{AB}\geq 0$ and the Araki-Lieb inequality $|S_A-S_B|\leq S_{AB}$, so we can always find states that attain them.\footnote{If the Hilbert spaces are finite-dimensional there may not be enough room in $A$, $B$, $C$ to make these choices, but in AdS/CFT the relevant Hilbert spaces are infinite-dimensional so there is always enough room.}  I thus claim that we should view the bit threads for $|\chi\ran_{ABC}$ as representing the bipartite entanglement in $|\psi\ran_{ABC}$, with the directions set by whether we are considering $v_{A,B}$, $v_{B,A}$, $v_{A,C}$, etc.  This proposal does not seem totally satisfactory, for example the state $|\psi\ran_{ABC}$ will not necessarily compute the correct Renyi entropies for the various regions, but then we don't know how to compute those from the threads either.

Interestingly we did not need to use GHZ-type states in $|\psi\ran_{ABC}$, although they do have a thread description.  Perhaps considering more regions will require them.  Since we know that the thread prescription is equivalent to the area term of the RT formula, once we consider four regions the entropies will obey inequalities such as the monogamy of mutual information that are not actually true for general quantum states \cite{Hayden:2011ag,Bao:2015bfa}, so at that point we will start seeing restrictions on which entropies can be represented by threads.   

Figure \ref{circuit} and expressions \eqref{sstate}, \eqref{triple} should make it straightforward to extend the Freedman-Headrick picture to include the bulk-entropy piece of the RT formula, but I won't work this out here.

\subsection{Linearity and the homology constraint}
The original Ryu-Takayanagi formula \cite{Ryu:2006bv}, \cite{Hubeny:2007xt,Lewkowycz:2013nqa} did not contain the bulk entropy term in \eqref{flm}, it simply said that 
\be\label{ort}
S(\wt{\rho}_A)=\Tr \wt{\rho}\LA,
\ee
with 
\be
\LA=\frac{\mathrm{Area}(\gamma_A)}{4G}.
\ee
Here $\gamma_A$ is an extremal-area codimension-two surface homologous to $A$, where homologous means that $A\cup \gamma_A=\partial \Xi$, with $\Xi$ some codimension-one spacelike submanifold with boundary in the bulk \cite{Headrick:2007km,Haehl:2014zoa}.  If there is more than one such $\gamma_A$, we choose the one of minimal area.  It is immediately clear that there can be no code subspace where eq. \eqref{ort} holds precisely for arbitrary $\wt{\rho}$, since the right hand side is linear in $\wt{\rho}$ but the left hand side is not.  As far as I know this issue was first discussed in detail in \cite{Papadodimas:2015jra}, where it was used as justification for more general violations of the linearity of quantum mechanics in a proposed description of the interior of black holes (see also \cite{Harlow:2014yoa,Marolf:2015dia} for more on this proposal, and \cite{Raju:2016vsu} for an attempt to reconcile it with quantum mechanics).  Quite recently \cite{Almheiri:2016blp} appeared, which extensively explored the nonlinearity of \eqref{ort}, and in particular which gave two explicit situations where it leads to a breakdown of eq. \eqref{ort}.  In this section I will argue that, once the bulk entropy term is restored to \eqref{ort}, as in \eqref{flm}, then there no longer need be any tension with the RT formula holding throughout a code subspace.  Indeed this must have been the case, since throughout the paper we have discussed examples, such as the three qutrit code or the tensor networks of \cite{Pastawski:2015qua,Hayden:2016cfa}, where the RT formula provably holds in a nontrivial subspace.  

I'll first consider the behavior of the RT formula in admixtures of states with distinct classical geometries \cite{Papadodimas:2015jra,Almheiri:2016blp}:
\be\label{adstate}
\wt{\rho}=\sum_i p_i \wt{\rho}_i.
\ee
The idea is that the $\wt{\rho}_i$'s here are coherent states of $\LA$, with only exponentially small overlaps.  We then have
\be\label{mixent}
S(\wt{\rho}_A)=-\Tr\left(\sum_i p_i \wt{\rho}_i\log\left(\sum_i p_i \wt{\rho}_i\right)\right)\approx -\sum_i p_i \log p_i+\sum_i p_i S\left(\wt{\rho}_{i,A}\right).
\ee
The first term on the right hand side is called \textit{entropy of mixing}, and it is a manifestation of the nonlinearity of the entropy.  In particular it would not arise if \eqref{ort} applied for all the $\wt{\rho}_i$ as well as $\wt{\rho}$.  In \cite{Papadodimas:2015jra,Almheiri:2016blp} it was argued that, since this term is subleading in $G$, if we do not consider exponentially many $\wt{\rho}_i$'s, it does not really pose a challenge to \eqref{ort}.  But I'll now argue that something better is true: this term is actually \textit{accounted for} by the bulk entropy term in the full RT formula \eqref{flm}.  The argument is easy: states with different values for $\LA$ necessarily lie in different blocks of the central decomposition \eqref{hilbertop}.  So the $p_i$'s in \eqref{adstate} are a subset of the $p_\alpha$'s in \eqref{prho}, and the entropy of mixing then obviously arises from the ``classical'' term in the bulk algebraic entropy \eqref{entform}.  The ``quantum'' term in \eqref{entform} accounts for the bulk contributions to the entropies in the second term of the right hand side of \eqref{mixent}, and the area terms match trivially by linearity.  So entropy of mixing is no obstruction to the RT formula \eqref{flm} holding exactly in a subspace that includes states with classically different geometries.

\bfig
\includegraphics[height=3cm]{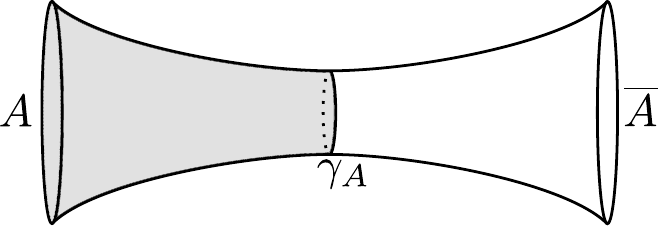}
\caption{Homology and the RT formula for the thermofield double state.  Insisting that $\gamma_A$ is homologous to $A$ prevents us from taking $\gamma_A$ to be empty, as would otherwise be allowed since $\partial A=0$.  We instead must take $\gamma_A$ to be the birfucation point. This diagram represents a time-reversal-symmetric slice of the geometry of figure \ref{twoside}.  The surface $\Xi$ is shaded grey.}\label{wormhole}
\efig
In \cite{Almheiri:2016blp}, it was also pointed out that a more subtle problem in the validity of \eqref{ort} arises when we attempt to include black holes into the code subspace.  Let's first recall the standard story for how to think about the thermofield double state of two CFTs, 
\be\label{TFD}
|TFD\ran=\frac{1}{\sqrt{Z}}\sum_i e^{-\beta E_i/2}|i^*\ran_L|i\ran_R,
\ee
in the situation where $\beta$ is small enough that in the bulk we expect this to be described by the AdS-Schwarzschild geometry shown in figure \ref{twoside}.  We can take our region $A$ to be the entire left CFT, in which case we have the situation of figure \ref{wormhole}.  Since the left CFT is in the thermal state $\frac{1}{Z} e^{-\beta H}$, its entropy is nonzero; to leading order in $G$ it is given by $\frac{\mathrm{Area}(\gamma_A)}{4G}$.  This suggests that if we consider just a single CFT with a black hole in a thermal state, we should think of the surface $\gamma_A$ as being located at the horizon.  

The tension with linearity pointed out in \cite{Almheiri:2016blp} arises if we addionally consider a complete set of single-CFT black hole microstates $|i\ran$ in some energy band of sufficiently high energy that black holes are stable, which we can take to be energy eigenstates as in \eqref{TFD}.  By the eigenstate thermalization hypothesis, we expect that the geometry outside of the horizon of these states to be close to that of the AdS-Schwarzschild geometry, but in fact the von Neumann entropy of the CFT in any particular microstate will be zero since the state is pure.  So if we believed \eqref{ort} held in all microstates, then by linearity we would conclude that the area operator $\LA$, with $A$ taken to be the entire boundary, must be zero on the subspace of the Hilbert space spanned by these microstates, and thus that $\gamma_A$ must be empty.  But this would contradict the nonvanishing of this operator in the thermal state, which is an admixture of these microstates but where $\gamma_A$ lies on the horizon.  For this reason, the authors of \cite{Almheiri:2016blp} identified the homology constraint as the origin of the linearity problem in the RT formula, since it apparently applies in the mixed thermal state, but not in pure microstates.  

Indeed pure state black holes have always been somewhat awkward to fit into discussions of the RT formula \eqref{ort}.  The standard excuse is that if a pure state black hole is created by the formation of a shell of matter, then the homology constraint does not prevent us from sliding the surface $\gamma_A$ down under the collapse and then contracting it to zero size.  Unfortunately most pure microstates do not correspond to black holes that formed all at once, and without a general understanding of what the geometry behind their horizons is, application of the homology constraint is ambiguous at best.  Moreover what if the matter shell is mixed?  For example we could consider collapsing two entangled matter shells to form two entangled black holes in the TFD state \cite{Susskind:2014yaa}. Prior to the collapse, the conventional understanding of the RT formula for an entanglement wedge containing only one of the shells would include the entropy of its shell in the bulk entropy term, while after the collapse it would come from the area term.  Why should we treat this entropy differently before and after the collapse \cite{Susskind:2014yaa}?

\bfig
\includegraphics[height=6.5cm]{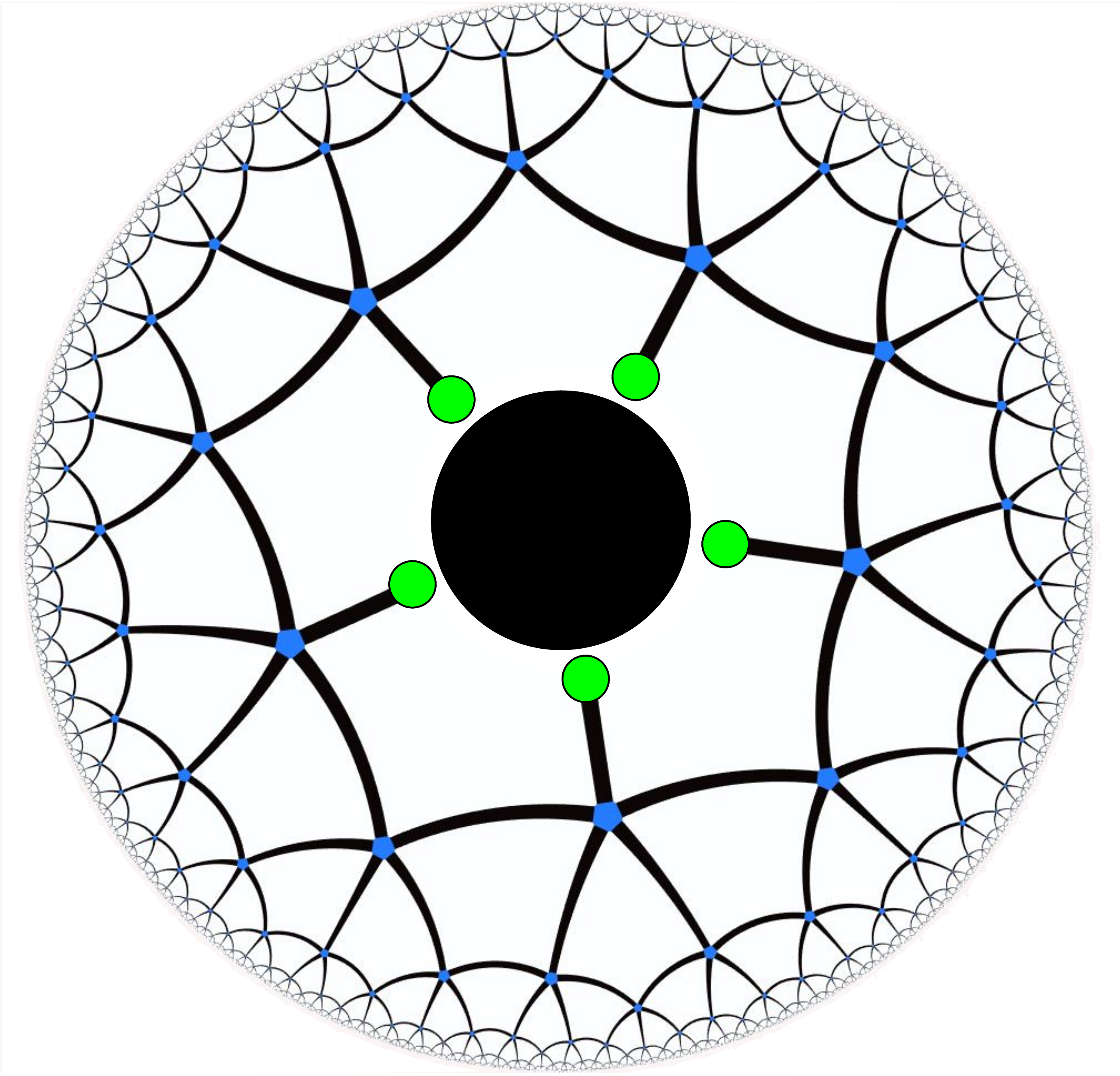}
\caption{A black hole in the tensor network model of \cite{Pastawski:2015qua}. The network gives an isometry from the green``microstate'' legs previously attached to the removed tensor(s), together with the ``bulk field'' legs attached to the blue tensors outside of the black hole, to the boundary ``CFT'' legs.}\label{blackhole2}
\efig
One possible resolution of all this would be to avoid considering a code subspace that contains all of the microstates in a fixed energy band, but this is somewhat unsatisfying, especially since in section six of \cite{Pastawski:2015qua} it was explained how subregion duality is possible in a tensor network model even if the code subspace includes all the microstates of a black hole of some fixed energy.  In that model, black hole microstates are produced by removing tensors from the network wherever the black holes are located, as illustrated in figure \ref{blackhole2}.  The isometric nature of the network implies that the entropy of the full boundary will be given by the entropy of whatever state is fed into the green microstate legs and the blue bulk field legs.  So apparently the RT formula \eqref{flm} still holds for arbitrary states fed into these legs, \textit{provided that we view the black hole entropy as contributing to the bulk entropy term rather than the area term}.  In the remainder of this section I will explore the consequences of this idea, which I claim removes any remaining tension between linearity the RT formula \eqref{flm}.

\bfig
\includegraphics[height=4.5cm]{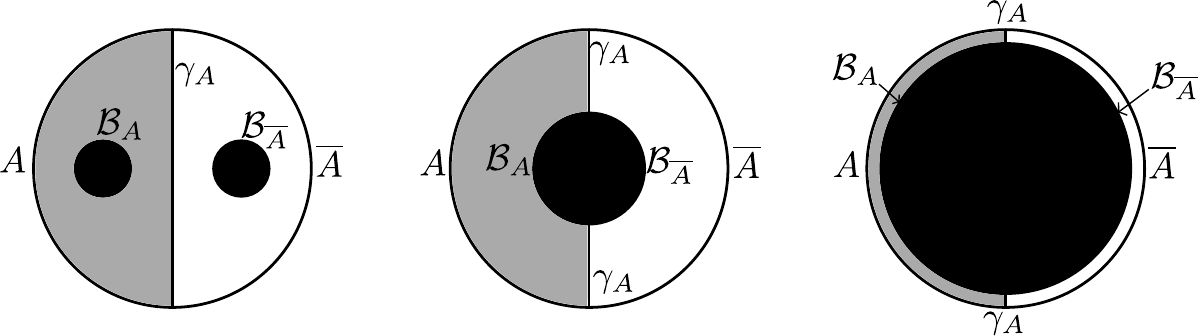}
\caption{Area and bulk terms in the RT formula according to proposition \ref{bhprop}, for various configurations of black holes.  On the left we have a black hole in each entanglement wedge, each of which contributes its von Neumann entropy to the bulk entropy term in the RT formulae for $S(\wt{\rho}_A)$ and $S(\wt{\rho}_{\Ab})$ respectively.  In the center we have a single black hole, whose degrees of freedom partly contribute to $S(\wt{\rho}_A)$ and partly contribute to $S(\wt{\rho}_{\Ab})$, again through the bulk entropy terms. On the right we take the limit where the black hole fills the entire space, in which case the area term is eventually removed entirely.  In each case the surface $\Xi$ is shaded grey, and black holes are black.}\label{bhfig}
\efig
Let's first recall that in subsection \ref{gaugesec}, we have already seen that the decomposition of the CFT entropy of a region into an area piece and a bulk entropy piece is UV-sensitive.  By enlarging the code subspace to allow more UV degrees of freedom to vary, we can move entropy from the area piece to the bulk entropy piece.  We can think of my proposal to view black hole entropy as bulk entropy in this context: sometimes the code subspace is small enough that we can get away with including black hole entropy in the area piece and applying the homology constraint (for example studying only small perturbations of the TFD), but sometimes we can't.  The rule which works in general is to always include it in the bulk entropy piece.  I thus offer the following proposition:\footnote{This proposition needs to be better-formulated to really apply in general time-dependent situations, but it will be good enough for my examples.  I also am not sure how to deal with changes of $\gamma_A$ which increase its area but decrease the horizon part of the entropy by more, one guess is that any intersections between $\gamma_A$ and $\mathcal{B}_A$ are located by extremizing the sum of the area and bulk entropy terms, as suggested by \cite{Engelhardt:2014gca}, but I'm not sure if this is correct.}
\begin{prop}\label{bhprop}
Say we are given a CFT subregion $A$.  The correct codimension-two surface $\gamma_A$ to use in the RT formula \eqref{flm} is an extremal-area surface such that $\partial \Xi=A \cup \gamma_A \cup \mathcal{B}_A$, with $\Xi$ a codimension-one submanifold with boundary, and $\mathcal{B}_A$ some codimension-two piece of any horizons that might be around.  The bulk entropy in the RT formula should then include a contribution from any effective field theory degrees of freedom in $\Xi$, as well as any horizon degrees of freedom in $\mathcal{B}_A$.  
\end{prop}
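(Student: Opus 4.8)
The plan is to establish the proposition not as an independent geometric statement but as a direct consequence of the operator-algebra RT formulae \eqref{oflm1}, \eqref{oflm2}, once the code subspace $\Hc$ is enlarged enough to resolve the relevant black hole microstates. First I would construct a subalgebra code with complementary recovery realizing the configurations of figure \ref{bhfig}: following the tensor-network model of figure \ref{blackhole2}, I take the bulk field legs supported in $\Xi$ together with the horizon ``microstate'' legs on $\mathcal{B}_A$ as the inputs to the encoding, and let the isometry defined by the network play the role of the map into the CFT. Because the microstate legs are genuine bulk inputs fed into the entanglement wedge, they enter the algebra $M$ associated to $\EA$ on exactly the same footing as the ordinary effective-field-theory degrees of freedom in $\Xi$.

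Next I would apply theorem \ref{othm} and the RT formula \eqref{oflm1} to this code. The area operator $\LA$ of \eqref{oarea} is controlled entirely by the entanglement of the fixed states $|\chi_\alpha\ran$ across the cut, and by construction these are localized on the smooth extremal surface of the geometry obtained after excising the removed tensors, i.e. on a $\gamma_A$ that is permitted to terminate on the horizon. Varying the state fed into the microstate legs leaves $|\chi_\alpha\ran$ untouched, so it cannot shift the area term; by \eqref{oflm1} any such variation must instead register in $S(\wt{\rho},M)$, the bulk entropy term. This identifies the bulk entropy as the sum of the effective-field-theory entropy in $\Xi$ and the horizon entropy on $\mathcal{B}_A$, which is the content of the proposition. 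The modified homology statement $\partial\Xi = A\cup\gamma_A\cup\mathcal{B}_A$ then follows because excising the black-hole tensors opens a new boundary component on which $\gamma_A$ may end, relaxing the rigid requirement that $\gamma_A$ be homologous to $A$ alone.

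Finally I would verify consistency with linearity, closing the loop with the preceding subsection. Since distinct horizon microstates, and more generally coherent states of $\LA$ carrying different classical geometries, lie in distinct blocks of the central decomposition \eqref{hilbertop}, their mixing is absorbed into the classical Shannon term of the algebraic entropy \eqref{entform}. A thermal black hole and a pure microstate are thereby assigned the same $\gamma_A$ and the same linear area operator $\LA$, with the entire difference residing in the bulk entropy term, so no tension with the linearity of $\Tr(\wt{\rho}\LA)$ survives.

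The hardest part will be making ``horizon degrees of freedom on $\mathcal{B}_A$'' algebraically precise and showing that they embed in $M$ compatibly with complementary recovery. The genuinely delicate case is the one flagged in the footnote, where $\gamma_A$ intersects $\mathcal{B}_A$: there one expects to have to locate the intersection by extremizing the sum of the area and bulk-entropy terms rather than the area alone, and to extend the entire construction to time-dependent horizons, for which even the statement of the proposition requires refinement.
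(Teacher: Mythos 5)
Your proposal is correct and follows essentially the same route as the paper: the paper's own justification of proposition \ref{bhprop} is precisely the tensor-network check of figure \ref{cutbh}, in which the network on either side of the cut is an isometry from the cut legs, the microstate legs, and the bulk field legs to the corresponding boundary subregion, so that the cut links supply the area term while the microstate and bulk-field legs feed the bulk entropy term — i.e.\ the horizon degrees of freedom on $\mathcal{B}_A$ are treated as bulk inputs on the same footing as the effective field theory in $\Xi$. Your additional framing via theorem \ref{othm} and the central decomposition, and your identification of the open issues (time dependence, and the case where $\gamma_A$ meets $\mathcal{B}_A$), match the paper's own caveats in the footnote to the proposition.
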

\bfig
\includegraphics[height=5.5cm]{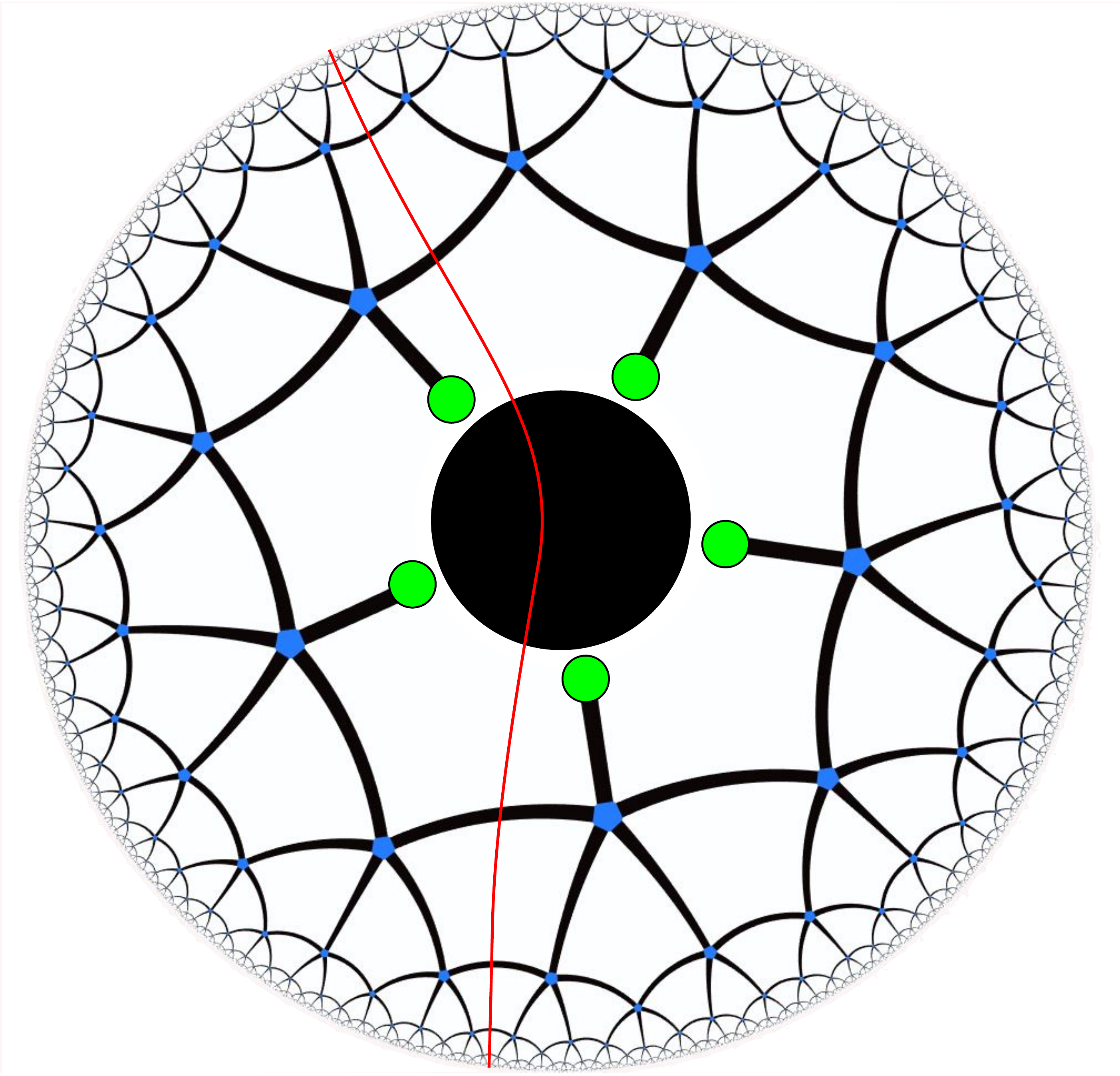}
\caption{A tensor network cut that divides black hole microstates. The area term in the RT formula comes from the links cut by the red line, while green microstate legs and blue bulk field legs each contribute to the bulk entropy term for their respective side.  The network on either side is an isometry from the cut legs, microstate legs, and the bulk field legs to the boundary subregion on that side of the cut}\label{cutbh}
\efig
We can think of $\mathcal{B_A}$ as the pieces of black hole horizon that lie within the entanglement wedge $\EA$.  Some examples illustrating this rule for arbitrary black hole microstates, pure or mixed, are given in figure \ref{bhfig}.  In each case, proposition \ref{bhprop} can be confirmed in the tensor network black holes of \cite{Pastawski:2015qua} (or an analogous construction using random tensors as in \cite{Hayden:2016cfa}): a concrete example is shown in figure \ref{cutbh}.
From figure \ref{bhfig}, it is clear that including the microstates of larger and larger the black holes allows fewer and fewer bulk operators to be encoded redundantly, and eventually we are just left with the full Hilbert space of the CFT and no remaining redundancy.  This is in keeping with the general picture of holography advocated in \cite{Almheiri:2014lwa}.

Although proposition \ref{bhprop} thus can explain the validity of the RT formula for rather permissive code subspaces, it has the downside that we have essentially removed the black hole interior from the discussion by fiat.  This is to be contrasted with the approach of \cite{Susskind:2014yaa}, which instead tries to move the bulk entropy contribution to the RT formula into the area piece, therefore geometrizing even the entanglement of the ordinary bulk fields via a kind of ``quantum homology constraint''.  This approach seems more natural from the point of view of ``ER=EPR'' \cite{Maldacena:2001kr,Swingle:2009bg,VanRaamsdonk:2010pw,Hartman:2013qma,Maldacena:2013xja}, but it seems like it cannot be consistent with linearity unless we consider only rather small code subspaces.  Should we therefore conclude that linearity requires most black hole microstates to not have interiors?  This is more or less the firewall argument \cite{Almheiri:2012rt,Almheiri:2013hfa,Marolf:2013dba},  but so far this conclusion seems premature.  Naively proposition \ref{bhprop} would suggest defining the entanglement wedge $\EA$ as the bulk domain of dependence of $\Xi$, which by construction never goes behind the black hole horizons.  But in fact at least in some states we know it can be defined to go further by trading some of the microstate degrees of freedom for interior bulk degrees of freedom, and even in generic states we may yet be able to extend it somewhat beyond the horizon.  Perhaps this requires nonlinear violations of quantum mechanics, as advocated in \cite{Papadodimas:2015jra}, but perhaps not.  I hope to return to this in the future. 

\subsection{Limitations}
I'll close by discussing a few points where my analysis clearly needs to be improved from the point of view of applying it to holography.

First of all, theorem \ref{bigrthm} gives an equivalence between three seemingly different properties of a subspace $\Hc\subset \HA\otimes \HAb$ and a subalgebra $M$ acting on it, but it gives no assurance that any of them actually holds.  From the point of view of quantum error correction, subregion duality (meaning the existence of $O_A$ and $O'_{\Ab}$) is guaranteed for a subalgebra code with complementary recovery on $A$ and $\Ab$, and the RT formula and equivalence of relative entropies then follow.  In holography however, we do not yet have an explicit bulk algorithm for subregion duality when the entanglement wedge is larger than the causal wedge.  So we must instead rely on the derivations of \cite{Lewkowycz:2013nqa,Faulkner:2013ana} to establish the RT formula, after which we may use theorem \ref{bigrthm} to establish subregion duality \cite{Dong:2016eik}.  It would be nice to have a direct understanding of subregion duality from the bulk point of view, not requiring a detour through the RT formula.   

Secondly, although theorems \ref{bigrthm}, \ref{othm} give a rather general characterization of subalgebra correctability with complementary recovery for a fixed factorization $\HA\otimes \HAb$, something that would really be nice is a condition on $\Hc$ which guarantees subalgebra correctability with complementary recovery for \textit{arbitrary} regions $A$ and $\Ab$.  This is clearly a much stronger constraint on $\Hc$ than correctability for a particular $A$, but we expect it to hold in AdS/CFT.  We have seen that this requires substantial entanglement in the $|\chi_\alpha\ran$'s, but that is far from giving a necessary and sufficient condition for which subspaces have this property.  

Thirdly, even once we have established subalgebra correctability with complementary recovery, and thus the existence of an operator $\LA$ for which the RT formula holds, in general we do not expect $\LA$ to have an interpretation as extremizing something (such as the area).  This must be a special property of holographic codes, and it would be interesting if a more general condition could be given under which $\LA$ has an extremal (or minimal) interpretation.

Finally, theorem \ref{bigrthm} as stated only applies to holography in detail to order $G^0$.  This already tells us that we really need an approximate version of theorem \ref{bigrthm}, but actually the situation gets worse at higher orders in gravitational perturbation theory.  The reason is that the RT formula itself is modified, and my results need to be refined to account for this.  We do not yet know in detail how to modify it, but one proposal has been given in \cite{Engelhardt:2014gca}.  The idea is that we locate the surface $\gamma_A$ by extremizing right hand side of the RT formula, being careful to include the higher order corrections to $\LA$.  This has the effect of making $\LA$ a nonlinear operator, which makes it difficult to define the algebra $M$ in a way that $M'$ corresponds to the operators in the complementary entanglement wedge (it is no longer possible to do a gauge-fixing that puts $\gamma_A$ at a definite coordinate submanifold such as the one described in \cite{Jafferis:2015del}).  We can define a subalgebra $M$ by requiring that its elements are in $\EA$ for any state in $\Hc$, but then $M'$ will include some operators that are not strictly supported in $\mathcal{E}_{\Ab}$.  There will be a ``no-man's land'' of Planckian size consisting of operators which are sometimes in $\EA$ and sometimes in $\mathcal{E}_{\Ab}$, and it will in general get mixed up with the center of $M$ in defining $\LA$.  I don't see any fundamental problem with some version of \ref{bigrthm} holding at higher orders in $G$, but it will clearly need to take these issues into account.

\paragraph{Acknowledgments} I would like to thank Ahmed Almheiri, Ning Bao, Tom Banks, Cedric Beny, Horacio Casini, Thomas Dumitrescu, Xi Dong, Daniel Jafferis, Matt Headrick, Aitor Lewkowycz, Juan Maldacena, Don Marolf, Greg Moore, Hirosi Ooguri, Jonathan Oppenheim, Lenny Susskind, Andy Strominger, Aron Wall, Beni Yoshida, and Sasha Zhiboedov for very useful discussions. I'd also like to thank the Yukawa Institute for Theoretical Physics at Kyoto University and the University of Amsterdam for hospitality while this work was being completed.  I am supported by DOE grant  DE-FG0291ER-40654 and the Harvard Center for the Fundamental Laws of Nature.
  
\appendix
\section{Von Neumann algebras on finite-dimensional Hilbert spaces}\label{vnapp}
Von Neumann algebras are a beautiful subject, but unfortunately most discussions in the mathematics literature are greatly complicated by an insistence on treating the infinite-dimensional case from the beginning.  In physics, it is usually true that the finite-dimensional case is enough for any practical applications: even in cases where the Hilbert space is infinite-dimensional, such as in quantum field theories, there is almost always a way of truncating the theory to a finite-dimensional Hilbert space without losing any important data for the problem being considered.  In this appendix I will present the theory of von Neumann algebras on finite-dimensional Hilbert spaces, with the goal being to save the reader the trouble of extracting these results from the infinite-dimensional literature.  The source from which I found this extraction the easiest is \cite{Jones}, whose presentation I have followed fairly closely.  The reader is encouraged to look there for many more results, and may also wish to consult \cite{beny2015algebraic} for a recent explanation of some of these results from a $C^*$-algebra point of view. 
\subsection{Definitions}
Say that $\Hh$ is a finite-dimensional Hilbert space, ie a finite-dimensional complex vector space with an inner product, and $\LH$ is the set of linear operators acting on $\Hh$.  I'll denote the identity operator in $\Hh$ as $I$.
\begin{mydef}\label{vndef}
A \textbf{von Neumann algebra on $\bf{\Hh}$} is a set $M\subseteq \Ll(\Hh)$ such that:
\bi
\item $\forall \lambda \in \mathbb{C}$, $\lambda I\in M$ 
\item $\forall x \in M$, $x^\dagger \in M$ 
\item $\forall x,y\in M$, $xy\in M$ 
\item $\forall x,y \in M$, $x+y\in M$.
\ei
\end{mydef}
In other words it is a set of linear operators on $\Hh$ which is closed under hermitian conjugation, addition, multiplication, and which contains all scalar multiples of the identity operator.  Any von Neumann algebra automatically induces two other natural von Neumann algebras on $\mathcal{H}$:
\begin{mydef}
Given a von Neumann algebra $M$ on $\mathcal{H}$, the \textbf{commutant of M} is defined as $M'\equiv \{y\in \LH | xy=yx, \forall x\in M\}$.
\end{mydef}
\begin{mydef}
Given a von Neumann algebra $M$ on $\mathcal{H}$, the \textbf{center of M} is defined as $Z_M\equiv M\cap M'$.
\end{mydef}
In other words the commutant is the set of all linear operators that commute with everything in $M$, and the center is the subset of those which are themselves in $M$.  It is straightforward to confirm that they are in fact von Neumann algebras on $\Hh$.

\subsection{Projections and partial isometries}
In our study of von Neumann algebras, it will be very convenient to introduce the notions of \textit{projection} and \textit{partial isometry}:
\begin{mydef}
A linear map $p\in \LH$ is called a \textbf{projection} if $p^\dagger=p$ and $p^2=p$.
\end{mydef}
\begin{mydef}
A linear map $a\in \LH$ is called a \textbf{partial isometry} if $a^\dagger a=p$, where $p$ is a projection.  
\end{mydef}
A projection always has a subspace $p\mathcal{H}$ on which it acts identically, and whose orthogonal complement $(1-p)\Hh$ it annihilates.  Partial isometries are characterized by following theorem:
\begin{thm}\label{pithm}
Say that $a$ is a partial isometry on $\Hh$, obeying $a^\dagger a=p$, with $p$ a projection. Then $a^\dagger$ is also a partial isometry, obeying $a a^\dagger =q$, with $q$ also a projection, and there exists a unitary operator $u\in \LH$ such that $q=upu^\dagger$.  Thus $q$ and $p$ have equal rank, and in fact we can choose $u$ so that $a=up$.
\end{thm}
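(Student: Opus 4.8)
The plan is to reduce the whole theorem to the single identity $ap = a$, which encodes the statement that $a$ kills the orthogonal complement $(1-p)\Hh$ and is an isometry on $p\Hh$. First I would establish this identity directly: for any $|\psi\rangle$, the vector $a(1-p)|\psi\rangle$ has squared norm $\langle\psi|(1-p)a^\dagger a(1-p)|\psi\rangle = \langle\psi|(1-p)p(1-p)|\psi\rangle = 0$, using $a^\dagger a = p$ and $(1-p)p = 0$. Hence $a(1-p)=0$, i.e. $ap=a$. With this in hand, showing $q\equiv aa^\dagger$ is a projection is immediate: it is manifestly Hermitian, and $q^2 = a(a^\dagger a)a^\dagger = apa^\dagger = aa^\dagger = q$, the last step using $ap=a$. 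This already proves that $a^\dagger$ is a partial isometry.

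Next I would pin down the geometry. Since $a^\dagger a = p$, for $|\psi\rangle\in p\Hh$ we have $\|a|\psi\rangle\|^2 = \langle\psi|p|\psi\rangle = \||\psi\rangle\|^2$, so $a$ restricts to a norm-preserving (hence injective) map of $p\Hh$ onto its image $V\equiv a\Hh$; together with $ap=a$ this says $a$ annihilates $(1-p)\Hh$ and maps $p\Hh$ isometrically onto $V$. A one-line check that $q$ acts as the identity on $V$ and annihilates $V^\perp$ identifies $q$ as the orthogonal projection onto $V$, and the isometry then yields $\mathrm{rank}\,p = \dim p\Hh = \dim V = \mathrm{rank}\,q$.

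Finally, to build $u$ I would define it to coincide with $a$ on $p\Hh$ and to be \emph{any} unitary map from $(1-p)\Hh$ onto $(1-q)\Hh$ on the complement. Such a map exists because these two spaces have the same dimension, namely $\dim\Hh$ minus the common rank. Since $u$ carries an orthonormal basis of $\Hh$ to an orthonormal basis, it is unitary. The identity $a=up$ is then a two-case verification: on $p\Hh$ both sides equal $a$, and on $(1-p)\Hh$ both vanish. Likewise $upu^\dagger$ is the orthogonal projection onto $u(p\Hh)=V=q\Hh$, so $upu^\dagger=q$, completing the proof.

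The only real content — and hence the point to be careful about — is the equal-dimension claim that makes the extension of $a|_{p\Hh}$ to a genuine unitary possible; this is where finite-dimensionality enters, through $\dim(1-p)\Hh = \dim\Hh - \mathrm{rank}\,p = \dim\Hh - \mathrm{rank}\,q = \dim(1-q)\Hh$. Everything else is bookkeeping on the two orthogonal subspaces, where the thing to track consistently is $ap=a$ versus its adjoint $pa^\dagger = a^\dagger$, so that each identity is applied on the correct space.
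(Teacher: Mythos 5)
Your proof is correct and follows essentially the same route as the paper's: both rest on the observation that $a$ annihilates $(1-p)\Hh$ and maps $p\Hh$ isometrically onto $q\Hh$, and both construct $u$ by extending $a|_{p\Hh}$ by an arbitrary unitary between the orthogonal complements. Your identity $ap=a$ is a slightly cleaner algebraic packaging of the paper's block-matrix computation, and your dimension count via injectivity replaces the paper's two-sided inequality argument, but neither difference is substantive.
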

\begin{proof}
To see that $a^\dagger$ is a partial isometry, we first observe that any $|v\ran\in (1-p)\Hh$ is also annihilated by $a$, since $\lan v|a^\dagger a|v\ran=0$.  If we represent $a$ in block form using the direct sum decomposition $\mathcal{H}=p\Hh \oplus (1-p)\Hh$, only the first column can thus be nonzero: $a=\begin{pmatrix} A && 0 \\ B && 0\end{pmatrix}$, with $A^\dagger A+B^\dagger B=I_{p\Hh}$.  Using this expression for $a$, it is easily confirmed that $(aa^\dagger)^2=aa^\dagger$, and thus that $q$ is a projection.  

To see that $q$ and $p$ have equal rank, we can first observe that for any $|v\ran \in p\Hh$, we have $qa|v\ran=a|v\ran$, and thus $a|v\ran\in q\Hh$.  We also have that $\lan v_1|a^\dagger a|v_2\ran=\lan v_1|v_2\ran$ for all $|v_1\ran,|v_2\ran\in p\Hh$, so applying $a$ to an orthonormal basis for $p\Hh$ we see that we must have $\mathrm{dim}(p\Hh)\leq \mathrm{dim}(q\Hh)$.  Making the same argument acting on elements of $q\Hh$ with $a^\dagger$, we then conclude that $\mathrm{dim}(p\Hh)\geq \mathrm{dim}(q\Hh)$, and thus that $p$ and $q$ have equal rank.  Any two projections of equal rank are always unitarily equivalent, so indeed we have $q=upu^\dagger$ for some $u$ a unitary in $\LH$.  Moreover we can choose $u$ so that $u|v\ran=a|v\ran$ for any $|v\ran\in p\Hh$, in which case we have $a=up$.
\end{proof}
Another important property of partial isometries is their role in the polar decomposition theorem:
\begin{thm}\label{pdthm}
Say that $x\in \LH$.  Then we have $x=a|x|$, where $|x|$ is a non-negative matrix and $a$ is a partial isometry such that $a^\dagger a\equiv p$ is the projection onto the orthogonal complement of the kernel of $x$. Moreover both $a$ and $|x|$ are unique. 
\end{thm}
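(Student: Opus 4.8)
The plan is to build $|x|$ from the spectral theorem and then construct the partial isometry $a$ by hand on the range of $|x|$. First I would set $|x|\equiv\sqrt{x^\dagger x}$; this makes sense because $x^\dagger x$ is hermitian and non-negative, so by the spectral theorem it has a unique non-negative square root. The identity that does almost all of the work is $\|x|v\ran\|=\||x||v\ran\|$ for every $|v\ran\in\Hh$, which holds because $x^\dagger x=|x|^2$. In particular $|x|$ and $x$ have the same kernel.

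Next I would define $a$. On the range of $|x|$ I set $a\bigl(|x||v\ran\bigr)\equiv x|v\ran$; this is well defined precisely by the norm identity, since if $|x||v_1\ran=|x||v_2\ran$ then $|x|\bigl(|v_1\ran-|v_2\ran\bigr)=0$, hence $x\bigl(|v_1\ran-|v_2\ran\bigr)=0$, so $x|v_1\ran=x|v_2\ran$. The same identity shows $a$ is isometric on the range of $|x|$. I then extend $a$ to all of $\Hh$ by letting it vanish on the orthogonal complement of the range of $|x|$, which, since $|x|$ is hermitian, is the kernel of $|x|$, equal to the kernel of $x$. By construction $a^\dagger a$ is exactly the projection $p$ onto the range of $|x|$, i.e. onto the orthogonal complement of $\ker x$, so $a$ is a partial isometry of the stated type; and $x=a|x|$ holds because it holds on the range of $|x|$ and both sides annihilate $\ker x$.

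For uniqueness, suppose $x=b\,h$ is another such decomposition, with $h$ non-negative and $b$ a partial isometry whose source projection $b^\dagger b$ is the projection onto the orthogonal complement of $\ker x$. Then $x^\dagger x=h\,b^\dagger b\,h$, and since $b^\dagger b$ acts as the identity on the range of $h$ one gets $x^\dagger x=h^2$; uniqueness of the non-negative square root then forces $h=|x|$. From $a|x|=b|x|$ I conclude $a=b$ on the range of $|x|$, and since both partial isometries vanish on $\ker x$, it follows that $a=b$ everywhere.

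The main difficulty I anticipate is bookkeeping rather than conceptual: I must justify, not merely assert, that $b^\dagger b$ acts as the identity on the range of $h$. This is where the hypothesis that $b^\dagger b$ projects onto the orthogonal complement of $\ker x$, rather than some larger subspace, is essential, since it is exactly what prevents a spurious ambiguity of $b$ on $\ker x$ and pins down $h=|x|$. Everything else reduces to the single norm identity above together with the spectral theorem's guarantee of a unique non-negative square root.
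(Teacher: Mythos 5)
Your construction of $|x|$ and of $a$ is correct and is essentially the argument the paper gives: the paper writes $a\equiv x\left(|x|^{-1}\oplus 0_{\ker(x)}\right)$, which is exactly your ``invert $|x|$ on its range, kill the kernel'' map, and your verifications that $a^\dagger a=p$ and $x=a|x|$ go through. The uniqueness of $a$ once the positive factor is known to be $|x|$ is also fine. The problem is precisely the step you single out at the end: the claim that $b^\dagger b$ acts as the identity on the range of $h$ does \emph{not} follow from the stated hypothesis that $b^\dagger b$ is the projection onto $\ker(x)^\perp$. What you need is $\mathrm{range}(h)\subseteq \ker(x)^\perp$, equivalently $\ker(x)\subseteq\ker(h)$, and the hypotheses only give the reverse inclusion $\ker(h)\subseteq\ker(x)$ (immediate from $x=bh$).

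Concretely, take $\Hh=\mathbb{C}^2$ and $x=\mathrm{diag}(1,0)$, so $p=\mathrm{diag}(1,0)$. Then $b=\mathrm{diag}(1,0)$ is a partial isometry with $b^\dagger b=p$, and for any $c>0$ the non-negative operator $h=\mathrm{diag}(1,c)$ satisfies $bh=x$, yet $h\neq|x|$. (In block form relative to $\ker(x)^\perp\oplus\ker(x)$, the identity $x^\dagger x=hph$ does force the off-diagonal blocks of $h$ to vanish and determines its block on $\ker(x)^\perp$, but leaves its block on $\ker(x)$ completely free.) So uniqueness of the non-negative factor genuinely fails under the hypotheses as written; it holds only if one additionally demands $\ker(h)=\ker(x)$ (equivalently, that the source projection $b^\dagger b$ be the projection onto the closure of the range of $h$), or if one reads $|x|$ as the specific operator $\sqrt{x^\dagger x}$ so that only the uniqueness of $a$ is being asserted. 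Note that the paper's own one-line uniqueness argument (``if $x=a|x|$ then $x^\dagger x=|x|^2$'') silently uses the same unavailable identity $a^\dagger a\,h=h$, so this is a point to repair in the statement of the theorem rather than a defect peculiar to your write-up.
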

\begin{proof}
We first define $|x|\equiv \sqrt{x^\dagger x}$, which is clearly non-negative.  It has the same kernel as $x$, since $\lan v|x^\dagger x|v\ran=0 \Leftrightarrow x|v\ran=0$.  Now $|x|$ is invertible on $p\Hh=ker(x)^\perp$, so defining $a\equiv x\left(|x|^{-1}\oplus 0_{ker(x)}\right)$, we see that $a^\dagger a=\left(|x|^{-1}\oplus 0_{ker(x)}\right)|x|^2\left(|x|^{-1}\oplus 0_{ker(x)}\right)=p$.  $|x|$ is clearly unique, since if $x=a|x|$ then $x^\dagger x=|x|^2$.  $a$ is also unique, since if $a'|x|=a|x|$ we can multiply on both sides on the right by $\left(|x|^{-1}\oplus 0_{ker(x)}\right)$ to conclude that $a=a'$.
\end{proof}
Note that the restriction on the kernel of $a$ in this theorem is crucial for its uniqueness.  We could instead ask for $a$ to be unitary, and in fact the polar decomposition theorem is often stated that way, but then $a$ would not be unique.  
\subsection{The bicommutant theorem}
Perhaps the most fundamental theorem about von Neumann algebras is von Neumann's famous bicommutant theorem:
\begin{thm}
For any von Neumann algebra $M$ on $\Hh$, we have $M'' \equiv (M')'=M$.
\end{thm}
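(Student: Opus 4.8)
The plan is to prove the two inclusions $M \subseteq M''$ and $M'' \subseteq M$ separately. The first is immediate from the definitions: every $x \in M$ commutes with every element of $M'$, so $x \in (M')' = M''$. All the content lies in the reverse inclusion, and since $\Hh$ is finite-dimensional there are no topological subtleties — the bicommutant is a purely algebraic object, and every subspace of the form $M\xi$ is automatically closed, so we never need to take closures.

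For $M'' \subseteq M$ I would first establish the single-vector version: given $x \in M''$ and $\xi \in \Hh$, there exists $m \in M$ with $m\xi = x\xi$. The key step is to consider the subspace $K \equiv M\xi = \{m\xi : m \in M\}$ together with its orthogonal projection $p$. Because $M$ is an algebra containing $I$, $K$ is invariant under $M$, which gives $mp = pmp$ for every $m \in M$; applying the same to $m^\dagger \in M$ and taking the adjoint yields $pm = pmp$, so $p$ commutes with every $m \in M$, i.e. $p \in M'$. Now since $x \in M'' = (M')'$ and $p \in M'$, the operator $x$ commutes with $p$, whence $xK = xp\Hh = px\Hh \subseteq K$. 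As $\xi = I\xi \in K$, we conclude $x\xi \in K = M\xi$, which is exactly the desired $m$.

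To upgrade this to the operator statement $x = m$ I would use the standard amplification trick. Set $\Hh^{(d)} \equiv \Hh \otimes \mathbb{C}^d$ with $d = \dim \Hh$, and for $a \in \LH$ write $a^{(d)} \equiv a \otimes I_d$ and $M^{(d)} \equiv \{a^{(d)} : a \in M\}$. Viewing operators on $\Hh^{(d)}$ as $d \times d$ matrices with entries in $\LH$, one checks that $(M^{(d)})'$ is precisely the set of such matrices with all entries in $M'$, and then that $(M^{(d)})''$ consists exactly of the diagonal amplifications $b^{(d)}$ with $b \in M''$. Hence $x \in M''$ gives $x^{(d)} \in (M^{(d)})''$. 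Choosing a basis $e_1, \dots, e_d$ of $\Hh$ and applying the single-vector result to $x^{(d)}$ and the vector $\xi \equiv (e_1, \dots, e_d) \in \Hh^{(d)}$ produces $m \in M$ with $m^{(d)}\xi = x^{(d)}\xi$, i.e. $m e_i = x e_i$ for every $i$; since the $e_i$ span $\Hh$ this forces $m = x$, so $x \in M$.

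The main obstacle is the bookkeeping in the amplification step — specifically verifying the two commutant identities $(M^{(d)})' = M_d(M')$ and $(M^{(d)})'' = (M'')^{(d)}$. The first is a direct entrywise computation. The second follows by noting that $(M^{(d)})'$ contains all matrix units $E_{ij} \otimes I$ (since $I \in M'$), which forces any element of the bicommutant to be a diagonal amplification, after which commuting with $m' \otimes E_{11}$ for $m' \in M'$ pins its single block down to lie in $M''$. Everything else is routine; in particular, the polar decomposition and partial-isometry machinery of Theorems~\ref{pdthm} and~\ref{pithm} are not needed for this argument.
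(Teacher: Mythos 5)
Your proof is correct and is essentially the paper's own argument: the single-vector lemma (the projection onto $M\xi$ lies in $M'$, so any $x\in M''$ preserves $M\xi$) combined with amplification to $\Hh\otimes\mathbb{C}^d$ is exactly the paper's ``doubling trick,'' merely organized as two separate steps rather than carried out directly on the amplified space with the vector $\oplus_i|v_i\ran$. Your identification of $(M^{(d)})'$ and $(M^{(d)})''$ matches the paper's block-matrix computation.
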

\begin{proof}
The proof of this theorem is based on a clever ``doubling'' trick: rather than considering the action of $M$ directly on $\Hh$, we instead extend it to a new von Neumann algebra $I\otimes M$ on $\Hh\otimes \Hh$.  If we denote $n=\mathrm{dim}(\Hh)$, then we can view elements of $\Ll(\Hh\otimes \Hh)$ as $n\times n$ block matrices, whose blocks are themselves $n\times n$ matrices.  Elements of $I\otimes M$ are block diagonal in this representation, with the same element $x\in M$ in each diagonal block.  In other words we are doing a block decomposition based on the fact that $\Hh \otimes \Hh \cong \oplus_{i=1}^n \Hh_i$.  In this extended Hilbert space, it is easy to confirm that $(I\otimes M)'$ is the set of $n\times n$ block matrices whose blocks are arbitrary elements of $M'$.  By considering particular elements of $(I\otimes M)'$ where all blocks are zero except for one, which is taken to be $I$, we can also see that $(I\otimes M)''$ is the set of block diagonal matrices with the same element $z \in M''$ in each diagonal block.

Now consider an arbitrary vector $|v\ran\in \Hh\otimes\Hh$. We can define a subspace $V\in \Hh\otimes \Hh$ via $V\equiv(I\otimes M)|v\ran$, ie $V$ is the set of all vectors we can reach by acting on $|v\ran$ with an element of $I\otimes M$.  The key point is to observe that the projection $p_V$ onto $V$ commutes with all elements of $I\otimes M$, and is thus an element of $(I\otimes M)'$.  This is true because $I\otimes M$ acts within $V$, $I\otimes M$ is spanned by its hermitian elements, and any hermitian operator that preserves a subspace must commute with the projection onto that subspace.  This then implies that $p_V$ commutes with everything in $(I\otimes M)''$, which implies that any element $\begin{pmatrix} z &&0 &&\cdots\\ 0 &&z && \cdots \\ \vdots &&\vdots&&\ddots  \end{pmatrix}$ of $(I\otimes M)''$ must preserve $V$, and in particular acting on $|v\ran$ must be equivalent the action of some element $\begin{pmatrix} x &&0 &&\cdots\\ 0 &&x && \cdots \\ \vdots &&\vdots&&\ddots  \end{pmatrix}$ of $I\otimes M$.  But if we choose $|v\ran=\oplus_i |v_i\ran$ for some basis $|v_i\ran$ of $\Hh$, this then implies that $z=x$, and thus that $M''\subseteq M$.  Since $M''\supseteq M$ by definition, this establishes $M''=M$.
\end{proof}

It is interesting to note that the inclusion of scalar multiples of the identity operator in $M$ is essential for this proof: otherwise the vector $|v\ran$ might not be in the subspace $V$, so we would not be able to conclude that $z|v\ran\in V$.  If we replace the first condition in def. \eqref{vndef} by the weaker condition that for any $x\in M$ and $\lambda \in \mathbb{C}$, we have $\lambda x \in M$, the object we define instead is a representation of a $C^*$-algebra on $\mathcal{H}$.\footnote{$C^*$-algebras are defined abstractly, so we need to specify that we are representing one as a subalgebra of $\LH$.  Also note that in infinite dimensions there is an additional distinction between von Neumann algebras and representations of $C^*$-algebras on $\mathcal{H}$: one requires the algebra to be closed under different topologies in the two cases.} Indeed representations of $C^*$-algebras do not in general obey the bicommutant theorem!  A simple counter-example is the set of scalar multiples of some projection $p$ of non-maximal rank on $\Hh$: the identity $I$ is not in this representation of the abstract $C^*$-algebra isomorphic to $\mathbb{C}$, but it is in its bicommutant.\footnote{A linguistic subtlety here is that the abstract $C^*$-algebra $\mathbb{C}$ is actually unital, in the sense of containing an element that acts identically on all other elements, but this element is represented on $\Hh$ as the nonmaximal projection $p$.}  In physical applications we usually think of subalgebras as reflecting restrictions on what an observer can measure: since the identity corresponds to not measuring anything, it should be accessible to any observer, and thus we should always include it.

\subsection{Basic properties of von Neumann algebras}
Let's now establish some more basic facts about von Neumann algebras:
\begin{prop}\label{projprop}
Say that $x\in M$ is hermitian.  Then the projections onto the eigenspaces of $x$ are also elements of $M$.  Moreover if $f$ is a function $f:D \rightarrow\mathbb{C}$, with $D\subseteq \mathbb{R}$, and all eigenvalues of $x$ are in $D$, then the operator $f(x)$ is also in $M$.
\end{prop}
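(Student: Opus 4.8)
The plan is to exploit the fact that $M$ is closed under addition, multiplication, and—since $\lambda I\in M$ and $M$ is closed under multiplication—scalar multiplication, together with the finite-dimensionality of $\Hh$, which guarantees that $x$ has only finitely many distinct eigenvalues. First I would observe that every polynomial in $x$ lies in $M$: indeed $x^0=I\in M$, each power $x^n\in M$ by repeated multiplication, and for $c_n\in\mathbb{C}$ we have $c_n x^n=(c_n I)x^n\in M$, so any finite sum $\sum_n c_n x^n$ is in $M$. This reduces the whole proposition to exhibiting the relevant objects as polynomials in $x$.

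For the spectral projections, let $\lambda_1,\ldots,\lambda_k$ be the distinct eigenvalues of $x$. I would write the projection $P_i$ onto the $\lambda_i$-eigenspace as the Lagrange interpolation polynomial
\be
P_i=\prod_{j\neq i}\frac{x-\lambda_j I}{\lambda_i-\lambda_j}.
\ee
That this equals $P_i$ is immediate from the spectral decomposition $x=\sum_i\lambda_i P_i$: restricted to the $\lambda_m$-eigenspace the product evaluates to $\prod_{j\neq i}\frac{\lambda_m-\lambda_j}{\lambda_i-\lambda_j}$, which equals $\delta_{im}$ (when $m\neq i$ the factor $j=m$ vanishes, and when $m=i$ the product is $1$). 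Since the right-hand side is a polynomial in $x$, the first step shows $P_i\in M$.

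For the function $f$, since $f$ is evaluated only on the finite set of eigenvalues, I would simply set
\be
f(x)\equiv\sum_i f(\lambda_i)P_i,
\ee
and note that each term $f(\lambda_i)P_i=(f(\lambda_i)I)P_i$ is in $M$, hence so is the sum. Equivalently, one can build a single interpolating polynomial $q$ with $q(\lambda_i)=f(\lambda_i)$ for all $i$ and observe that $f(x)=q(x)\in M$.

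There is no genuine obstacle here; the only point worth flagging is that finite-dimensionality is exactly what makes the argument purely algebraic. Because $x$ has finitely many eigenvalues, Lagrange interpolation produces an honest polynomial of finite degree, so we never need the continuous functional calculus or any topological closure of $M$—precisely the simplification promised at the start of this appendix.
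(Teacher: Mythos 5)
Your proof is correct, but it takes a genuinely different route from the paper's. The paper argues via the bicommutant theorem: any $y\in M'$ commutes with the hermitian $x$, hence preserves each eigenspace of $x$, hence commutes with each spectral projection, so the projections lie in $M''=M$; the statement about $f(x)$ then follows from the spectral decomposition. You instead exhibit each spectral projection explicitly as a Lagrange interpolation polynomial in $x$, which lands in $M$ using only closure under addition, multiplication, and scalar multiples of the identity. Your argument is more elementary and slightly more general: it never invokes the commutant or the bicommutant theorem, so it applies verbatim to any unital subalgebra of $\LH$ (for instance a representation of a $C^*$-algebra containing $I$), whereas the paper's argument is specific to von Neumann algebras and depends on a theorem proved earlier in the appendix. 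The trade-off is that the paper's one-line argument is the one that survives passage to infinite dimensions, where the spectrum need not be finite and Lagrange interpolation is unavailable, while yours leans precisely on the finite-dimensionality of $\Hh$ --- a dependence you correctly flag yourself.
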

\begin{proof}
Each eigenspace projection of $x$ must commute with any $y\in M'$, since otherwise $y$ would not commute with $x$, but this means that the projections are in $M''$, which by the bicommutant theorem is equal to $M$.  Once we have the projections, we can define $f(x)$ by applying $f$ to each eigenvalue in the spectral representation of $x$, and since this is a sum over elements of $M$ times elements of $\mathbb{C}$, it must also be in $M$.
\end{proof}
\begin{prop}\label{uprop}
Any element of $M$ can be written as a linear combination of four unitary elements of $M$.
\end{prop}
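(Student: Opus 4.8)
The plan is to reduce the claim to the case of a single hermitian operator and then exhibit an explicit ``unitary dilation''. First I would write an arbitrary $x\in M$ as $x=h_1+ih_2$ with $h_1\equiv\frac{1}{2}(x+x^\dagger)$ and $h_2\equiv\frac{1}{2i}(x-x^\dagger)$. Both are hermitian, and since $M$ is closed under hermitian conjugation, addition, and scalar multiplication, both lie in $M$. Thus it suffices to show that any hermitian $h\in M$ is a linear combination of \emph{two} unitaries in $M$, since then $x$ becomes a combination of four.

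So let $h\in M$ be hermitian. If $h=0$ the claim is trivial (e.g. $0=\frac{1}{2}I-\frac{1}{2}I$), so assume $h\neq 0$ and rescale by the largest absolute value $c$ of an eigenvalue of $h$, so that $g\equiv h/c$ is hermitian with all eigenvalues in $[-1,1]$; equivalently $I-g^2\geq 0$. The key step is to set
\be
u\equiv g+i\sqrt{I-g^2}.
\ee
Here $\sqrt{I-g^2}$ is a well-defined non-negative hermitian operator because the eigenvalues of $g$ lie in $[-1,1]$, and crucially it is an element of $M$: it equals $f(g)$ for $f(t)=\sqrt{1-t^2}$, which is in $M$ by proposition \ref{projprop} (the function $f$ is defined on the spectrum of $g$). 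Since $g$ and $\sqrt{I-g^2}$ are both functions of $g$ they commute, so a direct computation gives $uu^\dagger=u^\dagger u=g^2+(I-g^2)=I$, so that $u$ is unitary; and $u\in M$ since $g,\sqrt{I-g^2}\in M$.

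Finally, $u^\dagger=g-i\sqrt{I-g^2}$ is also unitary and in $M$, and $u+u^\dagger=2g$, so $h=cg=\frac{c}{2}(u+u^\dagger)$ is a linear combination of two unitaries in $M$. Applying this separately to $h_1$ and $h_2$ writes $x=h_1+ih_2$ as a linear combination of the four unitaries furnished by $h_1$ and $h_2$, which completes the proof. The only point requiring the von Neumann structure, as opposed to naive linear algebra, is the assertion $\sqrt{I-g^2}\in M$, which is exactly what the functional calculus of proposition \ref{projprop} supplies; everything else is the bookkeeping of the rescaling needed to make $I-g^2$ non-negative, so I do not anticipate any genuine obstacle.
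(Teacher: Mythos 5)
Your proof is correct and follows essentially the same route as the paper: split $x$ into hermitian parts, rescale so the spectrum lies in $[-1,1]$, and write each hermitian piece as $\frac{c}{2}\left((g+i\sqrt{I-g^2})+(g-i\sqrt{I-g^2})\right)$, with membership of $\sqrt{I-g^2}$ in $M$ supplied by proposition \ref{projprop}. The only differences are cosmetic (explicit handling of $h=0$ and of the rescaling constant).
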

\begin{proof}
We've already observed that any $x\in M$ can be written as a linear combination of two hermitian operators, explicitly we have $x=\frac{x+x^\dagger}{2}+i\frac{x-x^\dagger}{2i}$.  So it is enough to consider the case where $x^\dagger=x$.  We can rescale $x$ so that its largest eigenvalue has absolute value less than one, in which case we have $x=\frac{1}{2}\left(x+i\sqrt{1-x^2}\right)+\frac{1}{2}\left(x-i\sqrt{1-x^2}\right)$.  The operators $x\pm i \sqrt{1-x^2}$ are clearly unitary, and by proposition \eqref{projprop} they are elements of $M$.
\end{proof}
\begin{prop}
Say that $p$ is a projection in $M$.  Then $pMp$ defines a von Neumann algebra on $p\Hh$, and its commutant on $p\Hh$ is $M'p$.
\end{prop}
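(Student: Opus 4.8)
The plan is to prove the two assertions in turn, viewing $pMp$ throughout as a set of operators on $p\Hh$ (each $pxp$ annihilates $(1-p)\Hh$ and preserves $p\Hh$, so it restricts to an element of $\Ll(p\Hh)$). First I would check that $pMp$ satisfies the four conditions of Definition \ref{vndef} on $p\Hh$. The identity of $\Ll(p\Hh)$ is $p$ itself, and $p = pIp \in pMp$, so $\lambda p = p(\lambda I)p \in pMp$ for every $\lambda$; closure under $\dagger$ follows from $(pxp)^\dagger = px^\dagger p$; closure under addition from $pxp + pyp = p(x+y)p$; and closure under multiplication from $pp = p$, which gives $(pxp)(pyp) = p(xpy)p \in pMp$ since $xpy \in M$. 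This part is entirely routine.

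For the commutant, the inclusion $M'p \subseteq (pMp)'$ is immediate. Since $p \in M$, every $y \in M'$ commutes with $p$, hence preserves $p\Hh$ and restricts to an operator $yp$ there; for any $x \in M$ a one-line computation on $p\Hh$ gives $(yp)(pxp) = pxyp = (pxp)(yp)$, using that $y$ commutes with both $p$ and $x$.

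The hard direction, $(pMp)' \subseteq M'p$, is where the real work lies and is the main obstacle. Given $z \in (pMp)'$, I would construct an extension $y \in M'$ with $y|_{p\Hh} = z$ by \emph{spreading $z$ with $M$}: on the subspace $V \equiv \overline{\mathrm{span}\{x\xi : x \in M,\ \xi \in p\Hh\}}$ I set $y\big(\sum_i x_i \xi_i\big) \equiv \sum_i x_i(z\xi_i)$, and $y = 0$ on $V^\perp$. The crux is \textbf{well-definedness}: if $\sum_i x_i \xi_i = 0$, I must show $\sum_i x_i z\xi_i = 0$. Since $\sum_i x_i z\xi_i \in V$ already, it is enough to show it is orthogonal to every $x_0\eta \in V$. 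Writing $m_i \equiv x_0^\dagger x_i \in M$ and inserting projections (legitimate because $\eta, z\xi_i \in p\Hh$), the pairing becomes $\sum_i \langle \eta, (pm_ip)(z\xi_i)\rangle$, and here I can finally invoke $z \in (pMp)'$ to move $z$ past $pm_ip \in pMp$, collapsing the sum to $\langle \eta, z\, p\big(\sum_i m_i\xi_i\big)\rangle$, which vanishes because $\sum_i m_i\xi_i = x_0^\dagger \sum_i x_i\xi_i = 0$. This is the single step where the hypothesis on $z$ is actually used.

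Once $y$ is well-defined it is manifestly linear (and automatically bounded in finite dimensions). To see $y \in M'$ I would check $yx = xy$ separately on $V$, where it holds by the very definition of $y$, and on $V^\perp$, which is $M$-invariant because $V = \overline{Mp\Hh}$ is, so both sides vanish there. Finally $y|_{p\Hh} = z$ follows by taking $x = p$, exhibiting $z = yp \in M'p$ and closing the inclusion. I would note that this argument uses only that $M$ is a $\dagger$-closed algebra containing $p$, so it does not even require the bicommutant theorem; as an alternative one could read the statement directly off the block decomposition \eqref{hilbertop}, but I would favor the self-contained construction above.
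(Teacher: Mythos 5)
Your proof is correct, but it takes a genuinely different route from the paper's. The paper leans on the bicommutant theorem twice: it proves the reversed statement $pMp=(M'p)'$ by taking $x$ on $p\Hh$ commuting with $M'p$, extending it \emph{by zero} to $\hat{x}=x\oplus 0_{(1-p)\Hh}$, and checking $\hat{x}$ commutes with all of $M'$ so that $\hat{x}\in M''=M$; the desired identity $(pMp)'=M'p$ then follows by taking commutants once more. You instead attack the hard inclusion $(pMp)'\subseteq M'p$ head-on, extending $z$ not by zero but by ``spreading'' it over $V=Mp\Hh$ via $y(x\xi)=x(z\xi)$, with the well-definedness computation (reducing $\langle x_0\eta,\sum_i x_i z\xi_i\rangle$ to matrix elements of $pm_ip\in pMp$ so that $z$ can be commuted past) carrying all the weight; your verification is sound, including the $M$-invariance of $V^\perp$ via $\dagger$-closure. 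The trade-off is clear: the paper's argument is shorter because the bicommutant theorem has already been paid for, whereas yours is longer but self-contained, constructive (it exhibits the element of $M'$ restricting to $z$ rather than inferring its existence), and, as you note, uses only that $M$ is a $\dagger$-closed unital algebra containing $p$ --- which in infinite dimensions is essentially the content of the bicommutant theorem itself, here obtained for free from finite-dimensionality. Either argument would serve the paper's purposes.
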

\begin{proof}
It is straightforward to confirm that $pMp$ is a von Neumann algebra, for example $(px_1p)(px_2p)=p(x_1 p x_2)p$.  To find the commmutant, first note that by the bicommutant theorem it is enough to show that $pMp=(M'p)'$.  Indeed say that $x$ on $p\Hh$ commutes with $yp$ for all $y\in M'$.  If we define $\hat{x}\equiv x\oplus 0_{(1-p)\Hh}$, then clearly $x=p\hat{x}p$.  We now need to show that $\hat{x}\in M$.  Again using the bicommutant theorem, we just need to see that $\hat{x}$ commutes with any $y\in M'$, since it will then be in $M''=M$.  But notice that $\hat{x}y=\hat{x}py=\hat{x}yp=yp\hat{x}=y\hat{x}$, so we are done.  
\end{proof}
\begin{prop}
Say that $x\in M$, and that $x=a|x|$ is the unique polar decomposition of $x$ promised by theorem \eqref{pdthm}.  Then $a$ and $|x|$ are both also in $M$.
\end{prop}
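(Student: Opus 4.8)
The plan is to construct both $|x|$ and $a$ from $x$ and $x^\dagger$ purely by functional calculus, so that their membership in $M$ follows from Proposition \eqref{projprop} (closure of $M$ under functions of its hermitian elements) together with closure of $M$ under multiplication. In effect, the whole proposition should reduce to recognizing $|x|$, and the relevant pseudo-inverse of $|x|$, as functions of the hermitian element $x^\dagger x\in M$.

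For $|x|$ the argument is immediate. By definition $|x|=\sqrt{x^\dagger x}$. Since $x\in M$ we have $x^\dagger\in M$ and hence $x^\dagger x\in M$; this operator is hermitian and non-negative, so its spectrum lies in $[0,\infty)$. Applying Proposition \eqref{projprop} with $f(t)=\sqrt{t}$ on $D=[0,\infty)$ gives $|x|=f(x^\dagger x)\in M$.

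For $a$ I would use the explicit formula from the proof of theorem \eqref{pdthm}, namely $a=x\left(|x|^{-1}\oplus 0_{\ker(x)}\right)$, where the second factor inverts $|x|$ on $\ker(x)^\perp$ and kills $\ker(x)$. The key observation is that this factor is itself a function of $|x|$: setting $g(t)=1/t$ for $t>0$ and $g(0)=0$, one has $g(|x|)=|x|^{-1}\oplus 0_{\ker(x)}$, using that $\ker|x|=\ker x$, which was established in the proof of theorem \eqref{pdthm}. Because the eigenvalues of $|x|$ are non-negative, $g$ is defined on the entire spectrum of $|x|$, so Proposition \eqref{projprop} gives $g(|x|)\in M$. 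Then $a=x\,g(|x|)$ is a product of two elements of $M$, hence lies in $M$.

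The only point that needs a moment's care is that $g$ is discontinuous at the origin, but in finite dimensions functional calculus is defined eigenspace-by-eigenspace through the spectral projections---which Proposition \eqref{projprop} has already placed in $M$---so the discontinuity is harmless; all that is required is that $g$ be defined on the (finite) spectrum of $|x|$, which it is. I therefore do not expect any genuine obstacle here, since both $|x|$ and $g(|x|)$ are manifestly built from a single hermitian element of $M$ by functions defined on its spectrum.
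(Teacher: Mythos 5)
Your proof is correct, but it takes a genuinely different route from the paper's for the partial isometry $a$. You both dispose of $|x|=\sqrt{x^\dagger x}$ the same way, via proposition \eqref{projprop}. For $a$, the paper argues indirectly: it shows $a$ commutes with every unitary $u\in M'$ by comparing the polar decompositions of $ua|x|$ and $au|x|$ (using that $a^\dagger a\in M$ and the uniqueness clause of theorem \eqref{pdthm}), and then invokes the bicommutant theorem to conclude $a\in M''=M$. You instead exhibit $a$ explicitly as $x\,g(|x|)$ with $g$ the pseudo-inverse function $g(t)=1/t$ for $t>0$, $g(0)=0$, so that membership in $M$ is immediate from closure under products; your observation that the discontinuity of $g$ at the origin is harmless is right, since proposition \eqref{projprop} only requires $g$ to be defined on the (finite) spectrum, the functional calculus being built eigenspace-by-eigenspace. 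Your argument is the more elementary and constructive of the two in the finite-dimensional setting the paper works in, and it avoids a second appeal to uniqueness of the polar decomposition. What the paper's commutant argument buys is robustness: in infinite dimensions, if $0$ is an accumulation point of the spectrum of $|x|$, the pseudo-inverse $g(|x|)$ is unbounded and fails to lie in $M$, whereas the argument via $M''=M$ survives. Since the paper restricts to finite-dimensional Hilbert spaces throughout, there is no gap in what you wrote.
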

\begin{proof}
$|x|\equiv \sqrt{x^\dagger x}$ is clearly in $M$ by proposition \eqref{projprop}.  We will show that $a$ is also in $M$ by showing that it commutes with everything in $M'$, and then again resorting to the bicommutant theorem.  In fact by proposition \eqref{uprop}, it is sufficient to show that it commutes with any unitary element $u$ of $M'$.  We can first note that $ua|x|=a|x|u=au|x|$, since $x$ and $|x|$ are both in $M$.  But since $|x|$ is in $M$, by proposition \eqref{uprop} the projection $a^\dagger a$ onto the orthogonal complement of its kernel must also be in $M$.  This means that $(au)^\dagger(au)=u^\dagger a^\dagger a u=a^\dagger a=(ua)^\dagger (ua)$, so by the uniqueness of the polar decomposition of $ua|x|$ we must have $ua=au$.
\end{proof}

\subsection{Factors}
In the theory of von Neumann algebras, there is a special role for algebras with trivial center:
\begin{mydef}
A von Neumann algebra $M$ on $\Hh$ is called a \textbf{factor} if its center $Z_M\equiv M\cap M'$ contains only scalar multiples of $I$.
\end{mydef}
In the following section we will understand the origin of this name.  Factors have several nice properties:
\begin{prop}
Say that $M$ is a factor, and that $p$ and $q$ are nonzero projections in $M$.  Then there exists a unitary operator $u\in M$ such that $puq\neq 0$.  
\end{prop}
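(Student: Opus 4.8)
The plan is to argue by contradiction, reducing the existence of a suitable unitary to a statement about a central projection built from $q$. Suppose that $puq=0$ for every unitary $u\in M$. By proposition \ref{uprop} every element of $M$ is a linear combination of four unitaries in $M$, so this assumption immediately upgrades to $pxq=0$ for all $x\in M$; that is, $pMq=0$. The whole argument then consists of showing that this forces $p=0$, contradicting $p\neq 0$.

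The main device is the subspace $K$ equal to the closed span of $\{xq|\xi\ran : x\in M,\ |\xi\ran\in\Hh\}$, i.e. the smallest $M$-invariant subspace containing $q\Hh$, with $z$ the orthogonal projection onto $K$. The key claim is that $z$ lies in the center $Z_M$. On one hand $K$ is manifestly invariant under $M$, since $x'(xq|\xi\ran)=(x'x)q|\xi\ran\in K$, and because $M$ is closed under hermitian conjugation the orthogonal complement $K^\perp$ is invariant as well, so $z\in M'$. On the other hand, for any $y\in M'$ I would use that $q\in M$ commutes with $y$ to write $y(xq|\xi\ran)=xyq|\xi\ran=xqy|\xi\ran=xq(y|\xi\ran)\in K$; hence $K$, and likewise $K^\perp$, is $M'$-invariant, so $z$ commutes with all of $M'$ and therefore $z\in M''=M$ by the bicommutant theorem. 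Combining these, $z\in M\cap M'=Z_M$.

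Since $M$ is a factor its center consists only of multiples of the identity, so the projection $z$ must be either $0$ or $I$. Because $q\neq 0$ we have $q\Hh\subseteq K$ nonzero (take $x=I$), so $z\neq 0$ and thus $z=I$, meaning $K=\Hh$. But $pMq=0$ says precisely that $p$ annihilates every spanning vector $xq|\xi\ran$ of $K$, hence $p$ annihilates $K=\Hh$, forcing $p=0$ and contradicting the hypothesis. This contradiction establishes that some unitary $u\in M$ has $puq\neq 0$.

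I expect the only real content to be the claim that $z$ is central; once that is in hand the factor hypothesis collapses $z$ to the identity and the rest is bookkeeping. The subtle point is the $M'$-invariance of $K$, which works only because $q$ is an element of $M$ and therefore commutes with the commutant---this is exactly where the projections being \emph{in} $M$ is used. The reduction from unitaries to arbitrary $x\in M$ via proposition \ref{uprop} is what lets the vanishing of $pMq$ translate into a statement about the whole invariant subspace $K$, so that centrality can do its work.
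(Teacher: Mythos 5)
Your proof is correct. It runs on the same engine as the paper's --- assume $puq=0$ for all unitaries, build a central projection, and let factoriality kill it --- but the central projection is constructed from the opposite side. The paper takes the projection $r$ onto the orthogonal complement of $\bigcap_u \ker(u^\dagger p u)$, i.e.\ the central support of $p$, and reaches the contradiction by showing $r$ is neither $0$ (since $p\neq 0$) nor $I$ (since $q\Hh$ lies in the common kernel). You instead take $z$, the projection onto the $M$-invariant subspace $K$ generated by $q\Hh$, i.e.\ the central support of $q$; factoriality forces $z=I$, and then $pMq=0$ makes $p$ annihilate all of $K=\Hh$, contradicting $p\neq 0$. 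Your route has two small advantages: the upgrade from unitaries to all of $M$ via proposition \ref{uprop} is made explicit at the outset rather than implicitly, and the centrality of $z$ is argued cleanly through invariant subspaces ($K$ and $K^\perp$ invariant under both $M$ and $M'$, the latter precisely because $q\in M$ commutes with $M'$), whereas the paper's corresponding step --- that $r\in M$ because it commutes with $M'$ --- is stated somewhat tersely. The paper's version is marginally more symmetric in $p$ and $q$ at the final step; otherwise the two arguments are mirror images and equally valid.
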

\begin{proof}
Say that $puq=0$ for all unitaries $u$ in $M$.  Then we would also have $u^\dagger p uq=0$.  But now say we define a new projection operator $r$ by the property that it annihilates only those elements of $\Hh$ which are in the kernel of $u^\dagger p u$ for all $u\in M$.  $r$ is not the identity, since any vector in $q\Hh$ must be annihilated by all $u^\dagger p u$ and thus by $r$, and $r$ is nonzero since $u^\dagger p u$ is nonzero.   The kernel of $r$ is apparently preserved by the action of any $u$.  But this means $r$ commutes with all $u$, and thus with everything in $M$ by proposition \eqref{uprop}.  $r$ is also in $M$, since it commutes with everything in $M'$ (otherwise there would be some $u^\dagger p u$ whose kernel was not preserved by a hermitian element of $M'$, which would contradict $u^\dagger p u\in M$.)  But these things together contradict the assumption that $M$ is a factor, since we have shown that $r$ is a nontrivial element of the center. 
\end{proof}
Before stating the next property, it is convenient to introduce an ordering notation on projections.  Say that $p$ and $q$ are projections. If $p\Hh\subseteq q \Hh$, or equivalently $ker(p)\supseteq ker(q)$, then we say $p\leq q$.
\begin{prop}\label{aprop}
Say that $M$ is a factor, and that $p$ and $q$ are nonzero projections in $M$.  Then there exists a partial isometry $a$ such that $a^\dagger a\leq q$ and $aa^\dagger \leq p$.
\end{prop}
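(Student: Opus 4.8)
The plan is to obtain $a$ as the partial isometry appearing in the polar decomposition of a carefully chosen element of $M$. First I would apply the preceding proposition, which---because $M$ is a factor---guarantees a unitary $u\in M$ with $puq\neq 0$. Setting $x\equiv puq$, closure of $M$ under multiplication gives $x\in M$, and by construction $x\neq 0$.

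Next I would polar-decompose $x=a|x|$ via theorem \ref{pdthm}. The proposition showing that both factors in the polar decomposition of an element of $M$ again lie in $M$ then gives $a\in M$ (and $|x|\in M$), and $a\neq 0$ since $x\neq 0$. By theorem \ref{pdthm}, $a^\dagger a$ is the projection onto $ker(x)^\perp$, while, exactly as in the proof of theorem \ref{pithm}, $aa^\dagger$ is the projection onto $\mathrm{range}(a)=\mathrm{range}(x)$.

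It then remains only to convert two one-line containments into the projection ordering introduced above. Since $xq=puq\,q=puq=x$, we have $x(1-q)=0$, so $(1-q)\Hh\subseteq ker(x)$ and therefore $ker(x)^\perp\subseteq q\Hh$; this is precisely $a^\dagger a\leq q$. Symmetrically, $px=p\,puq=puq=x$ gives $\mathrm{range}(x)\subseteq p\Hh$, which is $aa^\dagger\leq p$. This produces the desired nonzero partial isometry in $M$.

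I do not expect a serious obstacle here once the right object is identified: the only real content is the observation that left-multiplication by $p$ (respectively right-multiplication by $q$) forces the range (respectively the support) of the resulting operator into $p\Hh$ (respectively $q\Hh$), together with the use of the preceding proposition to ensure that $puq$, and hence $a$, is nonzero. Note that the factor hypothesis enters only through that preceding proposition: without it, $puq$ could vanish for every unitary $u$, and the construction would collapse to the trivial $a=0$.
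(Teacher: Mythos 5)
Your proposal is correct and follows essentially the same route as the paper: take $x\equiv puq$ with $u$ supplied by the preceding proposition, polar-decompose $x=a|x|$, and read off the two containments from $xq=x$ and $px=x$. The only cosmetic difference is that for $aa^\dagger\leq p$ you argue via $\mathrm{range}(a)=\mathrm{range}(x)\subseteq p\Hh$, whereas the paper argues via kernels using $qu^\dagger p=|x|a^\dagger$; these are equivalent.
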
 
\begin{proof}
Define $x\equiv puq$, with $u\in M$ chosen so that $x\neq 0$.  By the polar decomposition theorem, we have $x=a|x|$, with $ker(a)=ker(|x|)$.  Clearly if $|v\ran$ is annihilated by $q$ it is annihilated by $a|x|$, and thus by $a$, so we have $a^\dagger a \leq q$.  Moreover since $qu^\dagger p=|x|a^\dagger$, we see that if $|v\ran$ is annihilated by $p$ it must also be annihilated by $|x|a^\dagger$.  From theorem \eqref{pithm}, we know that $a^\dagger=a^\dagger a w^\dagger$, with $w$ a unitary that maps the kernel of $|x|$ to that of $a^\dagger$, so $|x|a^\dagger|v\ran=0\implies a^\dagger|v\ran=0$.  Therefore we have $aa^\dagger \leq p$.
\end{proof}
In studying factors, it is convenient to introduce a special kind of projection:
\begin{mydef}
Say $M$ is a von Neumann algebra on $\Hh$, and $p$ is a nonzero projection.  We say that $p$ is a \textbf{minimal projection} if for any projection $q\in M$, we have $q\leq p$ if and only if $q=0$ or $q=p$.  
\end{mydef}
Since $\Hh$ is finite-dimensional, minimal projections must always exist in any von Neumann algebra. Indeed given any nonzero nonminimal projection $p$, we can find a nonzero projection $q$ of smaller rank such that $q\leq p$.  If $q$ is nonminimal then we can do this again, and since any projection of rank one is minimal, this procedure always eventually finds a minimal projection.  We can characterize minimal projections by the following theorem:  
\begin{thm}\label{minprojthm}
Say that $M$ is a von Neumann algebra on $\Hh$, and $p$ is a minimal projection.  Then $pMp=\mathbb{C}p$, or in other words $pMp$ consists only of scalar multiples of $p$.  
\end{thm}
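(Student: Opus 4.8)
The plan is to reduce the claim to hermitian operators and then use minimality to pin down the spectrum. First I would recall from the earlier proposition that $pMp$ is a von Neumann algebra on $p\Hh$; in particular it is closed under hermitian conjugation, so any $x\in pMp$ splits as $x=\frac{x+x^\dagger}{2}+i\frac{x-x^\dagger}{2i}$ into two hermitian elements of $pMp$ (exactly as in the proof of proposition \ref{uprop}). Hence it suffices to prove that every hermitian element of $pMp$ is a scalar multiple of $p$.

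So I would fix a hermitian $x\in pMp$. Since $p\in M$ and $M$ is closed under multiplication, we have $x\in M$, and therefore by proposition \ref{projprop} every spectral projection of $x$ also lies in $M$. The key geometric point is that $x=pxp$ acts within $p\Hh$ and annihilates $(1-p)\Hh$, so every eigenprojection $q$ associated to a \emph{nonzero} eigenvalue of $x$ satisfies $q\Hh\subseteq p\Hh$, i.e. $q\leq p$ in the ordering introduced above.

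Now I would invoke minimality of $p$: each such $q$ is an element of $M$ with $q\leq p$ and $q\neq 0$, so minimality forces $q=p$. Since distinct spectral projections of a hermitian operator are mutually orthogonal and cannot all equal $p$, it follows that $x$ has at most one nonzero eigenvalue, whose eigenprojection is exactly $p$. Calling that eigenvalue $\lambda$, we get $x=\lambda p$, and combining with the hermitian reduction yields $pMp=\mathbb{C}p$.

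This argument is short, so there is no serious obstacle; the one place requiring care is the eigenspace bookkeeping. One must treat the zero eigenvalue separately, since its eigenprojection contains $(1-p)\Hh$ and is \emph{not} dominated by $p$, and apply minimality only to the nonzero-eigenvalue projections. That restriction to the nonzero part is precisely where the content of the proof sits.
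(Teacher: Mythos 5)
Your proof is correct and follows essentially the same route as the paper: reduce to hermitian elements, use proposition \ref{projprop} to place the spectral projections in $M$, and invoke minimality of $p$ to force them to be trivial. The only cosmetic difference is that the paper works inside the algebra $pMp$ on $p\Hh$ directly, while you work in $M$ on all of $\Hh$ and therefore must (and correctly do) exclude the zero-eigenvalue projection from the minimality argument.
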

\begin{proof}
$pMp$ will always contain $\mathbb{C}p$.  If it contains any other operators, then by proposition \eqref{projprop} it will have a nontrivial projection $q$.  But such a $q$ would contradict the minimality of $p$.
\end{proof}
The existence of minimal projections is a key point where our insistence that $\Hh$ be finite-dimensional is essential.  In the infinite-dimensional case, factors that contain a minimal projection are called factors of type I, while those that don't are called factors of types II and III.  Perhaps the main thing we achieve by considering only the finite-dimensional case is that we do not need to consider these more complicated factors.\footnote{The difference between type II and type III is based on the existence of \textit{finite projections}: a projection $p\in M$ is called finite if there is no other projection $q\in M$ obeying $q<p$, but nonetheless having a partial isometry $a$ such that $p=a^\dagger a$ and $q=aa^\dagger$.  Any projection for which $p\Hh$ is finite-dimensional is always finite, but there can in general be finite projections with $p\Hh$ still infinite-dimensional.  A von Neumann algebra is called type II if it has no minimal projections but does have a finite projection, and type III if it has neither.}
 
\subsection{The classification of von Neumann algebras on finite-dimensional Hilbert spaces}
We are now in a position to classify all von Neumann algebras on finite-dimensional Hilbert spaces.  The most challenging step turns out to be the classification of factors, so we will discuss this first.  
\begin{thm}\label{factorthm}
Say that $M$ is a factor on $\Hh$.  Then there exists a tensor factorization $\Hh=\HA\otimes\HAb$ such that $M=\Ll(\HA)\otimes I_{\ol{A}}$.  In other words, $M$ is just the set of all linear operators on some tensor factor $\HA$ of $\Hh$.  Moreover we have $M'=I_A\otimes \Ll(\HAb)$.  
\end{thm}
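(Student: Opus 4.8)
The plan is to manufacture an explicit system of matrix units inside $M$ out of minimal projections and use them to read off the tensor factorization. First I would fix a minimal projection $p_1\in M$, which exists because $\Hh$ is finite-dimensional. The crucial preliminary fact is that in a factor all minimal projections are equivalent: given minimal projections $p,q\in M$, proposition \ref{aprop} supplies a partial isometry $a$ with $a^\dagger a\leq q$ and $aa^\dagger\leq p$, and this $a$ lies in $M$ because the polar-decomposition pieces of an element of $M$ are again in $M$. Since $a\neq 0$, minimality of $p$ and $q$ forces $a^\dagger a=q$ and $aa^\dagger=p$, so $a$ maps $q\Hh$ isometrically onto $p\Hh$.

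Next I would assemble a maximal collection $p_1,\dots,p_k$ of mutually orthogonal minimal projections in $M$. Maximality gives $\sum_i p_i=I$: otherwise $I-\sum_i p_i$ is a nonzero projection in $M$, which (finite dimensions again) dominates a further minimal projection orthogonal to all the $p_i$, a contradiction. Using the previous paragraph I choose partial isometries $u_i\in M$ with $u_i^\dagger u_i=p_1$ and $u_i u_i^\dagger=p_i$, taking $u_1=p_1$; one checks $u_i^\dagger u_j=\delta_{ij}p_1$. Fixing an orthonormal basis $|\mu\rangle$ of $p_1\Hh$, the vectors $|i,\mu\rangle\equiv u_i|\mu\rangle$ are orthonormal and their span is $\oplus_i p_i\Hh=\Hh$, so they form a basis of $\Hh$. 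Declaring $\HA$ to be the $k$-dimensional space with basis $|i\rangle$ and $\HAb\equiv p_1\Hh$, the map $|i,\mu\rangle\mapsto |i\rangle\otimes|\mu\rangle$ is the desired isomorphism $\Hh\cong\HA\otimes\HAb$.

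It then remains to identify $M$ with $\Ll(\HA)\otimes I_{\ol{A}}$. The matrix units $e_{ij}\equiv u_i u_j^\dagger$ lie in $M$ and act as $|i\rangle\langle j|\otimes I_{\ol{A}}$, giving $\Ll(\HA)\otimes I_{\ol{A}}\subseteq M$. For the reverse inclusion I would invoke the minimal-projection identity $p_1 M p_1=\mathbb{C}p_1$ from theorem \ref{minprojthm}: for any $x\in M$ the operator $u_i^\dagger x u_j=p_1(u_i^\dagger x u_j)p_1$ lies in $p_1 M p_1$, hence equals $c_{ij}p_1$ for a scalar $c_{ij}$, so $p_i x p_j=c_{ij}e_{ij}$ and, summing against $\sum_i p_i=I$, $x=\sum_{ij}c_{ij}e_{ij}\in\Ll(\HA)\otimes I_{\ol{A}}$. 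Finally $M'=I_A\otimes\Ll(\HAb)$ follows from the standard computation of the commutant of $\Ll(\HA)\otimes I_{\ol{A}}$, or equivalently from the bicommutant theorem applied to $I_A\otimes\Ll(\HAb)$.

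The main obstacle is the two equivalence-type facts that make the decomposition uniform: that all minimal projections are Murray--von Neumann equivalent, which is exactly where the factor hypothesis enters through proposition \ref{aprop}, and that a maximal orthogonal family of them exhausts $I$. Once these are in hand, the reduction $p_1 M p_1=\mathbb{C}p_1$ turns the rest into bookkeeping with the matrix units.
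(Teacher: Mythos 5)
Your proof is correct and follows essentially the same route as the paper's: a maximal orthogonal family of minimal projections summing to the identity, partial isometries from proposition \ref{aprop} (forced to full equivalences by minimality) connecting each $p_i$ to $p_1$, and the identity $p_1Mp_1=\mathbb{C}p_1$ from theorem \ref{minprojthm} to show the resulting matrix units $u_iu_j^\dagger$ exhaust $M$. The only difference is presentational — you build the isomorphism $\Hh\cong\HA\otimes\HAb$ explicitly via the basis $|i,\mu\ran=u_i|\mu\ran$, while the paper phrases the same structure in terms of block matrices over the decomposition $\Hh=\oplus_i p_i\Hh$.
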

\begin{proof}
The basic idea is to consider a maximal set of minimal projections $p_i$ such that $p_ip_j=0\,\, \forall i\neq j$.  Such a set always exists, since we can take any single minimal projection and then keep including more until we no longer can.  The first thing to show is that there is no state which is annihilated by all the $p_i$.  If there were, then we could define a nonmaximal projection $r$ which annihilates only those states annihilated by all the $p_i$.  This $r$ would be in $M$, since it must commute with everything in $M'$, and it would obey $(1-r)p_i=0 \,\,\forall i$.  But this would contradict the maximality of the $p_i$, since we could then include $(1-r)$ into the set.  Thus we must have $I=\sum_i p_i$.  

Now by proposition \eqref{aprop}, for any $i$ we must have a nonzero partial isometry $a_i$ such that $a_i^\dagger a_i \leq p_i$ and $a_i a_i^\dagger \leq p_1$.  By the minimality of $p_1$ and $p_i$, we must in fact have that $a_i^\dagger a_i=p_i$ and $a_i a^\dagger_i=p_1$.  By theorem \eqref{pithm}, we see that the $p_i$ are all unitarily equivalent, and thus have equal rank.  Moreover since $I=\sum_i p_i$, this rank must divide the dimensionality of $\Hh$.  We will soon see that in fact the $p_i$ are the projections onto an orthonormal basis of a factor $\HA$ tensored with the identity on $\HAb$.      

Indeed we can now observe that since $I=\sum_i p_i$, for any $x\in M$ we have $x=\sum_{ij}p_ixp_j$.  Moreover since $a_i$ maps $p_i\Hh$ to $p_1\Hh$, we have $p_i x p_j=a_i^\dagger a_i x a_j^\dagger a_j=a_i^\dagger p_1a_i x a_j^\dagger p_1 a_j$.   Since $p_1$ is minimal, by theorem \eqref{minprojthm} we have $p_1a_i x a_j^\dagger p_1=\lambda_{ij} p_1$ for some coefficients $\lambda_{ij}\in \mathbb{C}$, and thus $p_i x p_j=\lambda_{ij} a_i^\dagger p_1 a_j=\lambda_{ij} a_i^\dagger a_j$.  We then have $x=\sum_{ij}\lambda_{ij}a^\dagger_i a_j$, so the $a_i$'s apparently generate all of $M$.

Finally we need to identify the algebra generated by the $a_i$'s.  Let's first notice that if we block decompose $\Hh=\oplus_i p_i \Hh$, then we can define a tensor product stucture $\Hh=\HA\otimes \HAb$ by taking $\Ll(\HA)\otimes I_{\ol{A}}$ to be the set of block matrices where each block is an arbitrary multiple of the identity on that block, and taking $I_A\otimes \Ll(\HAb)$ be the set of block diagonal matrices with the same element of $\Ll(\HAb)$ in each diagonal block.  We can choose a basis within each block so that $a_i$ is represented as a matrix with an identity operator in the $i$th column of the first row and zeros elsewhere, in which case the operator $a_i^\dagger a_j$ will have the identity in the $j$th column of the $i$th row and be zero otherwise.  But these matrices clearly generate all of $\Ll(\HA)\otimes I_{\ol{A}}$, which is thus equal to $M$.  Moreover by studying the commutator of an arbitrary matrix with $a^\dagger_i a_j$, it is straightforward to confirm that $M'=I_A\otimes \Ll(\HAb)$.
\end{proof}
This theorem clearly justifies the definition of a factor, although in infinite dimensions there are factors (of types II and III) for which it isn't true.  Since the proof was somewhat involved, I'll quickly recap the logic.  By considering projections that are in $M$, we study subspaces which $M$ ``knows about''.  Any two subspaces of equal dimensionality in $\Hh$ are isomorphic, but $M$ only ``knows about'' this if the partial isometry that relates them is in $M$.  Any factor has the property that its minimal projections are all related by partial isometries in $M$, which is a kind of irreducibility of $M$.  Moreover in a factor there is a maximal set of minimal projections which are mutually orthogonal and complete.  We can use this set to factorize $\mathcal{H}=\HA\otimes \HAb$, and use the isometries between the projections to generate $\Ll(\HA)$.  

Now we consider the general case, where $M$ is not necessarily a factor.  The basic point however is that since all elements of the center $Z_M$ are mutually commuting, we can simultaneously diagonalize them.  By proposition \eqref{projprop} this means there is a family of projections $p_\alpha \in Z_M$, obeying $p_\alpha p_\beta=0$ for any $\alpha\neq \beta$, such that $Z_M$ is equivalent to the set of operators $\sum_\alpha \lambda_{\alpha} p_\alpha$, with $\lambda_\alpha$ an arbitrary set of complex numbers.  We then have the following proposition:
\begin{prop}
Say that $M$ is a von Neumann algebra, whose center $Z_M$ is spanned by the projections $p_\alpha$, obeying $p_\alpha p_\beta=0$ for all $\alpha\neq \beta$.  Then for all $\alpha$, $p_\alpha M p_\alpha$ is a factor on $p_\alpha\Hh$.  Moreover if $\alpha\neq\beta$ then $p_\alpha M p_\beta=0$.
\end{prop}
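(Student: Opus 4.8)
The plan is to lean heavily on the centrality of the $p_\alpha$, which collapses most of the bookkeeping that would otherwise be needed. First I would record the easy consequence: for any $x \in M$, centrality of $p_\beta$ gives $p_\alpha x p_\beta = p_\alpha p_\beta x = 0$ whenever $\alpha \neq \beta$, since $p_\alpha p_\beta = 0$; this is already the second claim. The same centrality, applied with $\alpha = \beta$, shows $p_\alpha x p_\alpha = p_\alpha x = x p_\alpha$, so the compressed algebra is simply $p_\alpha M p_\alpha = M p_\alpha$. By the preceding proposition (with $p = p_\alpha \in Z_M \subseteq M$), this is a von Neumann algebra on $p_\alpha \Hh$ whose commutant there is $M' p_\alpha$, so all that remains is to show its center is trivial.

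To prove $p_\alpha M p_\alpha$ is a factor, I would take an arbitrary element $w$ of its center and show $w$ is a scalar multiple of $p_\alpha$ (which acts as the identity on $p_\alpha \Hh$). Writing $w = x p_\alpha$ for some $x \in M$, the statement that $w$ commutes with every $y p_\alpha$ on $p_\alpha \Hh$ — equivalently, as operators on all of $\Hh$, since every operator in sight annihilates $(1-p_\alpha)\Hh$ and preserves $p_\alpha\Hh$ — becomes $x y p_\alpha = y x p_\alpha$, i.e. $[x,y]\,p_\alpha = 0$ for all $y \in M$, once more using $x p_\alpha = p_\alpha x$.

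The key step is then to observe that $\hat{x} \equiv x p_\alpha$ is itself a central element of $M$. It lies in $M$ because it is a product of the two elements $x, p_\alpha \in M$, and it commutes with every $y \in M$ because $\hat{x} y = x y p_\alpha = y x p_\alpha = y \hat{x}$, using the relation $[x,y]p_\alpha = 0$ just derived together with $p_\alpha y = y p_\alpha$. Hence $\hat{x} \in Z_M$, so $\hat{x} = \sum_\beta \lambda_\beta p_\beta$ for scalars $\lambda_\beta$. Multiplying by $p_\beta$ for $\beta \neq \alpha$ and using $p_\beta x p_\alpha = x p_\beta p_\alpha = 0$ forces $\lambda_\beta = 0$, leaving $w = \hat{x} = \lambda_\alpha p_\alpha$, as desired.

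I expect the only real subtlety to be the identification between ``commutes within $p_\alpha \Hh$'' and ``commutes as operators on $\Hh$'', along with the repeated but routine invocations of centrality; the conceptual heart is the lift $w \mapsto \hat{x}$ from the center of the compression back into $Z_M$, after which the fact that the $p_\beta$ span $Z_M$ does all the remaining work.
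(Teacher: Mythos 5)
Your proof is correct and follows essentially the same route as the paper's: both arguments lift a central element of the compression $p_\alpha M p_\alpha$ back to an element of $Z_M$ (the paper phrases this as a contradiction with the $p_\alpha$ spanning the center, you phrase it directly) and dispose of the second claim by one line of centrality. The only difference is that you explicitly verify that the lifted element $\hat{x}=xp_\alpha$ commutes with all of $M$, a step the paper asserts without detail.
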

\begin{proof}
Say that $p_\alpha M p_\alpha$ had a nontrivial central element $c$.  Then $c\oplus 0_{(1-p_\alpha) \Hh}$ would be an element of $Z_M$ that was not in the span of the $p_\alpha$'s, but we have chosen them to span $Z_M$ so no such $c$ can exist.  Thus $p_\alpha M p_\alpha$ is a factor.  Moreover if $\alpha\neq\beta$, then $p_\alpha M p_\beta=Mp_\alpha p_\beta=0$.  
\end{proof}
This proposition says that if we decompose $\Hh=\oplus_\alpha p_\alpha \Hh$, then every element of $M$ is block diagonal, and moreover each diagonal block is a factor algebra.  Together with theorem \eqref{factorthm}, this at last implies the classification theorem:
\begin{thm}\label{classthm}
Say that $M$ is a von Neumann algebra on $\Hh$, with $dim(\Hh)<\infty$. Then we have a block decomposition $\Hh=\oplus_\alpha \left(\mathcal{H}_{A_\alpha}\otimes \mathcal{H}_{\ol{A}_\alpha}\right)$, in terms of which $M$ and $M'$ are block-diagonal, with decompositions $M=\oplus_\alpha\left(\Ll\left(\mathcal{H}_{A_\alpha}\right)\otimes I_{\ol{A}_\alpha}\right)$ and $M'=\oplus_\alpha \left(I_{A_\alpha}\otimes\Ll\left(\mathcal{H}_{\ol{A}_\alpha}\right)\right)$.
\end{thm}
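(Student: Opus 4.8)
The plan is to reduce the general case to the factor case established in theorem \eqref{factorthm}, using the center $Z_M$ to slice $\Hh$ into blocks on each of which $M$ restricts to a factor. First I would note that $Z_M$ is abelian: since $Z_M\subseteq M'$, every element of $Z_M$ commutes with all of $M$, and in particular the elements of $Z_M$ commute with each other. By proposition \eqref{projprop} the spectral projections of any hermitian element of $Z_M$ again lie in $Z_M$, so $Z_M$ is spanned by a complete family of mutually orthogonal projections $p_\alpha$ satisfying $p_\alpha p_\beta=0$ for $\alpha\neq\beta$ and $\sum_\alpha p_\alpha=I$ (the latter because $I\in Z_M$). This is exactly the simultaneous diagonalization of the center described just above the theorem.

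Next I would apply the proposition stated immediately before the theorem: because the $p_\alpha$ span $Z_M$, each compression $p_\alpha M p_\alpha$ is a factor on the subspace $p_\alpha\Hh$, and $p_\alpha M p_\beta=0$ whenever $\alpha\neq\beta$. The vanishing of the off-diagonal pieces shows that every $x\in M$ is block-diagonal with respect to $\Hh=\oplus_\alpha p_\alpha\Hh$, since $x=\left(\sum_\alpha p_\alpha\right)x\left(\sum_\beta p_\beta\right)=\sum_\alpha p_\alpha x p_\alpha$, while the factor statement identifies the $\alpha$-th diagonal block as a factor acting on $p_\alpha\Hh$.

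I would then invoke theorem \eqref{factorthm} block by block. For each $\alpha$ it produces a tensor factorization $p_\alpha\Hh=\mathcal{H}_{A_\alpha}\otimes\mathcal{H}_{\ol{A}_\alpha}$ with $p_\alpha M p_\alpha=\Ll\left(\mathcal{H}_{A_\alpha}\right)\otimes I_{\ol{A}_\alpha}$, and with the commutant of this factor inside $p_\alpha\Hh$ equal to $I_{A_\alpha}\otimes\Ll\left(\mathcal{H}_{\ol{A}_\alpha}\right)$. Reassembling the blocks immediately gives $\Hh=\oplus_\alpha\left(\mathcal{H}_{A_\alpha}\otimes\mathcal{H}_{\ol{A}_\alpha}\right)$ and $M=\oplus_\alpha\left(\Ll\left(\mathcal{H}_{A_\alpha}\right)\otimes I_{\ol{A}_\alpha}\right)$, which is the first claimed decomposition.

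The step needing the most care — and where I expect the only real subtlety — is pinning down $M'$, for which I would argue that the block structure is forced on the commutant as well. Since each $p_\alpha$ lies in $Z_M\subseteq M$, any $y\in M'$ must commute with every $p_\alpha$ and is therefore block-diagonal; its $\alpha$-th block then commutes with the factor $p_\alpha M p_\alpha$ and so lies in $I_{A_\alpha}\otimes\Ll\left(\mathcal{H}_{\ol{A}_\alpha}\right)$ by the commutant statement of theorem \eqref{factorthm}, while conversely any such block-diagonal operator clearly commutes with all of $M$. This yields $M'=\oplus_\alpha\left(I_{A_\alpha}\otimes\Ll\left(\mathcal{H}_{\ol{A}_\alpha}\right)\right)$ and finishes the argument. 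The genuinely hard analytic work was already spent in establishing theorem \eqref{factorthm}; once the center has been diagonalized the classification is essentially bookkeeping.
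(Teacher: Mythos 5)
Your proposal is correct and follows essentially the same route as the paper: diagonalize the center into a complete family of orthogonal projections $p_\alpha$, use the preceding proposition to see that each $p_\alpha M p_\alpha$ is a factor with vanishing off-diagonal blocks, and then apply theorem \eqref{factorthm} block by block. Your explicit verification that $M'$ is also block-diagonal (because the $p_\alpha$ lie in $M$) is a detail the paper leaves implicit, but it is exactly the intended argument.
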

In stating this theorem I have introduced a convenient abuse of notation, whereby if we have a block diagonal operator with diagonal blocks $x_{\alpha}$, then we can write $x=\oplus_\alpha x_\alpha$.  In infinite dimensions this theorem has a partial analogue: any von Neumann algebra is a ``direct integral'' of factor algebras.  The classification of factors however is much more complicated, with type III factors being the most difficult.  In fact the type III case is what one expects for the algebra of operators in a finite region in a continuum quantum field theory \cite{araki1964type,driessler1977type,haag2012local}. This problem can be avoided by working in a cutoff theory: this includes many additional states whose continuum limits would have had infinite energy, including those necessary to return the algebra to type I.    

\subsection{Entropy}\label{entapp}
I'll now discuss the notion of the entropy of a state on a von Neumann algebra \cite{ohya2004quantum,Casini:2013rba}.  Several new ideas are needed, so we'll proceed in stages.
\subsubsection{States}
So far we have discussed von Neumann algebras as subsets of the linear operators $\Ll(\Hh)$ on a (finite-dimensional) Hilbert space $\Hh$.  In quantum mechanics hermitian elements of $\LH$ correspond to observables, but to do physics we also need to introduce the notion of states:
\begin{mydef}
A linear operator $\rho\in \LH$ is called a \textbf{state on $\LH$} if it is hermitian, non-negative, and has $\Tr\rho=1$.  
\end{mydef}
Any state $\rho$ has a natural linear action $\mathbb{E}_\rho$ on  $\LH$.\footnote{In fact in the mathematical literature states are usually defined as linear, non-negative maps on $\LH$, obeying $\mathbb{E}_\rho(I)=1$.  From the point of view of this article, this is needlessly abstract.}
 For any $x\in \LH$, we define
\be
\mathbb{E}_\rho (x)=\Tr(\rho x).
\ee
In quantum mechanics, if $x$ is hermitian then $\mathbb{E}_\rho(x)$ is called the \textit{expectation value of the operator $x$ in the state $\rho$}.
It is often the case that one is interested only in observables that are elements of some von Neumann algebra $M$.  A generic state $\rho$ will not necessarily be an element of $M$, and will typically contain more information than is needed to compute expectation values of elements of $M$.  The following theorem gives an elegant way to discard this extra information:
\begin{thm}\label{rhoM}
Say that $M$ is a von Neumann algebra on $\Hh$, and $\rho$ is a state on $\Hh$.  Then there exists a unique state $\rho_M\in M$ such that $\mathbb{E}_\rho(x)=\mathbb{E}_{\rho_M}(x)$ for all $x\in M$.
\end{thm}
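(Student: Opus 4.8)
The plan is to recognize $\rho_M$ as an orthogonal projection with respect to the Hilbert--Schmidt inner product, and then to construct it explicitly using the classification theorem \ref{classthm} so as to verify that it is genuinely a state. First I would equip $\LH$ with the inner product $\lan a,b\ran\equiv \Tr(a^\dagger b)$, under which $M$ is a linear subspace that is moreover closed under hermitian conjugation. Since $x\in M$ if and only if $x^\dagger\in M$, and since I will look for a hermitian $\rho_M$, the defining requirement $\mathbb{E}_\rho(x)=\mathbb{E}_{\rho_M}(x)$ for all $x\in M$ is equivalent to $\Tr\big((\rho-\rho_M)x\big)=0$ for all $x\in M$, i.e. to $\rho-\rho_M$ being Hilbert--Schmidt orthogonal to $M$. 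This characterizes $\rho_M$ as the orthogonal projection of $\rho$ onto the subspace $M$, which immediately yields uniqueness: if $\rho_M$ and $\rho_M'$ both lie in $M$ and satisfy the condition, their difference $d\equiv\rho_M-\rho_M'\in M$ obeys $\lan d,x\ran=0$ for all $x\in M$, and taking $x=d$ gives $\|d\|^2=0$, so $d=0$.

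For existence I would construct $\rho_M$ explicitly. By theorem \ref{classthm} we may write $\Hh=\oplus_\alpha\big(\Hh_{A_\alpha}\otimes \Hh_{\ol{A}_\alpha}\big)$ with $M=\oplus_\alpha\big(\Ll(\Hh_{A_\alpha})\otimes I_{\ol{A}_\alpha}\big)$. Letting $p_\alpha$ denote the projection onto the $\alpha$-th block, I would set
\be
\rho_M\equiv \oplus_\alpha\left(\Tr_{\ol{A}_\alpha}\!\big(p_\alpha\rho p_\alpha\big)\otimes \frac{I_{\ol{A}_\alpha}}{|\ol{A}_\alpha|}\right).
\ee
By construction this is block-diagonal and acts as an operator on each $\Hh_{A_\alpha}$ tensored with the identity on $\Hh_{\ol{A}_\alpha}$, so $\rho_M\in M$. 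To check the defining property, I would expand an arbitrary $x=\oplus_\alpha(x_{A_\alpha}\otimes I_{\ol{A}_\alpha})\in M$: block-diagonality of $x$ kills the off-diagonal blocks of $\rho$, leaving $\Tr(\rho x)=\sum_\alpha \Tr_{A_\alpha}\!\big(\Tr_{\ol{A}_\alpha}(p_\alpha\rho p_\alpha)\,x_{A_\alpha}\big)$, and the same computation on $\rho_M$ produces the identical sum once the factor $I_{\ol{A}_\alpha}/|\ol{A}_\alpha|$ is traced against $I_{\ol{A}_\alpha}$.

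The main thing to verify --- and the real content of the theorem --- is that this $\rho_M$ is an honest state, since the orthogonal projection of a positive operator onto an arbitrary subspace need not be positive; here it is the von Neumann algebra structure that saves us. Hermiticity and non-negativity follow because $p_\alpha\rho p_\alpha\geq 0$ and the partial trace preserves non-negativity, while normalization follows from $\Tr\rho_M=\sum_\alpha\Tr(p_\alpha\rho p_\alpha)=\sum_\alpha\Tr(p_\alpha\rho)=\Tr\rho=1$, the identity factor contributing $\Tr(I_{\ol{A}_\alpha})/|\ol{A}_\alpha|=1$. Thus $\rho_M$ is a state in $M$ with the required property, and uniqueness was already established, completing the proof.
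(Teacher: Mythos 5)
Your proof is correct, but the existence half takes a genuinely different route from the paper's. The paper proves existence intrinsically, by Haar-averaging over the unitary group of the commutant, $\rho_M\equiv\int_{u\in M'}du\,u\rho u^\dagger$: hermiticity, positivity and normalization are manifest, membership in $M$ follows from invariance under conjugation by unitaries of $M'$ together with the bicommutant theorem, and agreement of expectation values holds because each $u\rho u^\dagger$ already agrees with $\rho$ on $M$. You instead invoke the classification theorem \ref{classthm} and write down the block form $\oplus_\alpha\bigl(\Tr_{\ol{A}_\alpha}(p_\alpha\rho p_\alpha)\otimes I_{\ol{A}_\alpha}/|\ol{A}_\alpha|\bigr)$ directly --- which is exactly the representation the paper records in eq.~\eqref{rhoM3} as a consequence of the theorem rather than as its proof. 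Your route buys concreteness (it immediately produces the formula later used to define the algebraic entropy) at the price of leaning on the full structure theory of finite-dimensional von Neumann algebras; the paper's averaging argument needs only the bicommutant theorem and the fact that $M$ and $M'$ are spanned by their unitaries, and it is the construction (the conditional expectation onto $M$) that survives in more general settings where no such block decomposition is available. The uniqueness arguments are essentially identical: both take $x=\rho_M-\rho_M'$ in the defining condition and conclude $\Tr\bigl((\rho_M-\rho_M')^2\bigr)=0$. Your framing of $\rho_M$ as the Hilbert--Schmidt orthogonal projection of $\rho$ onto the $\dagger$-closed subspace $M$, and your remark that the real content is positivity (which orthogonal projection onto an arbitrary subspace would not preserve), are accurate observations not made explicitly in the paper.
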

\begin{proof}
The basic idea is to define 
\be\label{uint}
\rho_M\equiv \int_{u\in M'} du\, u \rho u^\dagger.
\ee
Here we are integrating over the set of unitary elements of $M'$, using the invariant Haar measure $du$ on this compact group.\footnote{To see that the unitary subgroup of $M'$ is compact, note that any Cauchy-convergent sequence of unitary elements $u_n \in M'$ will converge to some unitary $u$ by the compactness of the unitary group, and by continuity of the commutator the limit will also be in $M'$.  This argument is straightforward in finite dimensions, it would be more complicated otherwise.}  $\rho_M$ is clearly hermitian, non-negative, and has trace one.  To show that it is an element of $M$, we will argue that it commutes with any unitary $v$ in $M'$, and thus is in $M''=M$ by proposition \eqref{uprop} and the bicommutant theorem.  Indeed say that $v\in M'$ is unitary.  Then we have
\be
v\rho_M=\int_{u\in M'} du \,vu\rho u^\dagger=\int_{u'\in M'} du' \,u'\rho u'^\dagger v=\rho_M v,
\ee 
where in the middle we have changed variables $u'=vu$ and used the invariance of the measure.  Finally to see that $\rho_M$ is unique, say that there existed $\rho_M'\neq \rho_M$ also obeying the results of the theorem.  Then we must have $\Tr\left((\rho_M-\rho_M')x\right)=0$ for all $x\in M$.  But in particular we can take $x=\rho_M-\rho_M'$, which then tells us that $\Tr (\rho_M-\rho_M')^2=0$, and thus that $\rho_M=\rho_M'$.  
\end{proof}
This theorem says that for the purpose of computing expectation values of $M$, we can always replace any state by an element of $M$.  To develop some intuition, let's compute $\rho_M$ for the case where $M$ is a factor. By theorem \eqref{factorthm} we know that there is a factorization $\Hh=\HA\otimes \HAb$ such that $M=\Ll(\HA)\otimes I_{\Ab}$.  If we define the reduced state
\be
\rho_A\equiv \Tr_{\Ab}\rho,
\ee
then it is easy to see that the operator
\be\label{rhoM2}
\rho_M\equiv \rho_A\otimes \frac{I_{\Ab}}{|\Ab|}
\ee
obeys the results of theorem \eqref{rhoM}.  By the uniqueness result of that theorem, the $\rho_M$ defined by eq. \eqref{rhoM2} must be equivalent to the one defined by eq. \eqref{uint}.\footnote{This equivalence isn't hard to show explicitly, using standard unitary integration technology which, for example, is reviewed in appendix D of \cite{Harlow:2014yka}.}

For a general von Neumann algebra $M$ we can also write down an explicit representation for $\rho_M$ along similar lines.  From theorem \eqref{classthm}, we know that there is a decomposition 
\be\label{hilbertapp}
\Hh=\oplus_\alpha \left(\mathcal{H}_{A_\alpha}\otimes \mathcal{H}_{\ol{A}_\alpha}\right),
\ee
in terms of which we have
\be\label{opapp}
M=\oplus_\alpha\left(\Ll\left(\mathcal{H}_{A_\alpha}\right)\otimes I_{\ol{A}_\alpha}\right).
\ee
Any state $\rho$ can be written in block form with respect to the direct sum in eq. \eqref{hilbertapp}, and only blocks which are diagonal in $\alpha$ will contribute to expectation values of elements of $M$.  From each diagonal block, we can define
\be
p_\alpha\rho_{A_\alpha}\equiv \Tr_{\Ab_{\alpha}}\rho_{\alpha \alpha}.
\ee
Here $p_\alpha$ is a positive number chosen so that $\Tr_{A_\alpha} \rho_{A_\alpha}=1$.  The condition $\Tr \rho=1$ implies that $\sum_\alpha p_\alpha=1$.  Finally we can then define the block-diagonal state
\be\label{rhoM3}
\rho_M\equiv \oplus_\alpha \left(p_\alpha\rho_{A_\alpha}\otimes \frac{I_{\Ab_\alpha}}{|\Ab_\alpha|}\right),
\ee
which again is easily shown to obey the results of theorem \eqref{rhoM}: it is hermitian, non-negative, has trace one, is of the form \eqref{opapp} and is thus in $M$, and clearly gives the same expectation values as $\rho$ for elements of $M$.

\subsubsection{Modified trace and entropy}
We see from eq. \eqref{rhoM2} that, when $M$ is a factor, the state $\rho_M$ is closely related to the reduced state $\rho_A$.  The reduced state $\rho_A$ is what is usually used to define the von Neumann entropy $S(\rho_A)\equiv -\Tr_A \rho_A \log \rho_A$.  This suggests a natural generalization to the entropy of a state $\rho$ on an arbitrary von Neumann algebra $M$:
\be\label{entropy}
S(\rho,M)\equiv -\sum_\alpha \Tr_{A_\alpha} \left(p_\alpha \rho_{A_\alpha}\log (p_\alpha \rho_{A_\alpha})\right)=-\sum_\alpha p_\alpha \log p_\alpha +\sum_{\alpha}p_\alpha S(\rho_{A_\alpha}).
\ee
For practical purposes we could simply take \eqref{entropy} as the definition of the entropy, and check that it has the properties we expect an entropy to have.  It would be preferable however to arrive at this expression from a more abstract point of view, and in particular it would be nice to avoid making explicit use of the decomposition \eqref{hilbertapp}. Readers who are already satisfied with \eqref{entropy} may skip to the next subsection for a discussion of the properties of this definition.\footnote{Beni Yoshida has pointed out to me that the state $\rho_M$ and the entropy \eqref{entropy} arise rather naturally in attempts to defined coarse-grained entropy \cite{GellMann:2006uj}, and that it would be interesting to understand this better in the context of \cite{Kelly:2013aja}.}

What would be ideal is to extract this entropy from the state $\rho_M$, but things are not as simple as computing $-\Tr \rho_M \log \rho_M$: already when $M$ is a factor, from \eqref{rhoM2} this apparently disagrees with the standard entropy by $\log |\Ab|$.  If $M$ is not a factor, then from \eqref{rhoM3} the disagreement with \eqref{entropy} is apparently $\sum_\alpha p_\alpha \log |\Ab_\alpha|$.  The problem is that in $\rho_M$ we have not yet computed the partial trace, so the entropy of $\Ab_\alpha$ is also contributing.  There are various ways of dealing with this, I will adopt an approach from \cite{ohya2004quantum} based on introducing a modified version of the trace.  

For a generic von Neumann algebra $M$ on $\Hh$, the trace of a minimal projection is usually not one.  For example if $M$ is a factor, then any minimal projection is of the form $|v\ran\lan v|_A \otimes I_{\Ab}$, so its trace is $|\Ab|$.  We would like the entropy of this state on $\HA$ to be zero: ordinarily we would see this by computing the partial trace over $\Ab$ to obtain the pure state $|v\ran\lan v|$, but we would now like a way to see this that is intrinsic to $M$.  One natural way to do this is to define a new trace operation 
\be
\hat{\Tr}\equiv \frac{1}{|\Ab|}\Tr
\ee 
on $M$, which by construction has $\hat{\Tr} p=1$ for any minimal projection $p\in M$.  If we also define 
\be
\hat{\rho}_M\equiv |\Ab|\rho_M,
\ee
then for any $x$ in $M$ we have
\be
\mathbb{E}_\rho(x)=\Tr \rho x=\Tr\rho_M x=\hat{\Tr}\hat{\rho}_M x.
\ee
Moreover using eq. \eqref{rhoM2}, we see that 
\be
S(\rho,M)\equiv -\hat{\Tr}\hat{\rho}_M\log \hat{\rho}_M=-\Tr_A\rho_A\log \rho_A,
\ee
which thus gives an ``intrinsic'' definition of the entropy for the case of a factor: it is the expectation value in the state $\rho_M$ of the operator $-\log\hat{\rho}_M$, where recall that $\hat{\rho}_M$ is defined so that expectation values are computed using the modified trace $\hat{\Tr}$, which itself was defined to assign unit trace to minimal projections.    

We can extend these definitions to the case where $M$ is not a factor, by again introducing a modified trace $\hat{\Tr}$ which assigns unit trace to any minimal projection.  When $M$ is not a factor however, $\hat{\Tr}$ will not simply be proportional to $\Tr$.  This may be surprising, since often the trace is defined up to a constant factor as the unique linear operation on $\LH$ such that $\Tr xy=\Tr yx$.  We have more options here since we are only interested in defining $\hat{\Tr}$ on elements of $M$: there are no elements of $M$ that mix between different blocks in the decomposition \eqref{hilbertapp}, so we can normalize the trace independently in each block without disrupting the fact that $\hat{\Tr} xy=\hat{\Tr} yx$ for all $x,y\in M$.  This then enables us to define $\hat{\Tr}$ as the unique linear operation on $M$ which obeys  $\hat{\Tr} xy=\hat{\Tr} yx$ for all $x,y\in M$, and which gives $\hat{\Tr} p=1$ for any minimal projection $p\in M$.  In terms of the decompositions \eqref{hilbertapp}, \eqref{opapp}, if 
\be
x=\oplus_\alpha \left(x_\alpha\otimes I_{\Ab_\alpha}\right)
\ee
then we have
\be\label{hatt}
\hat{\Tr} x=\sum_\alpha \hat{\Tr}_\alpha \left(x_\alpha\otimes I_{\Ab_\alpha}\right)=\sum_\alpha \Tr_{A_\alpha} x_\alpha.
\ee
Similarly given any $\rho_M$ we can now also introduce a $\hat{\rho}_M$, which again is defined so that for any $x\in M$ we have
\be
\mathbb{E}_\rho(x)=\Tr \rho x=\Tr\rho_M x=\hat{\Tr}\hat{\rho}_M x.
\ee
Explicitly, given the expression \eqref{rhoM3} we then have
\be\label{rhoM4}
\hat{\rho}_M=\oplus_\alpha \left(p_\alpha \rho_{A_\alpha}\otimes I_{\Ab_\alpha}\right).
\ee
Finally we can define the entropy of the state $\rho$ on the algebra $M$ as
\be
S(\rho,M)\equiv -\hat{\Tr}\hat{\rho}_M\log \hat{\rho}_M,
\ee
which using \eqref{hatt} and \eqref{rhoM4} is easily shown to be equivalent to \eqref{entropy}.

\subsubsection{Properties of algebraic entropy}
We've now given a definition of the entropy of a state $\rho$ on an algebra:
\be\label{entropy2}
S(\rho,M)\equiv  -\hat{\Tr}\hat{\rho}_M\log \hat{\rho}_M = -\sum_\alpha p_\alpha \log p_\alpha +\sum_{\alpha}p_\alpha S(\rho_{A_\alpha}).
\ee
We see that the entropy has two parts: a ``classical'' piece given by the Shannon entropy of the probability distribution $p_\alpha$ for the center $Z_M$, and a ``quantum'' piece given by the average of the von Neumann entropy of each block over this distribution.  This entropy has several nice properties, which follow without too much difficulty from analogous properties of the ordinary von Neumann entropy:
\bi
\item $S(\rho,M)$ is invariant under $\rho\to u\rho u^\dagger$ for any unitary $u\in M$.  
\item $S(\rho,M)\geq 0$, with equality if and only if $\rho_M$ is a minimal projection.
\item $S(\rho,M)\leq \log \left(\hat{\Tr} I\right)= \log\left(\sum_\alpha|A_\alpha|\right)$, with equality if and only if $\rho_{A_\alpha}=\frac{I_{A_\alpha}}{|A_\alpha|}$ and $p_\alpha=\frac{|A_\alpha|}{\sum_\beta|A_\beta|}$.
\item $S\left(\sum_i \lambda_i\rho_i\right)\geq \sum_i \lambda_i S\left(\rho_i\right)$, where $\rho_i$ are any set of states and $\lambda_i\in[0,1]$ obey $\sum_i\lambda_i=1$.
\item If $\rho$ is pure, then $S(\rho,M)=S(\rho,M')$.  
\ei
We can also define the \textit{relative entropy} of two states $\rho$, $\sigma$ on $M$ as
\begin{align}\nonumber
S(\rho|\sigma,M)&\equiv \hat{\Tr}\left(\hat{\rho}_M\log \hat{\rho}_M-\hat{\rho}_M\log \hat{\sigma}_M\right)\\\nonumber
&=-S(\rho,M)+\mathbb{E}_\rho\left(-\log \hat{\sigma}_M\right)\\\label{rent}
&=\sum_\alpha p_\alpha^{\{\rho\}} \log \frac{p_\alpha^{\{\rho\}}}{p^{\{\sigma\}}_\alpha}+\sum_\alpha p_\alpha^{\{\rho\}}S\left(\rho_{A_\alpha}|\sigma_{A_\alpha}\right).
\end{align}

Again there is a ``classical'' contribution, measuring the distinguishability of the distributions $p_\alpha^{\{\rho\}}$ and $p^{\{\sigma\}}_\alpha$ on the center $Z_M$, and a ``quantum'' piece that averages the quantum relative entropy of each block over $p^{\{\rho\}}_\alpha$.  As with the usual relative entropy, we have $S(\rho|\sigma,M)\geq0$, with equality if and only if $\rho_M=\sigma_M$.

Finally if we define a modular Hamiltonian $\hat{K}^\rho_M \equiv -\log \hat{\rho}_M$, then the relative entropy is related to the ordinary entropy via 
\be\label{orents}
S(\rho|\sigma,M)=-S(\rho,M)+\hat{\Tr} \hat{\rho}_M \hat{K}^\sigma_M.
\ee

\bibliographystyle{jhep}
\bibliography{bibliography}

\providecommand{\href}[2]{#2}\begingroup\raggedright\begin{thebibliography}{10}

\bibitem{Almheiri:2014lwa}
A.~Almheiri, X.~Dong, and D.~Harlow, {\it {Bulk Locality and Quantum Error
  Correction in AdS/CFT}},  {\em JHEP} {\bf 04} (2015) 163,
  [\href{http://arxiv.org/abs/1411.7041}{{\tt arXiv:1411.7041}}].

\bibitem{Mintun:2015qda}
E.~Mintun, J.~Polchinski, and V.~Rosenhaus, {\it {Bulk-Boundary Duality, Gauge
  Invariance, and Quantum Error Corrections}},  {\em Phys. Rev. Lett.} {\bf
  115} (2015), no.~15 151601, [\href{http://arxiv.org/abs/1501.06577}{{\tt
  arXiv:1501.06577}}].

\bibitem{Pastawski:2015qua}
F.~Pastawski, B.~Yoshida, D.~Harlow, and J.~Preskill, {\it {Holographic quantum
  error-correcting codes: Toy models for the bulk/boundary correspondence}},
  {\em JHEP} {\bf 06} (2015) 149, [\href{http://arxiv.org/abs/1503.06237}{{\tt
  arXiv:1503.06237}}].

\bibitem{Hayden:2016cfa}
P.~Hayden, S.~Nezami, X.-L. Qi, N.~Thomas, M.~Walter, and Z.~Yang, {\it
  {Holographic duality from random tensor networks}},
  \href{http://arxiv.org/abs/1601.01694}{{\tt arXiv:1601.01694}}.

\bibitem{Freivogel:2016zsb}
B.~Freivogel, R.~A. Jefferson, and L.~Kabir, {\it {Precursors, Gauge
  Invariance, and Quantum Error Correction in AdS/CFT}},  {\em JHEP} {\bf 04}
  (2016) 119, [\href{http://arxiv.org/abs/1602.04811}{{\tt arXiv:1602.04811}}].

\bibitem{Polchinski:1999yd}
J.~Polchinski, L.~Susskind, and N.~Toumbas, {\it {Negative energy,
  superluminosity and holography}},  {\em Phys. Rev.} {\bf D60} (1999) 084006,
  [\href{http://arxiv.org/abs/hep-th/9903228}{{\tt hep-th/9903228}}].

\bibitem{Streater:1989vi}
R.~Streater and A.~Wightman, {\it {PCT, spin and statistics, and all that}}, .

\bibitem{Haag:1992hx}
R.~Haag, {\it {Local quantum physics: Fields, particles, algebras}}, .

\bibitem{Hamilton:2006az}
A.~Hamilton, D.~N. Kabat, G.~Lifschytz, and D.~A. Lowe, {\it {Holographic
  representation of local bulk operators}},  {\em Phys. Rev.} {\bf D74} (2006)
  066009, [\href{http://arxiv.org/abs/hep-th/0606141}{{\tt hep-th/0606141}}].

\bibitem{Morrison:2014jha}
I.~A. Morrison, {\it {Boundary-to-bulk maps for AdS causal wedges and the
  Reeh-Schlieder property in holography}},  {\em JHEP} {\bf 05} (2014) 053,
  [\href{http://arxiv.org/abs/1403.3426}{{\tt arXiv:1403.3426}}].

\bibitem{Bousso:2012sj}
R.~Bousso, S.~Leichenauer, and V.~Rosenhaus, {\it {Light-sheets and AdS/CFT}},
  {\em Phys. Rev.} {\bf D86} (2012) 046009,
  [\href{http://arxiv.org/abs/1203.6619}{{\tt arXiv:1203.6619}}].

\bibitem{Czech:2012bh}
B.~Czech, J.~L. Karczmarek, F.~Nogueira, and M.~Van~Raamsdonk, {\it {The
  Gravity Dual of a Density Matrix}},  {\em Class. Quant. Grav.} {\bf 29}
  (2012) 155009, [\href{http://arxiv.org/abs/1204.1330}{{\tt
  arXiv:1204.1330}}].

\bibitem{Bousso:2012mh}
R.~Bousso, B.~Freivogel, S.~Leichenauer, V.~Rosenhaus, and C.~Zukowski, {\it
  {Null Geodesics, Local CFT Operators and AdS/CFT for Subregions}},  {\em
  Phys. Rev.} {\bf D88} (2013) 064057,
  [\href{http://arxiv.org/abs/1209.4641}{{\tt arXiv:1209.4641}}].

\bibitem{Hubeny:2012wa}
V.~E. Hubeny and M.~Rangamani, {\it {Causal Holographic Information}},  {\em
  JHEP} {\bf 06} (2012) 114, [\href{http://arxiv.org/abs/1204.1698}{{\tt
  arXiv:1204.1698}}].

\bibitem{Wall:2012uf}
A.~C. Wall, {\it {Maximin Surfaces, and the Strong Subadditivity of the
  Covariant Holographic Entanglement Entropy}},  {\em Class. Quant. Grav.} {\bf
  31} (2014), no.~22 225007, [\href{http://arxiv.org/abs/1211.3494}{{\tt
  arXiv:1211.3494}}].

\bibitem{Headrick:2014cta}
M.~Headrick, V.~E. Hubeny, A.~Lawrence, and M.~Rangamani, {\it {Causality \&
  holographic entanglement entropy}},  {\em JHEP} {\bf 12} (2014) 162,
  [\href{http://arxiv.org/abs/1408.6300}{{\tt arXiv:1408.6300}}].

\bibitem{Jafferis:2015del}
D.~L. Jafferis, A.~Lewkowycz, J.~Maldacena, and S.~J. Suh, {\it {Relative
  entropy equals bulk relative entropy}},  {\em JHEP} {\bf 06} (2016) 004,
  [\href{http://arxiv.org/abs/1512.06431}{{\tt arXiv:1512.06431}}].

\bibitem{Dong:2016eik}
X.~Dong, D.~Harlow, and A.~C. Wall, {\it {Bulk Reconstruction in the
  Entanglement Wedge in AdS/CFT}},  \href{http://arxiv.org/abs/1601.05416}{{\tt
  arXiv:1601.05416}}.

\bibitem{Hubeny:2007xt}
V.~E. Hubeny, M.~Rangamani, and T.~Takayanagi, {\it {A Covariant holographic
  entanglement entropy proposal}},  {\em JHEP} {\bf 07} (2007) 062,
  [\href{http://arxiv.org/abs/0705.0016}{{\tt arXiv:0705.0016}}].

\bibitem{Ryu:2006bv}
S.~Ryu and T.~Takayanagi, {\it {Holographic derivation of entanglement entropy
  from AdS/CFT}},  {\em Phys. Rev. Lett.} {\bf 96} (2006) 181602,
  [\href{http://arxiv.org/abs/hep-th/0603001}{{\tt hep-th/0603001}}].

\bibitem{Lewkowycz:2013nqa}
A.~Lewkowycz and J.~Maldacena, {\it {Generalized gravitational entropy}},  {\em
  JHEP} {\bf 08} (2013) 090, [\href{http://arxiv.org/abs/1304.4926}{{\tt
  arXiv:1304.4926}}].

\bibitem{Barrella:2013wja}
T.~Barrella, X.~Dong, S.~A. Hartnoll, and V.~L. Martin, {\it {Holographic
  entanglement beyond classical gravity}},  {\em JHEP} {\bf 09} (2013) 109,
  [\href{http://arxiv.org/abs/1306.4682}{{\tt arXiv:1306.4682}}].

\bibitem{Faulkner:2013ana}
T.~Faulkner, A.~Lewkowycz, and J.~Maldacena, {\it {Quantum corrections to
  holographic entanglement entropy}},  {\em JHEP} {\bf 11} (2013) 074,
  [\href{http://arxiv.org/abs/1307.2892}{{\tt arXiv:1307.2892}}].

\bibitem{Wald:1993nt}
R.~M. Wald, {\it {Black hole entropy is the Noether charge}},  {\em Phys. Rev.}
  {\bf D48} (1993) 3427--3431, [\href{http://arxiv.org/abs/gr-qc/9307038}{{\tt
  gr-qc/9307038}}].

\bibitem{Iyer:1994ys}
V.~Iyer and R.~M. Wald, {\it {Some properties of Noether charge and a proposal
  for dynamical black hole entropy}},  {\em Phys. Rev.} {\bf D50} (1994)
  846--864, [\href{http://arxiv.org/abs/gr-qc/9403028}{{\tt gr-qc/9403028}}].

\bibitem{Jacobson:1993vj}
T.~Jacobson, G.~Kang, and R.~C. Myers, {\it {On black hole entropy}},  {\em
  Phys. Rev.} {\bf D49} (1994) 6587--6598,
  [\href{http://arxiv.org/abs/gr-qc/9312023}{{\tt gr-qc/9312023}}].

\bibitem{Solodukhin:2008dh}
S.~N. Solodukhin, {\it {Entanglement entropy, conformal invariance and
  extrinsic geometry}},  {\em Phys. Lett.} {\bf B665} (2008) 305--309,
  [\href{http://arxiv.org/abs/0802.3117}{{\tt arXiv:0802.3117}}].

\bibitem{Hung:2011xb}
L.-Y. Hung, R.~C. Myers, and M.~Smolkin, {\it {On Holographic Entanglement
  Entropy and Higher Curvature Gravity}},  {\em JHEP} {\bf 04} (2011) 025,
  [\href{http://arxiv.org/abs/1101.5813}{{\tt arXiv:1101.5813}}].

\bibitem{Bhattacharyya:2013jma}
A.~Bhattacharyya, A.~Kaviraj, and A.~Sinha, {\it {Entanglement entropy in
  higher derivative holography}},  {\em JHEP} {\bf 08} (2013) 012,
  [\href{http://arxiv.org/abs/1305.6694}{{\tt arXiv:1305.6694}}].

\bibitem{Fursaev:2013fta}
D.~V. Fursaev, A.~Patrushev, and S.~N. Solodukhin, {\it {Distributional
  Geometry of Squashed Cones}},  {\em Phys. Rev.} {\bf D88} (2013), no.~4
  044054, [\href{http://arxiv.org/abs/1306.4000}{{\tt arXiv:1306.4000}}].

\bibitem{Dong:2013qoa}
X.~Dong, {\it {Holographic Entanglement Entropy for General Higher Derivative
  Gravity}},  {\em JHEP} {\bf 01} (2014) 044,
  [\href{http://arxiv.org/abs/1310.5713}{{\tt arXiv:1310.5713}}].

\bibitem{Camps:2013zua}
J.~Camps, {\it {Generalized entropy and higher derivative Gravity}},  {\em
  JHEP} {\bf 03} (2014) 070, [\href{http://arxiv.org/abs/1310.6659}{{\tt
  arXiv:1310.6659}}].

\bibitem{Miao:2014nxa}
R.-X. Miao and W.-z. Guo, {\it {Holographic Entanglement Entropy for the Most
  General Higher Derivative Gravity}},  {\em JHEP} {\bf 08} (2015) 031,
  [\href{http://arxiv.org/abs/1411.5579}{{\tt arXiv:1411.5579}}].

\bibitem{Engelhardt:2014gca}
N.~Engelhardt and A.~C. Wall, {\it {Quantum Extremal Surfaces: Holographic
  Entanglement Entropy beyond the Classical Regime}},  {\em JHEP} {\bf 01}
  (2015) 073, [\href{http://arxiv.org/abs/1408.3203}{{\tt arXiv:1408.3203}}].

\bibitem{donglew}
X.~Dong and A.~Lewkowycz, {\it To appear},  2016.

\bibitem{Papadodimas:2015jra}
K.~Papadodimas and S.~Raju, {\it {Remarks on the necessity and implications of
  state-dependence in the black hole interior}},  {\em Phys. Rev.} {\bf D93}
  (2016), no.~8 084049, [\href{http://arxiv.org/abs/1503.08825}{{\tt
  arXiv:1503.08825}}].

\bibitem{Almheiri:2016blp}
A.~Almheiri, X.~Dong, and B.~Swingle, {\it {Linearity of Holographic
  Entanglement Entropy}},  \href{http://arxiv.org/abs/1606.04537}{{\tt
  arXiv:1606.04537}}.

\bibitem{Jafferis:2014lza}
D.~L. Jafferis and S.~J. Suh, {\it {The Gravity Duals of Modular
  Hamiltonians}},  \href{http://arxiv.org/abs/1412.8465}{{\tt
  arXiv:1412.8465}}.

\bibitem{beny2007generalization}
C.~B{\'e}ny, A.~Kempf, and D.~W. Kribs, {\it Generalization of quantum error
  correction via the heisenberg picture},  {\em Physical review letters} {\bf
  98} (2007), no.~10 100502.

\bibitem{beny2007quantum}
C.~B{\'e}ny, A.~Kempf, and D.~W. Kribs, {\it Quantum error correction of
  observables},  {\em Physical Review A} {\bf 76} (2007), no.~4 042303.

\bibitem{Lashkari:2016idm}
N.~Lashkari, J.~Lin, H.~Ooguri, B.~Stoica, and M.~Van~Raamsdonk, {\it
  {Gravitational Positive Energy Theorems from Information Inequalities}},
  \href{http://arxiv.org/abs/1605.01075}{{\tt arXiv:1605.01075}}.

\bibitem{Donnelly:2015hxa}
W.~Donnelly and A.~C. Wall, {\it {Geometric entropy and edge modes of the
  electromagnetic field}},  \href{http://arxiv.org/abs/1506.05792}{{\tt
  arXiv:1506.05792}}.

\bibitem{Donnelly:2014fua}
W.~Donnelly and A.~C. Wall, {\it {Entanglement entropy of electromagnetic edge
  modes}},  {\em Phys. Rev. Lett.} {\bf 114} (2015), no.~11 111603,
  [\href{http://arxiv.org/abs/1412.1895}{{\tt arXiv:1412.1895}}].

\bibitem{Hawking:2016msc}
S.~W. Hawking, M.~J. Perry, and A.~Strominger, {\it {Soft Hair on Black
  Holes}},  {\em Phys. Rev. Lett.} {\bf 116} (2016), no.~23 231301,
  [\href{http://arxiv.org/abs/1601.00921}{{\tt arXiv:1601.00921}}].

\bibitem{Harlow:2015lma}
D.~Harlow, {\it {Wormholes, Emergent Gauge Fields, and the Weak Gravity
  Conjecture}},  {\em JHEP} {\bf 01} (2016) 122,
  [\href{http://arxiv.org/abs/1510.07911}{{\tt arXiv:1510.07911}}].

\bibitem{Freedman:2016zud}
M.~Freedman and M.~Headrick, {\it {Bit threads and holographic entanglement}},
  \href{http://arxiv.org/abs/1604.00354}{{\tt arXiv:1604.00354}}.

\bibitem{Headrick:2007km}
M.~Headrick and T.~Takayanagi, {\it {A Holographic proof of the strong
  subadditivity of entanglement entropy}},  {\em Phys. Rev.} {\bf D76} (2007)
  106013, [\href{http://arxiv.org/abs/0704.3719}{{\tt arXiv:0704.3719}}].

\bibitem{Haehl:2014zoa}
F.~M. Haehl, T.~Hartman, D.~Marolf, H.~Maxfield, and M.~Rangamani, {\it
  {Topological aspects of generalized gravitational entropy}},  {\em JHEP} {\bf
  05} (2015) 023, [\href{http://arxiv.org/abs/1412.7561}{{\tt
  arXiv:1412.7561}}].

\bibitem{Solodukhin:2011gn}
S.~N. Solodukhin, {\it {Entanglement entropy of black holes}},  {\em Living
  Rev. Rel.} {\bf 14} (2011) 8, [\href{http://arxiv.org/abs/1104.3712}{{\tt
  arXiv:1104.3712}}].

\bibitem{Cleve:1999qg}
R.~Cleve, D.~Gottesman, and H.-K. Lo, {\it {How to share a quantum secret}},
  {\em Phys.Rev.Lett.} {\bf 83} (1999) 648--651,
  [\href{http://arxiv.org/abs/quant-ph/9901025}{{\tt quant-ph/9901025}}].

\bibitem{Schumacher:1996dy}
B.~Schumacher and M.~A. Nielsen, {\it {Quantum data processing and error
  correction}},  {\em Phys. Rev.} {\bf A54} (1996) 2629,
  [\href{http://arxiv.org/abs/quant-ph/9604022}{{\tt quant-ph/9604022}}].

\bibitem{Grassl:1996eh}
M.~Grassl, T.~Beth, and T.~Pellizzari, {\it {Codes for the quantum erasure
  channel}},  {\em Phys. Rev.} {\bf A56} (1997) 33,
  [\href{http://arxiv.org/abs/quant-ph/9610042}{{\tt quant-ph/9610042}}].

\bibitem{kribs2005unified}
D.~Kribs, R.~Laflamme, and D.~Poulin, {\it Unified and generalized approach to
  quantum error correction},  {\em Physical review letters} {\bf 94} (2005),
  no.~18 180501.

\bibitem{kribs2005operator}
D.~W. Kribs, R.~Laflamme, D.~Poulin, and M.~Lesosky, {\it Operator quantum
  error correction},  {\em arXiv preprint quant-ph/0504189} (2005).

\bibitem{nielsen2007algebraic}
M.~A. Nielsen and D.~Poulin, {\it Algebraic and information-theoretic
  conditions for operator quantum error correction},  {\em Physical Review A}
  {\bf 75} (2007), no.~6 064304.

\bibitem{horodecki2009quantum}
M.~Horodecki, J.~Oppenheim, and A.~Winter, {\it Quantum mutual independence},
  {\em arXiv preprint arXiv:0902.0912} (2009).

\bibitem{takesaki2003theory}
M.~Takesaki, {\it Theory of operator algebras i, ii and iii. encyclopedia of
  mathematical sciences 124, 125 and 127},  2003.

\bibitem{hayden2004structure}
P.~Hayden, R.~Jozsa, D.~Petz, and A.~Winter, {\it Structure of states which
  satisfy strong subadditivity of quantum entropy with equality},  {\em
  Communications in mathematical physics} {\bf 246} (2004), no.~2 359--374.

\bibitem{Donnelly:2011hn}
W.~Donnelly, {\it {Decomposition of entanglement entropy in lattice gauge
  theory}},  {\em Phys. Rev.} {\bf D85} (2012) 085004,
  [\href{http://arxiv.org/abs/1109.0036}{{\tt arXiv:1109.0036}}].

\bibitem{Casini:2013rba}
H.~Casini, M.~Huerta, and J.~A. Rosabal, {\it {Remarks on entanglement entropy
  for gauge fields}},  {\em Phys. Rev.} {\bf D89} (2014), no.~8 085012,
  [\href{http://arxiv.org/abs/1312.1183}{{\tt arXiv:1312.1183}}].

\bibitem{Harlow:2014yoa}
D.~Harlow, {\it {Aspects of the Papadodimas-Raju Proposal for the Black Hole
  Interior}},  {\em JHEP} {\bf 11} (2014) 055,
  [\href{http://arxiv.org/abs/1405.1995}{{\tt arXiv:1405.1995}}].

\bibitem{Radicevic:2015sza}
D.~Radicevic, {\it {Entanglement in Weakly Coupled Lattice Gauge Theories}},
  \href{http://arxiv.org/abs/1509.08478}{{\tt arXiv:1509.08478}}.

\bibitem{Donnelly:2015hta}
W.~Donnelly and S.~B. Giddings, {\it {Diffeomorphism-invariant observables and
  their nonlocal algebra}},  {\em Phys. Rev.} {\bf D93} (2016), no.~2 024030,
  [\href{http://arxiv.org/abs/1507.07921}{{\tt arXiv:1507.07921}}].

\bibitem{Ma:2015xes}
C.-T. Ma, {\it {Entanglement with Centers}},  {\em JHEP} {\bf 01} (2016) 070,
  [\href{http://arxiv.org/abs/1511.02671}{{\tt arXiv:1511.02671}}].

\bibitem{Soni:2015yga}
R.~M. Soni and S.~P. Trivedi, {\it {Aspects of Entanglement Entropy for Gauge
  Theories}},  {\em JHEP} {\bf 01} (2016) 136,
  [\href{http://arxiv.org/abs/1510.07455}{{\tt arXiv:1510.07455}}].

\bibitem{Donnelly:2016auv}
W.~Donnelly and L.~Freidel, {\it {Local subsystems in gauge theory and
  gravity}},  \href{http://arxiv.org/abs/1601.04744}{{\tt arXiv:1601.04744}}.

\bibitem{Donnelly:2016rvo}
W.~Donnelly and S.~B. Giddings, {\it {Observables, gravitational dressing, and
  obstructions to locality and subsystems}},
  \href{http://arxiv.org/abs/1607.01025}{{\tt arXiv:1607.01025}}.

\bibitem{Kabat:1995eq}
D.~N. Kabat, {\it {Black hole entropy and entropy of entanglement}},  {\em
  Nucl. Phys.} {\bf B453} (1995) 281--299,
  [\href{http://arxiv.org/abs/hep-th/9503016}{{\tt hep-th/9503016}}].

\bibitem{Page:1979tc}
D.~N. Page, {\it {IS BLACK HOLE EVAPORATION PREDICTABLE?}},  {\em Phys. Rev.
  Lett.} {\bf 44} (1980) 301.

\bibitem{Heemskerk:2012np}
I.~Heemskerk, {\it {Construction of Bulk Fields with Gauge Redundancy}},  {\em
  JHEP} {\bf 09} (2012) 106, [\href{http://arxiv.org/abs/1201.3666}{{\tt
  arXiv:1201.3666}}].

\bibitem{Kabat:2013wga}
D.~Kabat and G.~Lifschytz, {\it {Decoding the hologram: Scalar fields
  interacting with gravity}},  {\em Phys. Rev.} {\bf D89} (2014), no.~6 066010,
  [\href{http://arxiv.org/abs/1311.3020}{{\tt arXiv:1311.3020}}].

\bibitem{Donnelly:2015taa}
W.~Donnelly, D.~Marolf, and E.~Mintun, {\it {Combing gravitational hair in 2 +
  1 dimensions}},  {\em Class. Quant. Grav.} {\bf 33} (2016), no.~2 025010,
  [\href{http://arxiv.org/abs/1510.00672}{{\tt arXiv:1510.00672}}].

\bibitem{Weinberg:1980kq}
S.~Weinberg and E.~Witten, {\it {Limits on Massless Particles}},  {\em Phys.
  Lett.} {\bf B96} (1980) 59--62.

\bibitem{Marolf:2014yga}
D.~Marolf, {\it {Emergent Gravity Requires Kinematic Nonlocality}},  {\em Phys.
  Rev. Lett.} {\bf 114} (2015), no.~3 031104,
  [\href{http://arxiv.org/abs/1409.2509}{{\tt arXiv:1409.2509}}].

\bibitem{Almheiri:2012rt}
A.~Almheiri, D.~Marolf, J.~Polchinski, and J.~Sully, {\it {Black Holes:
  Complementarity or Firewalls?}},  {\em JHEP} {\bf 02} (2013) 062,
  [\href{http://arxiv.org/abs/1207.3123}{{\tt arXiv:1207.3123}}].

\bibitem{Almheiri:2013hfa}
A.~Almheiri, D.~Marolf, J.~Polchinski, D.~Stanford, and J.~Sully, {\it {An
  Apologia for Firewalls}},  {\em JHEP} {\bf 09} (2013) 018,
  [\href{http://arxiv.org/abs/1304.6483}{{\tt arXiv:1304.6483}}].

\bibitem{Marolf:2013dba}
D.~Marolf and J.~Polchinski, {\it {Gauge/Gravity Duality and the Black Hole
  Interior}},  {\em Phys. Rev. Lett.} {\bf 111} (2013) 171301,
  [\href{http://arxiv.org/abs/1307.4706}{{\tt arXiv:1307.4706}}].

\bibitem{headrickhubeny}
M.~Headrick and V.~Hubeny, {\it To appear},  2016.

\bibitem{Hayden:2011ag}
P.~Hayden, M.~Headrick, and A.~Maloney, {\it {Holographic Mutual Information is
  Monogamous}},  {\em Phys. Rev.} {\bf D87} (2013), no.~4 046003,
  [\href{http://arxiv.org/abs/1107.2940}{{\tt arXiv:1107.2940}}].

\bibitem{Bao:2015bfa}
N.~Bao, S.~Nezami, H.~Ooguri, B.~Stoica, J.~Sully, and M.~Walter, {\it {The
  Holographic Entropy Cone}},  {\em JHEP} {\bf 09} (2015) 130,
  [\href{http://arxiv.org/abs/1505.07839}{{\tt arXiv:1505.07839}}].

\bibitem{Marolf:2015dia}
D.~Marolf and J.~Polchinski, {\it {Violations of the Born rule in cool
  state-dependent horizons}},  {\em JHEP} {\bf 01} (2016) 008,
  [\href{http://arxiv.org/abs/1506.01337}{{\tt arXiv:1506.01337}}].

\bibitem{Raju:2016vsu}
S.~Raju, {\it {Smooth Causal Patches for AdS Black Holes}},
  \href{http://arxiv.org/abs/1604.03095}{{\tt arXiv:1604.03095}}.

\bibitem{Susskind:2014yaa}
L.~Susskind, {\it {ER=EPR, GHZ, and the consistency of quantum measurements}},
  {\em Fortsch. Phys.} {\bf 64} (2016) 72--83,
  [\href{http://arxiv.org/abs/1412.8483}{{\tt arXiv:1412.8483}}].

\bibitem{Maldacena:2001kr}
J.~M. Maldacena, {\it {Eternal black holes in anti-de Sitter}},  {\em JHEP}
  {\bf 04} (2003) 021, [\href{http://arxiv.org/abs/hep-th/0106112}{{\tt
  hep-th/0106112}}].

\bibitem{Swingle:2009bg}
B.~Swingle, {\it {Entanglement Renormalization and Holography}},  {\em Phys.
  Rev.} {\bf D86} (2012) 065007, [\href{http://arxiv.org/abs/0905.1317}{{\tt
  arXiv:0905.1317}}].

\bibitem{VanRaamsdonk:2010pw}
M.~Van~Raamsdonk, {\it {Building up spacetime with quantum entanglement}},
  {\em Gen. Rel. Grav.} {\bf 42} (2010) 2323--2329,
  [\href{http://arxiv.org/abs/1005.3035}{{\tt arXiv:1005.3035}}]. [Int. J. Mod.
  Phys.D19,2429(2010)].

\bibitem{Hartman:2013qma}
T.~Hartman and J.~Maldacena, {\it {Time Evolution of Entanglement Entropy from
  Black Hole Interiors}},  {\em JHEP} {\bf 05} (2013) 014,
  [\href{http://arxiv.org/abs/1303.1080}{{\tt arXiv:1303.1080}}].

\bibitem{Maldacena:2013xja}
J.~Maldacena and L.~Susskind, {\it {Cool horizons for entangled black holes}},
  {\em Fortsch. Phys.} {\bf 61} (2013) 781--811,
  [\href{http://arxiv.org/abs/1306.0533}{{\tt arXiv:1306.0533}}].

\bibitem{Jones}
V.~F. Jones, {\em Von Neumann Algebras}.
\newblock https://math.berkeley.edu/~vfr/VonNeumann2009.pdf, 2009.

\bibitem{beny2015algebraic}
C.~B{\'e}ny and F.~Richter, {\it Algebraic approach to quantum theory: a
  finite-dimensional guide},  {\em arXiv preprint arXiv:1505.03106} (2015).

\bibitem{araki1964type}
H.~Araki, {\it Type of von neumann algebra associated with free field},  {\em
  Progress of Theoretical Physics} {\bf 32} (1964), no.~6 956--965.

\bibitem{driessler1977type}
W.~Driessler, {\it On the type of local algebras in quantum field theory},
  {\em Communications in Mathematical Physics} {\bf 53} (1977), no.~3 295--297.

\bibitem{haag2012local}
R.~Haag, {\em Local quantum physics: Fields, particles, algebras}.
\newblock Springer Science \& Business Media, 2012.

\bibitem{ohya2004quantum}
M.~Ohya and D.~Petz, {\em Quantum entropy and its use}.
\newblock Springer Science \& Business Media, 2004.

\bibitem{Harlow:2014yka}
D.~Harlow, {\it {Jerusalem Lectures on Black Holes and Quantum Information}},
  {\em Rev. Mod. Phys.} {\bf 88} (2016) 15002,
  [\href{http://arxiv.org/abs/1409.1231}{{\tt arXiv:1409.1231}}]. [Rev. Mod.
  Phys.88,15002(2016)].

\bibitem{GellMann:2006uj}
M.~Gell-Mann and J.~Hartle, {\it {Quasiclassical Coarse Graining and
  Thermodynamic Entropy}},  {\em Phys. Rev.} {\bf A76} (2007) 022104,
  [\href{http://arxiv.org/abs/quant-ph/0609190}{{\tt quant-ph/0609190}}].

\bibitem{Kelly:2013aja}
W.~R. Kelly and A.~C. Wall, {\it {Coarse-grained entropy and causal holographic
  information in AdS/CFT}},  {\em JHEP} {\bf 03} (2014) 118,
  [\href{http://arxiv.org/abs/1309.3610}{{\tt arXiv:1309.3610}}].

\end{thebibliography}\endgroup
\end{document}